    \definecolor{medium-blue}{rgb}{0,0,0.5}
    \newcommand%
    \newcommand%
\newcommand{\C}{\mathbb{C}}
\newcommand{\s}{\sqrt{2}}
\newcommand{\wlofg}{without loss of generality}
\DeclareMathOperator{\mysum}{sum}
\DeclareMathOperator*{\argmin}{arg\,min}
\newcommand{\entry}[7]{
\begin{tikzpicture}[scale=#1,baseline={([yshift={-\ht\strutbox}]current bounding box.north)}]\input{tikz/#2.tex}\end{tikzpicture}
& \textbf{#3} (#4) \vfill\vspace{10pt}
\ifthenelse{\isempty{#5}}{}{\textit{Condition:} #5 \vfill\vspace{5pt}}
\textit{Density:} #6 \vfill\vspace{5pt}
\textit{Approximation factor:} #7\\
}
    \def\fps@figure{H}
    \def\fps@table{H}
    \def\fps@algorithm{H}
\newcommand{\B}{1.36207}
\tikzset{tight/.style={inner sep=1pt}}
\tikzset{bracket/.style={inner sep=10pt,draw=gray,decorate,decoration={brace,amplitude=5pt}}}
\tikzset{helper/.style={dashed}}
\tikzset{filled/.style={fill=black!40!white}}
\tikzset{filled2/.style={fill=black!15!white}}
\tikzset{cover/.style={fill=black!30!white,fill opacity=0.5}}
\newcommand\hatshape[5][]{
    \pgfmathsetmacro{\r}{sqrt((#2)/pi)}
    \pgfmathsetmacro{\s}{sqrt((#3)/pi)}
    \pgfmathsetmacro{\a}{(#4)}
    \pgfmathsetmacro{\b}{(#5)}

    \coordinate (top) at ({cos((90+(\a-\b)/2))*(\r/cos((\a+\b)/2))},{sin((90+(\a-\b)/2))*(\r/cos((\a+\b)/2)});
    \coordinate (left) at ({-\r/tan(\a/2)},-\r);
    \coordinate (right) at ({\r/tan(\b/2)},-\r);

    \coordinate (leftcenter) at ($(left)+({\s/tan(\a/2)},\s)$);
    \coordinate (leftbottom) at ($(leftcenter)+(-90:\s)$);
    \coordinate (lefttop) at ($(leftcenter)+({-270+\a}:\s)$);

    \coordinate (rightcenter) at ($(right)+({-\s/tan(\b/2)},\s)$);
    \coordinate (rightbottom) at ($(rightcenter)+(-90:\s)$);
    \coordinate (righttop) at ($(rightcenter)+({90-\b}:\s)$);

    \coordinate (topcenter) at ($(top)-({cos((90+(\a-\b)/2))*(\s/cos((\a+\b)/2))},{sin((90+(\a-\b)/2))*(\s/cos((\a+\b)/2)})$);
    \coordinate (topleft) at ($(topcenter)+({-270+\a}:\s)$);
    \coordinate (topright) at ($(topcenter)+({90-\b}:\s)$);

    \coordinate (midcenter) at (0,0);

    \draw[#1] (rightbottom) arc (-90:{90-\b}:\s) -- (topright) arc ({90-\b}:{90+\a}:\s) -- (lefttop) arc ({-270+\a}:-90:\s) -- cycle;
}
\newcommand\gemshape[3][]{
    \pgfmathsetmacro{\r}{sqrt((#2)/pi)}
    \pgfmathsetmacro{\s}{sqrt((#3)/pi)}
    \pgfmathsetmacro{\l}{0.85955*sqrt(#3)}
    \pgfmathsetmacro{\ll}{0.7654*\l}
    \pgfmathsetmacro{\a}{45}
    \pgfmathsetmacro{\b}{45}

    \coordinate (top) at ({cos((90+(\a-\b)/2))*(\r/cos((\a+\b)/2))},{sin((90+(\a-\b)/2))*(\r/cos((\a+\b)/2)});
    \coordinate (left) at ({-\r/tan(\a/2)},-\r);
    \coordinate (right) at ({\r/tan(\b/2)},-\r);

    \coordinate (leftcenter) at ($(left)+({\s/tan(\a/2)},\s)$);
    \coordinate (lefttop) at ($(left)+(45:\l)$);
    \coordinate (leftmid) at ($(left)+(22.5:\ll)$);
    \coordinate (leftbottom) at ($(left)+(0:\l)$);

    \coordinate (rightcenter) at ($(right)+({-\s/tan(\b/2)},\s)$);
    \coordinate (righttop) at ($(right)+(135:\l)$);
    \coordinate (rightmid) at ($(right)+(157.5:\ll)$);
    \coordinate (rightbottom) at ($(right)+(180:\l)$);

    \coordinate (midcenter) at (0,0);

    \draw[#1] (top) -- (righttop) -- (rightmid) -- (rightbottom) -- (leftbottom) -- (leftmid) -- (lefttop) -- cycle;
}
\newcommand\sharpgemshape[3][]{
    \pgfmathsetmacro{\r}{sqrt((#2)/pi)}
    \pgfmathsetmacro{\s}{sqrt((#3)/pi)}
    \pgfmathsetmacro{\l}{0.85955*sqrt(#3)}
    \pgfmathsetmacro{\ll}{0.7654*\l}
    \pgfmathsetmacro{\a}{45}
    \pgfmathsetmacro{\b}{45}

    \coordinate (top) at ({cos((90+(\a-\b)/2))*(\r/cos((\a+\b)/2))},{sin((90+(\a-\b)/2))*(\r/cos((\a+\b)/2)});
    \coordinate (left) at ({-\r/tan(\a/2)},-\r);
    \coordinate (right) at ({\r/tan(\b/2)},-\r);

    \coordinate (rightcenter) at ($(right)+({-\s/tan(\b/2)},\s)$);
    \coordinate (righttop) at ($(right)+(135:\l)$);
    \coordinate (rightmid) at ($(right)+(157.5:\ll)$);
    \coordinate (rightbottom) at ($(right)+(180:\l)$);

    \coordinate (midcenter) at (0,0);

    \draw[#1] (top) -- (righttop) -- (rightmid) -- (rightbottom) -- (left) -- cycle;
}
\newcommand\hatsinsquare[1]{
    \draw (0,0) rectangle (\B,\B);

    \pgfmathparse{\B-sqrt((1-(#1))/pi)}
    \begin{scope}[shift={(\pgfmathresult,\pgfmathresult)}]
        \begin{scope}[rotate=-45]
            \pgfmathsetmacro{\hata}{1-(#1)}
            \pgfmathsetmacro{\hatb}{1-2*(#1)}
            \hatshape[filled2]{\hata}{\hatb}{45}{45}
            \node at (midcenter) {\hata};
        \end{scope}
    \end{scope}

    \def\comparg{#1}
    \if\comparg0\else
        \pgfmathparse{(sqrt((#1)/pi)}
        \begin{scope}[shift={(\pgfmathresult,\pgfmathresult)}]
            \begin{scope}[rotate=-225]
                \hatshape[filled2]{#1}{0}{45}{45}
                \node at (midcenter) {#1};
            \end{scope}
        \end{scope}
    \fi
}
\newcommand\hatsinrect[1]{
    \draw (0,0) rectangle (1.5607,1);

    \pgfmathsetmacro{\hatx}{1-(#1)}
    \pgfmathsetmacro{\hata}{(1-(#1))*0.7853}
    \pgfmathsetmacro{\hatb}{(1-2*(#1))*0.7853}

    \pgfmathparse{sqrt(\hata/pi)}
    \begin{scope}[shift={(1.5607-\pgfmathresult,1-\pgfmathresult)}]
        \begin{scope}[rotate=-32.65]
            \hatshape[filled2]{\hata}{\hatb}{32.65}{57.35}
            \node at (midcenter) {$\hatx a$};
        \end{scope}
    \end{scope}

    \def\comparg{#1}
    \if\comparg0\else
        \pgfmathparse{(sqrt((#1)*0.7853/pi)}
        \begin{scope}[shift={(\pgfmathresult,\pgfmathresult)}]
            \begin{scope}[rotate=-212.65]
                \hatshape[filled2]{#1*0.7853}{0}{32.65}{57.35}
                \node at (midcenter) {$#1 a$};
            \end{scope}
        \end{scope}
    \fi
}
\newcommand\gemsinsquare[1]{
    \draw (0,0) rectangle (\B,\B);

    \pgfmathparse{\B-sqrt((1-(#1))/pi)}
    \begin{scope}[shift={(\pgfmathresult,\pgfmathresult)}]
        \begin{scope}[rotate=-45]
            \pgfmathsetmacro{\hata}{1-(#1)}
            \pgfmathsetmacro{\hatb}{1-2*(#1)}
            \gemshape[filled2]{\hata}{\hatb}
            \node at (midcenter) {\hata};
        \end{scope}
    \end{scope}

    \def\comparg{#1}
    \if\comparg0\else
        \pgfmathparse{(sqrt((#1)/pi)}
        \begin{scope}[shift={(\pgfmathresult,\pgfmathresult)}]
            \begin{scope}[rotate=-225]
                \gemshape[filled2]{#1}{0}
                \node at (midcenter) {#1};
            \end{scope}
        \end{scope}
    \fi
}
\newcommand\hatsinhat[5][1]{
    \pgfmathsetmacro{\area}{(#1)}
    \pgfmathsetmacro{\a}{(#2)}
    \pgfmathsetmacro{\b}{(#3)}
    \pgfmathsetmacro{\x}{(#4)}
    \pgfmathsetmacro{\round}{(#5)}
    \pgfmathsetmacro{\f}{(cos(\b/2)^2*sec(\a/2+\b/2)^2*(1-sin(\a)))}
    \pgfmathsetmacro{\g}{(cos(\a/2)^2*sec(\a/2+\b/2)^2*(1-sin(\b)))}

    \hatshape{\area}{\round}{\a}{\b}

    \def\comparg{\x}
    \if\comparg0\else
        \pgfmathparse{sqrt(\area/pi)/tan(\b/2)-sqrt(((\x))/pi)/tan(\b/2)}
        \begin{scope}[shift={(\pgfmathresult,0)}]
            \pgfmathparse{sqrt(\area/pi)-sqrt(\x/pi)}
            \begin{scope}[shift={(0,-\pgfmathresult)}]
                \pgfmathsetmacro{\hata}{\x}
                \pgfmathsetmacro{\hatb}{max(\round,\x-\g*(1-\x)/(\f)))}
                \hatshape[filled2]{\hata}{\hatb}{90}{\b}
                \node at (midcenter) {$\hata a$};
            \end{scope}
        \end{scope}
    \fi

    \def\comparg{\x}
    \if\comparg1\else
        \pgfmathparse{sqrt(\area/pi)/tan(\a/2)-sqrt(((1-\x))/pi)/tan(\a/2)}
        \begin{scope}[shift={(-\pgfmathresult,0)}]
            \pgfmathparse{sqrt(\area/pi)-sqrt((1-\x)/pi)}
            \begin{scope}[shift={(0,-\pgfmathresult)}]
                \pgfmathsetmacro{\hata}{1-\x}
                \pgfmathsetmacro{\hatb}{\round}
                \pgfmathsetmacro{\hatb}{max(\round,(1-\x)-\f*(\x)/(\g)))}
                \hatshape[filled2]{\hata}{\hatb}{\a}{90}
                \node at (midcenter) {$\hata a$};
            \end{scope}
        \end{scope}
    \fi
}
\newcommand\hatsinhatnotext[5][1]{
    \pgfmathsetmacro{\area}{(#1)}
    \pgfmathsetmacro{\a}{(#2)}
    \pgfmathsetmacro{\b}{(#3)}
    \pgfmathsetmacro{\x}{(#4)}
    \pgfmathsetmacro{\round}{(#5)}
    \pgfmathsetmacro{\f}{(cos(\b/2)^2*sec(\a/2+\b/2)^2*(1-sin(\a)))}
    \pgfmathsetmacro{\g}{(cos(\a/2)^2*sec(\a/2+\b/2)^2*(1-sin(\b)))}

    \hatshape{\area}{\round}{\a}{\b}

    \def\comparg{\x}
    \if\comparg0\else
        \pgfmathparse{sqrt(\area/pi)/tan(\b/2)-sqrt(((\x))/pi)/tan(\b/2)}
        \begin{scope}[shift={(\pgfmathresult,0)}]
            \pgfmathparse{sqrt(\area/pi)-sqrt(\x/pi)}
            \begin{scope}[shift={(0,-\pgfmathresult)}]
                \pgfmathsetmacro{\hata}{\x}
                \pgfmathsetmacro{\hatb}{max(\round,\x-\g*(1-\x)/(\f)))}
                \hatshape[filled2]{\hata}{\hatb}{90}{\b}
            \end{scope}
        \end{scope}
    \fi

    \def\comparg{\x}
    \if\comparg1\else
        \pgfmathparse{sqrt(\area/pi)/tan(\a/2)-sqrt(((1-\x))/pi)/tan(\a/2)}
        \begin{scope}[shift={(-\pgfmathresult,0)}]
            \pgfmathparse{sqrt(\area/pi)-sqrt((1-\x)/pi)}
            \begin{scope}[shift={(0,-\pgfmathresult)}]
                \pgfmathsetmacro{\hata}{1-\x}
                \pgfmathsetmacro{\hatb}{\round}
                \pgfmathsetmacro{\hatb}{max(\round,(1-\x)-\f*(\x)/(\g)))}
                \hatshape[filled2]{\hata}{\hatb}{\a}{90}
            \end{scope}
        \end{scope}
    \fi
}
\newcommand\gemsingem[2]{
    \pgfmathsetmacro{\a}{45}
    \pgfmathsetmacro{\b}{45}
    \pgfmathsetmacro{\x}{(#1)}
    \pgfmathsetmacro{\round}{(#2)}
    \pgfmathsetmacro{\f}{(cos(\b/2)^2*sec(\a/2+\b/2)^2*(1-sin(\a)))}
    \pgfmathsetmacro{\g}{(cos(\a/2)^2*sec(\a/2+\b/2)^2*(1-sin(\b)))}

    \gemshape{1}{\round}

    \def\comparg{\x}
    \if\comparg0\else
        \pgfmathparse{sqrt(1/pi)/tan(\b/2)-sqrt(((\x))/pi)/tan(\b/2)}
        \begin{scope}[shift={(\pgfmathresult,0)}]
            \pgfmathparse{sqrt(1/pi)-sqrt(\x/pi)}
            \begin{scope}[shift={(0,-\pgfmathresult)},rotate=135,xscale=-1]
                \pgfmathsetmacro{\hata}{\x}
                \pgfmathsetmacro{\hatb}{max(\round,\x-\g*(1-\x)/(\f)))}
                \sharpgemshape[filled2]{\hata}{\hatb}
                \node at (midcenter) {$\hata$};
            \end{scope}
        \end{scope}
    \fi

    \def\comparg{\x}
    \if\comparg1\else
        \pgfmathparse{sqrt(1/pi)/tan(\a/2)-sqrt(((1-\x))/pi)/tan(\a/2)}
        \begin{scope}[shift={(-\pgfmathresult,0)}]
            \pgfmathparse{sqrt(1/pi)-sqrt((1-\x)/pi)}
            \begin{scope}[shift={(0,-\pgfmathresult)}]
                \pgfmathparse{\x < 0.1715 ? 0 : -135}
                \begin{scope}[rotate=\pgfmathresult]
                    \pgfmathparse{\x < 0.1715 ? 1 : -1}
                    \begin{scope}[xscale=\pgfmathresult]
                        \pgfmathsetmacro{\hata}{1-\x}
                        \pgfmathsetmacro{\hatb}{\round}
                        \pgfmathsetmacro{\hatb}{\x < 1/3 ? (1-\x) : max(\round,(1-\x)-\f*(\x)/(\g))}
                        \pgfmathsetmacro{\fillstyle}{\x < 1/3 ? "filled" : "filled2"}
                        \sharpgemshape[\fillstyle]{\hata}{\hatb}
                        \node at (midcenter) {$\hata$};
                    \end{scope}
                \end{scope}
            \end{scope}
        \end{scope}
    \fi
}
\title{Split Packing: Algorithms for Packing Circles with Optimal Worst-Case Density\thanks{Extended abstracts presenting parts of this paper appeared in the
27th ACM-SIAM Symposium on Discrete Algorithms (SODA 2017)~\cite{morr2017split} and the 15th Algorithms and Data Structures Symposium (WADS 2017)~\cite{wads2017}.}}
\author{Sándor P. Fekete \and Sebastian Morr \and Christian Scheffer}
\institute{Department of Computer Science, TU Braunschweig, Germany\\\email{s.fekete@tu-bs.de}, \email{sebastian@morr.cc}, \email{scheffer@ibr.cs.tu-bs.de}}
\begin{document}

\maketitle

\begin{abstract}
    In the classic \emph{circle packing problem}, one asks whether a given set of circles can be packed into a given container.
    Packing problems like this have been shown to be $\mathsf{NP}$-hard.
    In this paper, we present new sufficient conditions for packing circles into square and triangular containers, using only the sum of the circles' areas:
    For square containers, it is possible to pack any set of circles with a combined area of up to $\approx \! 53.90\%$ of the square's area.
    And when the container is a right or obtuse triangle, any set of circles whose combined area does not exceed the triangle's incircle can be packed.
    
    These area conditions are tight, in the sense that for any larger areas, there are sets of circles which cannot be packed.
    Similar results have long been known for squares, but to the best of our knowledge, we give the first results of this type for circular objects.

    Our proofs are constructive:
    We describe a versatile, divide-and-conquer-based algorithm for packing circles into various container shapes with optimal worst-case density.
    It employs an elegant subdivision scheme that recursively splits the circles into two groups and then packs these into subcontainers.
    We call this algorithm \emph{Split~Packing}.
    It can be used as a constant-factor approximation algorithm when looking for the smallest container in which a given set of circles can be packed, due to its polynomial runtime.


    A browser-based, interactive visualization of the Split Packing approach and other related material can be found at \url{https://morr.cc/split-packing/}.
\end{abstract}

\section{Introduction}

Given a set of circles, can you decide whether it is possible to pack these circles into a given container without overlapping one another or the container's boundary?

        \begin{figure}%
            \begin{tikzpicture}[scale=2.5]%
                \draw (0,0) rectangle (\B,\B);

\draw[->,thick] (-1,{\B/2}) -- node[above] {?} +(0.5,0);

\draw[filled] (-3,{\B/2+0.3}) circle(0.3257); 
\draw[filled] (-2.3,{\B/2+0.3}) circle(0.3257); 
\draw[filled] (-1.7,{\B/2+0.3}) circle(0.2303); 
\draw[filled] (-3,{\B/2-0.3}) circle(0.1629); 
\draw[filled] (-2.6,{\B/2-0.3}) circle(0.1152); 
\draw[filled] (-2.2,{\B/2-0.3}) circle(0.1152); 

            \end{tikzpicture}%
            \caption{Can these circles be packed into the square?}%
            \label{fig:big-question}%
        \end{figure}
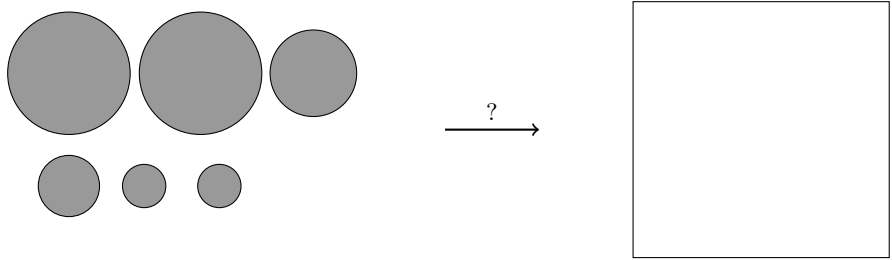%

This naturally occurring \emph{circle packing problem} has numerous applications in engineering, science, operational research and everyday life.
Examples include
packaging cylinders \cite{CKP2008solving,fraser1994integrated},
bundling tubes or cables \cite{WHZX2002improved,SSSKK2004disk},
the cutting industry \cite{SMCSCG2007new},
the layout of control panels \cite{CKP2008solving},
the design of digital modulation schemes \cite{PWMD1992packing},
or radio tower placement \cite{SMCSCG2007new}.
Further applications stem from chemistry \cite{WMP1994history},
foresting \cite{SMCSCG2007new},
and origami design \cite{lang1996computational}.

Despite their simple formulation, packing problems are quite difficult. In particular, deciding whether a given set of circles fits into a square container was shown to
be $\mathsf{NP}$-hard by Demaine, Fekete, and Lang in 2010~\cite{DFL2010circle}, using a reduction from \textsc{3-Partition}.
Their proof constructs a set of circles which first forces some symmetrical free “pockets” in the resulting circle packing.
The set's remaining circles can then be packed into these pockets if and only if the related \textsc{3-Partition} instance has a solution.
This means that there is (probably) no deterministic polynomial-time algorithm that can decide whether a given set of circles can be packed into a given container.
Additionally, due to the irrational coordinates which arise when packing circular objects, it is also surprisingly hard to solve circle packing problems in practice.
Even when the input consists of equally-sized circles, exact boundaries for the smallest square container are currently only known for up to 30 circles, and for 36 circles, see \cite{LR2002packing}.
For right isosceles triangular containers, optimal results have been published for up to 7 equal circles, see \cite{xu1996minimum}.

The related problem of packing square objects has also been studied for a long time.
The decision problem whether it is possible to pack a given set of squares into the unit square was shown to be strongly $\mathsf{NP}$-complete by Leung et al.~\cite{LTWYC1990packing}, also using a reduction from \textsc{3-Partition}.
Already in 1967, Moon and Moser~\cite{MM1967some} found a sufficient condition. They proved that it is possible to pack a set of squares into the unit square in a shelf-like manner if their combined area, the sum of all squares' areas, does not exceed $\nicefrac{1}{2}$, see \Cref{fig:shelf-packing}.

At the same time, $\nicefrac{1}{2}$ is the \emph{largest upper area bound} one can hope for, because two squares larger than the quarter-squares depicted in \Cref{fig:squares-worst} cannot be packed.
We call the ratio between the largest combined object area that can always be packed and the area of the container the problem's \emph{critical density}, or \emph{worst-case density}.

\begin{figure}
    \begin{minipage}{.45\textwidth}
        \centering
        \begin{tikzpicture}[scale=3.7]%
            \draw (0,0) rectangle (1,1);
\draw[filled] (0,0) rectangle (1/2,1/2);
\draw[filled] (1/2,1/2) rectangle (1,1);%
        \end{tikzpicture}%
    
        \caption{Worst-case instance for packing squares into a square.}
        \label{fig:squares-worst}
    \end{minipage}\hspace{10pt}%
    \begin{minipage}{.45\textwidth}
        \centering
        \begin{tikzpicture}[scale=3.7]%
            \input{tikz/shelf-packing.tex}%
        \end{tikzpicture}%
    
        \caption{Example packing produced by Moon and Moser's shelf-packing.}
        \label{fig:shelf-packing}
    \end{minipage}
\end{figure}

\begin{figure}
    \begin{minipage}{.45\textwidth}
        \centering
        \begin{tikzpicture}[scale=3.7]%
            \draw (0,0) rectangle (1,1);
\draw[filled] (0.2929,0.2929) circle (0.2929);
\draw[filled] (1-0.2929,1-0.2929) circle (0.2929);%
        \end{tikzpicture}%
    
        \caption{Worst case for packing circles into a square.}
        \label{fig:circles-worst}
    \end{minipage}\hspace{10pt}%
    \begin{minipage}{.45\textwidth}
        \centering
        \begin{tikzpicture}[scale=0.00925]%
            \draw (-200,-200) rectangle (200,200);
\draw[helper] (-157.12765267543034,-161.69665337894088) -- (161.69665337894088,157.12765267543034) arc (-45:90:25.11403961196304) (143.93834566633421,200) -- (-174.88596038803695,200) arc (90.00000000000001:180:25.11403961196304) -- (-200,174.88596038803695) -- (-200,-143.93834566633421) arc (-180:-45:25.11403961196304);
\draw[helper] (200,195.37820333164927) -- (-195.37820333164927,-200) arc (135:270:0) (-195.37820333164927,-200) -- (200,-200) arc (-90:0:0) -- (200,-200) -- (200,195.37820333164927) arc (0:135:0);
\draw[helper] (-200,91.47989903807937) -- (-200,-16.287562406340413) arc (-180:-45:77.98872528580094) (-66.86491820851447,-71.433918912025) -- (-12.981187486304627,-17.550188189815167) arc (-44.99999999999999:45:77.98872528580094) -- (-12.98118748630462,92.74252482155407) -- (-66.8649182085145,146.62625554376393) arc (45:180:77.98872528580094);
\draw[helper] (143.93834566633421,200) -- (-39.392327600528205,200) arc (90.00000000000001:225:25.11403961196304) (-57.15063531313485,157.12765267543034) -- (34.51470132029635,65.46231604199909) arc (-135.00000000000003:-45.000000000000014:25.11403961196304) -- (70.03131674550963,65.46231604199909) -- (161.69665337894085,157.1276526754303) arc (-45:90:25.11403961196304);
\draw[helper] (84.46660722207993,79.89760651856939) -- (161.69665337894088,157.12765267543034) arc (-45:90:25.11403961196304) (143.93834566633424,200) -- (66.7082995094733,200) arc (90.00000000000001:180:25.11403961196304) -- (41.594259897510256,174.88596038803695) -- (41.594259897510256,97.65591423117601) arc (-180:-45.00000000000001:25.11403961196304);
\draw[helper] (-57.15063531313484,157.1276526754303) -- (-2.085628531284158,102.06264589357963) arc (-135.00000000000003:-0:25.11403961196304) (40.78671879328553,119.82095360618624) -- (40.78671879328553,174.88596038803698) arc (0:89.99999999999999:25.11403961196304) -- (15.672679181322497,200) -- (-39.392327600528205,200) arc (90.00000000000001:225:25.11403961196304);
\draw[helper] (41.594259897510256,156.0931664175156) -- (41.594259897510256,113.74264197693027) arc (-180:-45:31.777380418458012) (95.84164149820454,91.272640794694) -- (117.01690371849719,112.44790301498665) arc (-45:45:31.777380418458012) -- (117.01690371849719,157.3879053794592) -- (95.84164149820455,178.56316759975186) arc (44.99999999999999:180:31.777380418458012);
\draw[helper] (143.93834566633424,200) -- (143.58694053928375,200) arc (90.00000000000001:225:25.11403961196304) (125.82863282667711,157.12765267543034) -- (126.00433539020236,156.9519501119051) arc (-135:-45:25.11403961196304) -- (161.52095081541563,156.9519501119051) -- (161.69665337894088,157.12765267543034) arc (-44.99999999999999:90:25.11403961196304);
\draw[helper] (200,-128.56011804735198) -- (200,102.8446106120408) arc (0:135:38.32866907957012) (134.56886910041072,129.94707243206) -- (18.86650477071432,14.244708102363596) arc (135:225:38.32866907957012) -- (18.86650477071432,-39.96021553767477) -- (134.56886910041067,-155.66257986737116) arc (-135.00000000000003:-0:38.32866907957012);
\draw[helper] (-195.37820333164927,-200) -- (177.71560807047825,-200) arc (-90:45:0) (177.71560807047825,-200) -- (-8.83129763058551,-13.453094298936236) arc (45:135:0) -- (-8.83129763058551,-13.453094298936236) -- (-195.37820333164927,-200) arc (135:270:0);
\draw[helper] (49.165988748867306,-70.25969951582776) -- (134.56886910041072,-155.6625798673712) arc (-135.00000000000003:-0:38.32866907957012) (200,-128.56011804735203) -- (200,-43.1572376958086) arc (0:90:38.32866907957012) -- (161.67133092042988,-4.828568616238471) -- (76.2684505688865,-4.828568616238471) arc (90.00000000000001:225:38.32866907957012);
\draw[helper] (134.5688691004107,129.94707243205997) -- (65.54662447898676,60.92482781063603) arc (135:270:38.32866907957012) (92.64908629900593,-4.506303088953274) -- (161.67133092042988,-4.506303088953274) arc (-90:0:38.32866907957012) -- (200,33.82236599061685) -- (200,102.8446106120408) arc (0:135:38.32866907957012);
\draw[helper] (134.3921094540351,-4.828568616238471) -- (76.26845056888648,-4.828568616238471) arc (90.00000000000001:225:38.32866907957012) (49.16598874886729,-70.25969951582775) -- (78.22781819144159,-99.32152895840206) arc (-135:-44.999999999999986:38.32866907957012) -- (132.43274183147997,-99.32152895840206) -- (161.49457127405427,-70.25969951582775) arc (-45:90:38.32866907957012);
\draw[helper] (200,-128.56011804735203) -- (200,-128.15645225720144) arc (0:135:38.32866907957012) (134.56886910041072,-101.05399043718225) -- (134.3670362053354,-101.25582333225753) arc (135:225:38.32866907957012) -- (134.3670362053354,-155.4607469722959) -- (134.56886910041072,-155.6625798673712) arc (-135:0:38.32866907957012);
\draw[helper] (-50.946262856713275,-55.568059525064) -- (-134.14135184680535,-138.76314851515608) arc (135:270:35.871717082794575) (-108.77621744475599,-200) -- (-25.581128454663933,-200) arc (-90:0:35.871717082794575) -- (10.29058862813065,-164.1282829172054) -- (10.29058862813065,-80.93319392711336) arc (0:135:35.871717082794575);
\draw[helper] (177.71560807047825,-200) -- (12.488963527027266,-34.77335545654901) arc (45:180:0) (12.488963527027266,-34.77335545654901) -- (12.488963527027266,-200) arc (-180:-90:0) -- (12.488963527027266,-200) -- (177.71560807047825,-200) arc (-90:44.99999999999999:0);
\draw[filled] (-0.4554878989983552,158.75779330771616) circle (41.242206692283894);
\draw[filled] (161.58772899795665,-128.3582851522767) circle (38.41227100204332);
\draw[filled] (143.762643102809,174.8131820032811) circle (25.186817996718908);
\draw[filled] (82.1427211144201,134.91790419722295) circle (40.54846121690985);
\draw[filled] (-99.69190265779991,37.596168315869456) circle (100.3080973422);
\draw[filled] (105.33028001146079,-55.19504159829659) circle (50.36647298205811);
\draw[filled] (-49.94840585844361,-139.76100551342574) circle (60.23899448657425);
\draw[filled] (141.4551835235315,54.03851338751521) circle (58.54481647646848);
\draw[filled] (60.88272728110479,-151.60623624592245) circle (48.39376375407751);%
        \end{tikzpicture}%
    
        \caption{Example packing produced by Split Packing.}
        \label{fig:split-packing}
    \end{minipage}
\end{figure}

The equivalent problem for packing circles has remained open.
When Demaine, Fekete, and Lang~\cite{DFL2010circle} posed the question in 2010, they suggested that the critical density may be determined by the two-circle instance shown in \Cref{fig:circles-worst}.
Again, it is easy to argue that if these two circles were only a little larger, we could no longer pack them into the unit square without overlap.
This means that their combined area constitutes an upper bound on the area that can always be packed.
In this paper, we show that indeed each set of circles of this total area can indeed be packed, but this requires a fundamentally different approach than Moon and Moser's orthogonal shelf-packing, see \Cref{fig:split-packing}.

We also study the problem of packing circles into non-acute triangular containers. It is obvious that circles larger than a triangle's incircle cannot be packed (compare \Cref{fig:triangle-worst}), but is it also possible to pack all sets of circles of up to that combined area?
We answer this question in the affirmative and introduce a weighted modification of the Split Packing algorithm, allowing us to pack circles into asymmetric non-acute triangles with critical density. See \Cref{fig:triangle-subcontainers} for an example packing.

\begin{figure}
    \begin{minipage}[t]{.45\textwidth}
        \centering
        \begin{tikzpicture}[scale=0.0065]%
            \draw (-400,-200) -- (400,-200) arc (-90:38.57867834990793:0) (400,-200) -- (90.5,188) arc (38.57867834990794:128.3450387527907:0) -- (90.5,188) -- (-400,-200) arc (128.34503875279077:270:0);
\draw[filled] (64.543277230317,-38.47870083291713) circle (161.52129916708287);%
        \end{tikzpicture}%
    
        \caption{Suspected worst-case instance for packing circles into a non-acute triangle.}
        \label{fig:triangle-worst}
    \end{minipage}\hspace{30pt}%
    \begin{minipage}[t]{.45\textwidth}
        \centering
        \begin{tikzpicture}[scale=0.0065]%
            \input{tikz/triangle-subcontainers.tex}%
        \end{tikzpicture}%
    
        \caption{Example packing produced by Split Packing.}
        \label{fig:triangle-subcontainers}
    \end{minipage}
\end{figure}

Many authors have considered heuristics for circle packing problems, see \cite{SMCSCG2007new,HM2009literature} for overviews of numerous heuristics and optimization methods.
The best known solutions for packing equal circles into squares, triangles and other shapes are continuously published on Specht's website \url{http://packomania.com} \cite{specht2015packomania}.

On the other hand, the literature on exact approximation algorithms which actually give performance guarantees is small.
Miyazawa et al.~\cite{MPSSW2014polynomial} devised asymptotic polynomial-time approximation schemes for packing circles into the smallest number of unit square bins.
More recently, Hokama, Miyazawa, and Schouery~\cite{HMS2016bounded} developed a bounded-space competitive algorithm for the online version of that problem.
As a byproduct of the tight worst-case bound, Split Packing yields an approximation algorithm for packing into single square and triangular containers. 

\subsection{Results}

We prove that the critical density for packing circles into a square is 

\begin{equation}
    \phi_s = \frac{\pi}{3+2\s} \approx 53.90\%\text{.}
\end{equation}

Any set of circles with a combined area of up to that percentage of the square's area can be packed, and for any higher percentage, there are sets that cannot be packed.

We also show that, for any right or obtuse triangle, any set of circles with a combined area of up to the triangle's incircle can be packed into that triangle.
At the same time, for any larger area, there are sets that cannot be packed, making the ratio between the incircle's and the triangle's area the packing problem's \emph{critical density}.
For a right isosceles triangle, this density is again approximately 53.90\%.
In the general case, the critical density for packing circles in a non-acute triangle with side lengths $x$, $y$, and $z$ is

\begin{equation}
    \phi_t = \pi\sqrt{\dfrac{(x+y-z)(z+x-y)(y+z-x)}{(x+y+z)^3}}.
\end{equation}

Our proofs are constructive: We describe a divide-and-conquer approach which repeatedly splits the set of circles in halves, and then packs these recursively, which is why we call this algorithm \emph{Split Packing}. In~\Cref{fig:example}, we demonstrate how the subdivision process looks like for an example set.

        \begin{figure}%
            \begin{tikzpicture}[scale=0.009]%
                \input{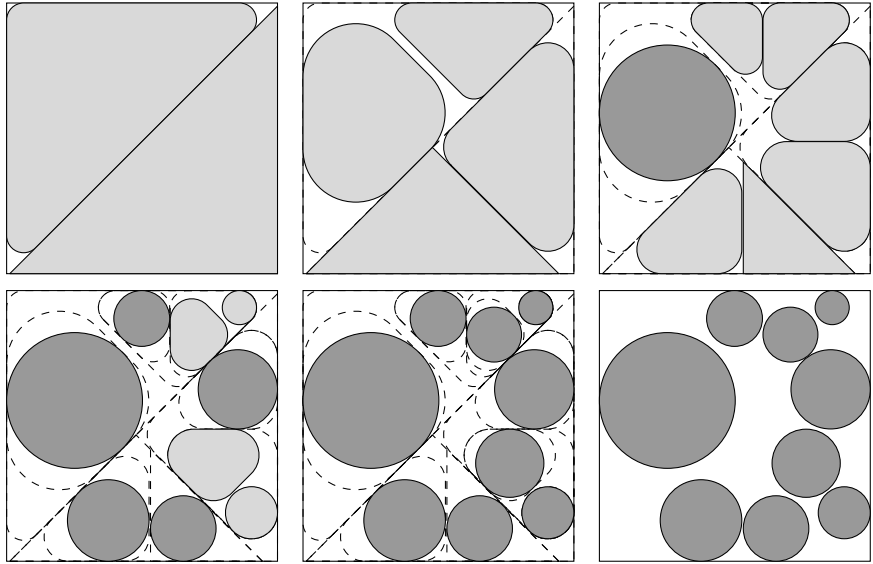}%
            \end{tikzpicture}%
            \caption{Split Packing recursively subdivides the container into subcontainers (light gray), before packing the circles into them (dark gray).}%
            \label{fig:example}%
        \end{figure}%

Split Packing can also be used as a constant-factor approximation algorithm for the smallest-area container of a given shape in which a given set of circles can be packed. For example, the ratio between the areas of the approximated and the optimal square is at most the reciprocal of the critical density, $\frac{3+2\s}{\pi} \approx 1.8552$.

While we focus on the problem of packing circles into square and triangular containers in this paper, we see more opportunities to generalize this approach in several directions, to allow other object and container shapes. We discuss some of these extensions in the conclusion.

\subsection{Key Ideas}

When we tried to prove that indeed all sets of circles with a combined area of up to the set shown in \Cref{fig:circles-worst} could be packed, many strategies proved unsuccessful.
But when we restricted the input set to only allow circles with areas equal to a negative power of two of the maximally packable area, we were surprised to find that these sets were easy to pack!

        \begin{figure}%
            \begin{tikzpicture}[scale=2.85]%
                    \draw (0,0) rectangle (\B,\B);
    \draw[filled] ({\B/2},{\B/2}) circle(0.5642) node {$a$};
\end{tikzpicture}
\begin{tikzpicture}[scale=2.85]
    \draw (0,0) rectangle (\B,\B);
    \draw (\B,0) -- (0,\B);
    \draw[filled] ({\B-0.3989},{\B-0.3989}) circle(0.3989) node {$\frac a 2$};
    \draw[filled] (0.3989,0.3989) circle(0.3989) node {$\frac a 2$};
\end{tikzpicture}
\begin{tikzpicture}[scale=2.85]
    \draw (0,0) rectangle (\B,\B);
    \draw (\B,0) -- (0,\B);
    \draw[filled] ({\B-0.3989},{\B-0.3989}) circle(0.3989) node {$\frac a 2$};
    \draw ({\B/2}, {\B/2}) -- (0,0);
    \draw[filled] (0.2820,{\B/2}) circle(0.2820) node {$\frac a 4$};
    \draw[filled] ({\B/2},0.2820) circle(0.2820) node {$\frac a 4$};
\end{tikzpicture}

\vspace{10pt}

\begin{tikzpicture}[scale=2.85]
    \draw (0,0) rectangle (\B,\B);
    \draw (\B,0) -- (0,\B);
    \draw[filled] ({\B-0.3989},{\B-0.3989}) circle(0.3989) node {$\frac a 2$};
    \draw ({\B/2}, {\B/2}) -- (0,0);
    \draw[filled] (0.2820,{\B/2}) circle(0.2820) node {$\frac a 4$};
    \draw ({\B/2}, {\B/2}) -- ({\B/2}, 0);
    \draw[filled] ({\B/2+0.1995},0.1995) circle(0.1995) node {$\frac a 8$};
    \draw[filled] ({\B/2-0.1995},0.1995) circle(0.1995) node {$\frac a 8$};
\end{tikzpicture}
\begin{tikzpicture}[scale=2.85]
    \draw (0,0) rectangle (\B,\B);
    \draw (\B,0) -- (0,\B);
    \draw[filled] ({\B-0.3989},{\B-0.3989}) circle(0.3989) node {$\frac a 2$};
    \draw ({\B/2}, {\B/2}) -- (0,0);
    \draw[filled] (0.2820,{\B/2}) circle(0.2820) node {$\frac a 4$};
    \draw ({\B/2}, {\B/2}) -- ({\B/2}, 0);
    \draw[filled] ({\B/2+0.1995},0.1995) circle(0.1995) node {$\frac a 8$};
    \draw ({\B/4}, {\B/4}) -- ({\B/2}, 0);
    \draw[filled] ({\B/2-0.1410},{\B/4}) circle(0.1410) node {$\frac a {16}$};
    \draw[filled] ({\B/4},0.1410) circle(0.1410) node {$\frac a {16}$};
\end{tikzpicture}
\begin{tikzpicture}[scale=2.85]
    \draw (0,0) rectangle (\B,\B);
    \draw (\B,0) -- (0,\B);
    \draw[filled] ({\B-0.3989},{\B-0.3989}) circle(0.3989) node {$\frac a 2$};
    \draw ({\B/2}, {\B/2}) -- (0,0);
    \draw[filled] (0.2820,{\B/2}) circle(0.2820) node {$\frac a 4$};
    \draw ({\B/2}, {\B/2}) -- ({\B/2}, 0);
    \draw ({\B/4}, {\B/4}) -- ({\B/2}, 0);
    \draw[filled] ({\B/2-0.1410},{\B/4}) circle(0.1410) node {$\frac a {16}$};
    \draw[filled] ({\B/4},0.1410) circle(0.1410) node {$\frac a {16}$};
    \draw ({\B/2+0.2}, 0) -- ({\B/4*3+0.1}, {\B/4-0.1});
    \draw[filled] ({\B/2+0.1995},0.1995) node {?};%
            \end{tikzpicture}%
            \caption{Splitting circles in half is easy.}%
            \label{fig:easy}%
        \end{figure}
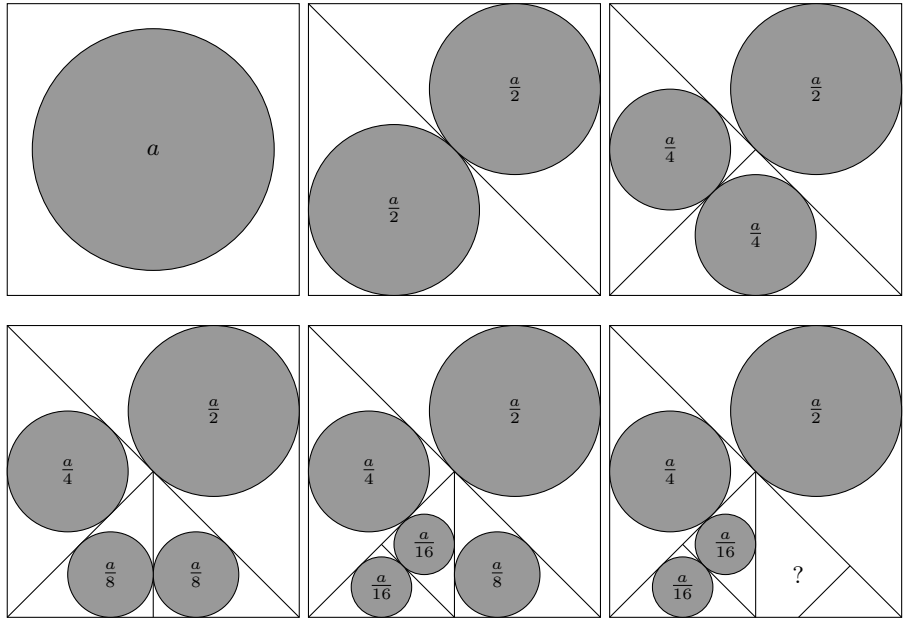%

Call the maximally packable area $a$.
A single circle of area $a$ can be packed, see the upper left of \Cref{fig:easy}. When splitting this circle into two equally-sized circles, and also cutting the square into halves along the circles' tangent, both circles now are incircles of isosceles right triangles. We can repeat this process of cutting one of the triangles into two smaller ones with half the area, and again the resulting circles are the triangles' incircles.
This allows us to recurse, and to repeat the splitting as often as necessary, until we arrive at the desired set of circles.
This divide-and-conquer approach to recursively split the set of input circles into subsets is the first key idea of the Split Packing algorithm.

For general sets of circles,
we could want to split a circle not exactly in half, but make one of the circles larger than the other one. In this case, we need to shift the cut to the side,
like in the lower right of \Cref{fig:easy}. For the small triangle, we can start another recursion, but it is unclear why we can continue with the recursion for the quadrilateral. Its shape resembles our triangles, but the lower corner is “cut off”.
We need an argument why these “degenerate triangles” do not break our packing strategy.

This is where the second key idea of the Split Packing algorithm comes into play.
When performing the top-level split, we could already decide which circles of the input set should go on which side of the cut-line.
With the power-of-two sets, it did not matter at all \emph{how} we split the circle set in half, as we could always split it into equally-sized halves.
But for general sets of circles, we need to proceed more carefully.
We perform the splitting of the set of circles into subgroups using an algorithm which resembles greedy scheduling.
This makes sure the resulting subgroups are \emph{close} to equal in terms of their combined area.
If the groups' areas deviate from the targeted 1:1 ratio, we can gain information about the minimum circle size in the larger group, allowing us to round off the subcontainer triangles.

Later in this paper, we also introduce a weighted generalization of the Split Packing approach:
When packing into asymmetric triangles, we do not want the resulting groups to have equal area, as it is not possible to cut the container into two subtriangles of equal size.
Instead, we target a different area ratio, defined by the incircles of the two triangles created by cutting the container orthogonally its the base through its tip, see \Cref{fig:hat-f}.
We call this desired area ratio the \emph{split key}.

The rest of the paper provides details for this process.

\section{Greedy Splitting}

The following definition makes it easier to refer to the properties of sets of circles.

\begin{definition}\label{def:circle-instance}
    A \emph{set of circles} is a multiset of positive real numbers, which define the circles' areas.
    For any set of circles~$C$, $\mysum(C)$ is the combined area of the set's circles and $\min(C)$ is the area of the smallest circle contained in the set.
\end{definition}

To differentiate between sets of circles and their elements, we will use upper case letters to denote sets of circles ($C$, $C_1$, $C_2$, \dots), while lower case letters will refer to their elements ($a$, $b$, $c$); when appropriate, we will also use these letters to refer to the respective areas.

The method which we use to split the sets of circles in half, \textsc{Split} (\Cref{alg:split}), resembles a greedy scheduling algorithm, which is why we call it \emph{greedy splitting}.
The circles are assumed to be sorted by size in descending order.
The algorithm first creates two empty “buckets”, and in each step adds the largest remaining circle of the input set to the more empty bucket.

If the resulting groups' areas deviate from the targeted 1:1 ratio, we gain additional information about the larger group: All its circles are at least as large as the group's area difference.

\begin{lemma}\label{th:min1}
    For any $C_1$ and $C_2$ produced by \textsc{Split}$(C)$,

    \begin{equation}
        \min(C_2) \ge \mysum(C_2) - \mysum(C_1)\text{.}
    \end{equation}
\end{lemma}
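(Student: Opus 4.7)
The plan is to track the signed gap between the two buckets throughout the execution of \textsc{Split} and show that, from the moment the eventual smallest circle of $C_2$ is placed, the gap never exceeds that circle's area. First, observe that the claim is vacuously true when $\mysum(C_2) \le \mysum(C_1)$, since $\min(C_2) > 0$. So assume $\mysum(C_2) \ge \mysum(C_1)$, set $c := \min(C_2)$, and let $t_0$ be the step at which $c$ is added to the second bucket. Two structural facts drive the argument: (i) because \textsc{Split} processes circles in descending order of area, every circle inserted at a later step has area at most $c$; (ii) because the greedy rule always chooses the emptier bucket, the state $C_1^{(t_0-1)}, C_2^{(t_0-1)}$ just before time $t_0$ satisfies $\mysum(C_2^{(t_0-1)}) \le \mysum(C_1^{(t_0-1)})$.

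I would then prove by induction on $t \ge t_0$ that the signed gap $\Delta_t := \mysum(C_2^{(t)}) - \mysum(C_1^{(t)})$ satisfies $\Delta_t \le c$. For the base case, adding $c$ to the no-larger bucket gives $\Delta_{t_0} = c - (\mysum(C_1^{(t_0-1)}) - \mysum(C_2^{(t_0-1)})) \le c$ by fact (ii). For the inductive step, at time $t+1$ some circle $c' \le c$ is placed: if it goes into $C_1$, the gap only decreases and remains $\le c$ by the hypothesis; if it goes into $C_2$, the greedy rule forces $\Delta_t \le 0$, and hence $\Delta_{t+1} = \Delta_t + c' \le c' \le c$. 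Applying the invariant at the last step of \textsc{Split} then yields $\mysum(C_2) - \mysum(C_1) \le c = \min(C_2)$, which is the desired inequality.

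There is no deep obstacle here; the only care required is in the case split of the inductive step, which hinges on reading off the sign of $\Delta_t$ from the greedy rule and using descending-order processing to bound $c' \le c$. Everything else is bookkeeping.
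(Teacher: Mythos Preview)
Your proof is correct. The paper's argument rests on the same two observations (descending order and the greedy rule), but packages them more economically: it notes that $\min(C_2)$ is necessarily the \emph{last} element inserted into $C_2$, since insertions occur in descending order. Hence after your step $t_0$ no circle ever enters $C_2$ again, and the second case of your inductive step is vacuous. This collapses the whole induction to a one-line contradiction: if the last element $c$ satisfied $c < \mysum(C_2) - \mysum(C_1)$, then just before its insertion we would have $\mysum(C_2)-c > \mysum(C_1) \ge \mysum(C_1^{(t_0-1)})$, contradicting the greedy choice. Your invariant-tracking approach has the minor virtue of making the dynamics explicit, and would generalize more readily if $t_0$ were not the final insertion into $C_2$; here, though, it does more work than the statement requires.
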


\begin{proof}
    Assume for contradiction the last element inserted into $C_2$ (let us call it~$c$) was smaller than $\mysum(C_2) - \mysum(C_1)$.
    This means that
    \begin{align*}
        \mysum(C_2) - c &> \mysum(C_2) - (\mysum(C_2) - \mysum(C_1))\\
        &= \mysum(C_1)\text{,}
    \end{align*}
    meaning that at the moment before $c$ was inserted, $\mysum(C_2)$ would already have been larger than~$\mysum(C_1)$.
    This is a contradiction, as the greedy algorithm would have put $c$ into the more empty group $C_1$ in this case.
    So $c$ must be at least $\mysum(C_2) - \mysum(C_1)$.
    Additionally, because the elements were inserted by descending size, all elements in $C_2$ must be at least as large as $c$.
\end{proof}

\begin{algorithm}
    \caption{\textsc{Split}$(C)$}
    \label{alg:split}
    \begin{algorithmic}
        \Require A set of circles $C$, sorted by size in descending order
        \Ensure Sets of circles $C_1$ and $C_2$
        \State $C_1 \gets \emptyset$
        \State $C_2 \gets \emptyset$
        \ForAll{$c \in C$}
            \If{$\mysum(C_1) \le \mysum(C_2)$}
                \State $C_1 \gets C_1 \cup \{c\}$
            \Else
                \State $C_2 \gets C_2 \cup \{c\}$
            \EndIf
        \EndFor
        \If{$\mysum(C_1) > \mysum(C_2)$}
            \State Swap $C_1$ and $C_2$
        \EndIf
    \end{algorithmic}
\end{algorithm}

\section{Split Packing}


In this section, we describe Split Packing as a rather general approach, before applying it to our concrete packing problem in the following sections.
This gives rise to the central Split Packing theorem.

To simplify talking about shapes which can pack certain classes of sets of circles, we introduce the following notions.

\begin{definition}
    $\C$ is the set of all sets of circles.
    $\C(a)$ is the set of exactly those sets of circles $C$ with $\mysum(C) \le a$.
    Finally, $\C(a,b)$ consists of exactly those sets of circles $C \in \C(a)$ with $\min(C) \ge b$.
\end{definition}

Let us give an example for the previous definition, as it is crucial for the rest of this paper. For any set of circles $C$ contained in $\C(1,\frac{1}{8})$, the combined area of $C$'s circles is at most~1, and at the same time, each of $C$'s circles has an area of at least $\frac{1}{8}$.

\begin{definition}
    For any $\mathcal{C} \subseteq \C$, a \emph{$\mathcal{C}$-shape} is a shape in which each $C \in \mathcal{C}$ can be packed.
\end{definition}

For example, if a shape is a $\C(a)$-shape, this means that it can pack all sets of circles with a combined area of $a$.
And a $\C(a,b)$-shape can pack all sets of circles with a combined area of $a$, whose circles each have an area of at least $b$.
With these preparations, we can now state our central theorem.

\begin{theorem}[Split Packing]\label{th:split-packing}
    A shape $s$ is a $\C(a,b)$-shape if for all $0 \le a_1 \le a_2$ with $a_1 + a_2 \le a$, one can find a $\C(a_1,b)$-shape and a $\C(a_2, \max\{a_2-a_1,b\})$-shape which can be packed into $s$.
\end{theorem}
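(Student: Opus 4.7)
The plan is to take an arbitrary $C \in \C(a,b)$ and exhibit a packing of it into $s$ by one application of \textsc{Split} followed by a single invocation of the hypothesis. Running \textsc{Split}$(C)$ produces a partition $C = C_1 \cup C_2$ with $\mysum(C_1) \le \mysum(C_2)$; set $a_1 := \mysum(C_1)$ and $a_2 := \mysum(C_2)$. Then $0 \le a_1 \le a_2$ and $a_1 + a_2 = \mysum(C) \le a$, so the theorem's hypothesis applies at this choice and yields a $\C(a_1,b)$-shape $s_1$ together with a $\C(a_2, \max\{a_2-a_1, b\})$-shape $s_2$ that are packed disjointly inside $s$.

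The substantive step is verifying that $C_1$ and $C_2$ really do belong to the classes those subshapes are certified to handle. Membership $C_1 \in \C(a_1, b)$ is immediate: $\mysum(C_1) = a_1$, and every circle in $C_1$ has area at least $b$ since it originated in $C$. For $C_2$ we need $\min(C_2) \ge \max\{a_2-a_1, b\}$; the bound $\min(C_2) \ge b$ is inherited from $C$, and the bound $\min(C_2) \ge a_2 - a_1$ is precisely the content of \Cref{th:min1}. Hence $C_2 \in \C(a_2, \max\{a_2-a_1, b\})$. By the definition of ``$\C(\cdot,\cdot)$-shape'', $C_1$ packs into $s_1$ and $C_2$ packs into $s_2$; composing these with the disjoint placement of $s_1, s_2$ inside $s$ produces a valid packing of $C$ into $s$. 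The degenerate cases $|C| \in \{0,1\}$ cause no trouble, since \textsc{Split} simply returns $C_1 = \emptyset$ (and possibly $C_2 = \{c\}$) and the same argument goes through with $a_1 = 0$.

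I do not expect a genuine obstacle here: the statement is really a clean composition principle that decouples the geometric problem (do the requested subshapes fit inside $s$?) from the combinatorial problem (does the greedy split produce two subsets compatible with those subshapes?). The only nontrivial ingredient is \Cref{th:min1}, whose lower bound on $\min(C_2)$ is tailored to match the second argument $\max\{a_2-a_1, b\}$ appearing in the hypothesis. This is what makes the somewhat awkward-looking class $\C(a_2, \max\{a_2-a_1, b\})$ exactly the right target: it is precisely the guarantee \textsc{Split} delivers on its heavier output group, no more and no less.
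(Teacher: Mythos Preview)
Your proof is correct and follows essentially the same approach as the paper: take an arbitrary $C \in \C(a,b)$, apply \textsc{Split}, set $a_i := \mysum(C_i)$, invoke the hypothesis at those values, and verify via \Cref{th:min1} and the inherited lower bound $b$ that each $C_i$ lands in the appropriate $\C(\cdot,\cdot)$ class. Your write-up is in fact slightly more explicit than the paper's in spelling out the membership checks and the degenerate cases, but there is no substantive difference.
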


\begin{proof}
    Consider an arbitrary $C \in \C(a,b)$.
    We use \textsc{Split}$(C)$ to produce two subsets $C_1$ and $C_2$.
    As $\min(C) \ge b$, all circles in the subsets will also have at least an area of $b$.
    Additionally, we know from \Cref{th:min1}, that $\min(C_2) \ge \mysum(C_2) - \mysum(C_1)$.
    So we can pack $C_1$ into the $\C(a_1,b)$-shape and $C_2$ into the $\C(a_2, \max\{a_2-a_1,b\})$-shape, and finally pack the two shapes into~$s$.
\end{proof}

Written as an algorithm, Split Packing looks as follows:

\begin{algorithm}
    \caption{\textsc{Splitpack}$(s,C)$}
    \begin{algorithmic}
        \Require A $\C(a,b)$-shape $s$ and a set of circles $C~\in~\C(a,b)$, sorted by size in descending order
        \Ensure A packing of $C$ into $s$
        \State $(C_1, C_2) \gets \textsc{Split}(C)$ \Comment{See \Cref{alg:split}}
        \State Determine a $\C(a_1,b)$-shape $s_1$
        \State \Call{Splitpack}{$s_1, C_1$}
        \State Determine a $\C(a_2,\max\{a_2-a_1,b\})$-shape $s_2$
        \State \Call{Splitpack}{$s_2, C_2$}
        \State Pack $s_1$, $s_2$, and their contents into $s$
    \end{algorithmic}
\end{algorithm}

This is a very general description of the Split Packing approach.
To apply it to concrete packing problems, one needs to show that all steps of the algorithm are always possible.

Note that the Split Packing algorithm can easily be extended to allow splitting into more than two subgroups.
For simplicity, we only describe the case of two subgroups here, as this suffices for the shapes we discuss in this paper.

\subsection{Analysis}\label{sec:analysis}

There are two perspectives on the implications of the Split Packing theorem.
Firstly, it gives a sufficient condition for the decision problem if a given set of circles can be packed into a given container: If the circles have a combined area of at most $a$, then the set can be packed.

Secondly, Split Packing can also be used as an approximation algorithm.
Suppose we are given a set of circles of combined area $a$ for which we want to find the smallest container of a certain shape (for example, triangular or square) in which the set can be packed.
We can then use Split Packing as an approximation algorithm, based on the critical density $d$ for the container.
We can then be sure that this container has at most $\frac{1}{d}$ times the area of the optimal container.

We first show that Split Packing has polynomial runtime, and then argue about the approximation factor.

\begin{lemma}
    Split Packing requires $\mathcal{O}(n)$ basic geometric constructions and $\mathcal{O}(n^2)$ numerical operations.
\end{lemma}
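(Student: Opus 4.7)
The plan is to bound the total work by analyzing the recursion tree of \textsc{Splitpack}. Since the recursion terminates when a singleton set is reached (at which point the corresponding circle can be placed inside its own subcontainer), the tree has exactly $n$ leaves and therefore at most $n-1$ internal nodes, giving $O(n)$ recursive invocations in total.

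For the bound on geometric constructions, I would argue that each invocation of \textsc{Splitpack} performs only a constant number of them: choosing the two subcontainer shapes $s_1$ and $s_2$ and placing them inside $s$. Summing over the $O(n)$ recursive calls yields $O(n)$ basic geometric constructions overall.

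For the numerical operations, the dominant cost at each invocation is the call to \textsc{Split}, which scans the input set $C$ and performs $O(|C|)$ additions and comparisons. Hence the total numerical cost is bounded by $\sum_v |C_v|$ ranging over all nodes $v$ of the recursion tree. At any fixed depth the combined size of all active subsets is at most $n$, while the depth of the recursion can reach $\Theta(n)$ in the worst case — for instance, when one circle is much larger than the rest and each split peels off only a handful of elements, producing a long chain. The sum therefore telescopes to $O(n^2)$. A single initial sort of $C$ in descending order contributes $O(n \log n)$ and is subsumed by this bound.

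The main subtlety I anticipate is the bookkeeping around the sorted-input precondition of \textsc{Split}: one should observe that the greedy procedure, when fed a list in descending order, produces two output lists that are themselves in descending order, so no re-sorting is needed inside the recursion and the per-call cost really is $O(|C|)$ rather than $O(|C| \log |C|)$. Once that invariant is in place, the two stated bounds follow immediately from the node-counting argument above.
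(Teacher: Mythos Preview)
Your proposal is correct and follows essentially the same approach as the paper: count the nodes of the recursion tree (a full binary tree with $n$ leaves, hence $n-1$ internal nodes) to get the $\mathcal{O}(n)$ bound on geometric constructions, then bound the total \textsc{Split} work by observing that each call is linear in its input size and that the worst case is a degenerate chain of depth $\Theta(n)$, giving $\mathcal{O}(n^2)$. Your explicit remark that \textsc{Split} preserves the descending order of its two output lists---so no re-sorting is needed inside the recursion---is a point the paper uses implicitly but does not spell out.
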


\begin{proof}
    Each subcontainer in the recursion tree either has two children (if more than one circle needs to be packed inside, in which case a \textsc{Split} is performed), or one (in this case, the child is a single circle and the recursion ends).
    Without the circles, the recursion tree is a full binary tree with $n$ leaf nodes, meaning that it has exactly $n-1$ interior nodes.
    The root node is the container of the packing problem, which does not need to be packed.
    In total, we need to pack $2n-2$ subcontainers, in addition to the $n$ circles of the input set, leading to $\mathcal{O}(n)$ geometric constructions.

    In addition, to build the recursion tree, we need at most a quadratic number of numeric operations:
    Before \textsc{Splitpack} is first invoked, the set of circles has to be sorted by size in descending order, this can be done in $\mathcal{O}(n \log n)$ time.
    Additionally, each run of the \textsc{Split} subroutine then takes linear time in the size of its input.
    If \textsc{Split} would partition its input into two subsets with a similar number of elements in each case, this would also lead to a runtime of $\mathcal{O}(n \log n)$.
    But in the worst case, each run only splits off one element, so that the total time needed for all \textsc{Split} operations is
    \begin{equation}
        t_{\text{\textsc{Split}}} = n + (n-1) + (n-2) + \dots + 1 \in \mathcal{O}(n^2).
    \end{equation}
\end{proof}

\begin{theorem}
    Split Packing, when used to pack circles into a $\C(a,b)$-shape of area $A$, is an approximation algorithm with an approximation factor of $\frac{A}{a}$, compared to the container of minimum area.
\end{theorem}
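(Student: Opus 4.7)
The plan is to compare the area of the container produced by Split Packing to the trivial lower bound $\mysum(C)$ on the area of the optimum container. First I would establish that the property of being a $\C(a,b)$-shape is compatible with homothetic scaling: if $s$ is a $\C(a,b)$-shape of area $A$, then for any $\lambda > 0$ the homothetic copy $s_\lambda$ of area $\lambda A$ is a $\C(\lambda a, \lambda b)$-shape. This follows because the recursive construction used in \Cref{th:split-packing} and in \textsc{Splitpack} involves only sums, comparisons and thresholds of areas, all of which rescale linearly when the outer container is scaled by a homothety; no step is sensitive to absolute lengths.

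Given this scale invariance, for an arbitrary input set $C$ I would set $\lambda := \mysum(C)/a$ and run Split Packing on $s_\lambda$. By the previous observation, $s_\lambda$ is a $\C(\mysum(C), \lambda b)$-shape, so Split Packing successfully packs $C$ inside it (noting that in the central applications of this paper one has $b = 0$, so the minimum-size side condition $\min(C) \ge \lambda b$ is vacuous; otherwise it is assumed to hold for the input). The area of the container produced is then $\lambda A = A \cdot \mysum(C)/a$.

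For the matching lower bound I would use the elementary fact that any container housing $C$ without overlap must have area at least $\mysum(C)$, so the optimum container area $A^*$ satisfies $A^* \ge \mysum(C)$. Combining,
\[
\frac{\text{area produced by Split Packing}}{A^*} \;\le\; \frac{\lambda A}{\mysum(C)} \;=\; \frac{A}{a},
\]
which is exactly the claimed approximation factor. The only point requiring real care is the first step: one must verify that the entire recursion, including each invocation of \Cref{th:split-packing} to obtain subcontainers, is truly scale-equivariant, so that shrinking the outer container by a fixed homothety shrinks every subcontainer by the same ratio and preserves all density conditions along the way. Once that is in hand, the rest of the argument is immediate from the \textsc{Splitpack} correctness already proved.
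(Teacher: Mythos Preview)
Your argument is correct and the core idea---lower-bound $\text{OPT}$ by the total circle area and compare it to the container area---is exactly the paper's. The paper's own proof is three lines: it simply takes $a=\mysum(C)$ from the outset (as announced in the surrounding text, ``given a set of circles of combined area $a$''), so that $\text{OPT}\ge a$ and $\text{ALG}=A$ immediately give the ratio $A/a$. Your homothetic-scaling step accomplishes the same thing in a slightly more explicit way, reducing a general input to the case $\mysum(C)=a$; the scale-equivariance you worry about is real but unproblematic, and the paper sidesteps it entirely by just choosing the container parameter to match the input.
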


\begin{proof}
    We know from the previous lemma that Split Packing has polynomial runtime.
    As for the approximation factor, we can be sure that the area of the optimal container $\text{OPT}$ needs to be as least $a$, as we need to be able to fit the circles inside without overlap.
    At the same time, the area of the approximated container $\text{ALG}$ is exactly $A$, which means that
    \begin{equation}
        \frac{\text{ALG}}{\text{OPT}} \le \frac{A}{a}.
    \end{equation}
\end{proof}

\section{Packing into Right Hats}

After this general description of Split Packing, we now apply it to concrete containers.
We start with an observation.

If all circles which we want to pack have a certain minimum size, sharp corners of the container cannot be utilized anyway.
This observation motivates a family of shapes which resemble rounded triangles.
We call these shapes \emph{hats}.

\newcommand\defaulta{45}
\newcommand\defaultb{45}
\newcommand\defaultr{0.2}
\newcommand\defaultx{0.7}

        \begin{figure}%
            \begin{tikzpicture}[scale=3]%
                \pgfmathsetmacro{\a}{(\defaulta)}
\pgfmathsetmacro{\b}{(\defaultb)}
\pgfmathsetmacro{\round}{(0.3)}

\hatshape{1}{\round}{\a}{\b}

\draw[dashed] (rightcenter) circle(\s) node {$b$};
\draw[dashed] (leftcenter) circle(\s) node {$b$};
\draw[dashed] (topcenter) circle(\s) node {$b$};
\draw[dashed] (midcenter) circle(\r) node {$a$};%
            \end{tikzpicture}%
            \caption{A right $(a,b)$-hat.}%
            \label{fig:right-hat}%
        \end{figure}
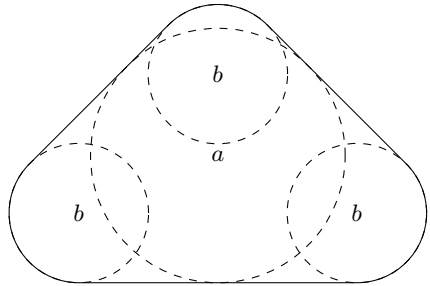%

\begin{definition}
    For each $0 \le b \le a$, a right \emph{$(a,b)$-hat} is an isosceles right triangle with an incircle of area $a$, whose three corners are rounded to the radius of a circle of area $b$, see \Cref{fig:right-hat}.
\end{definition}

We show that all sets of circles with a combined area of up to $a$ with a minimum circle size of $b$ can be packed into a right $(a,b)$-hat.
For the following proofs, we need to know a hat's dimensions in detail.
We construct these measures using \Cref{fig:hat-construction}.

\begin{lemma}\label{lm:sizes}
    Let $r$ be the radius of a circle of area $a$, and $s$ be the radius of a circle of area $b$.
    A right ($a,b$)-hat has
    \begin{itemize}
        \item $\begin{aligned}[t] \text{non-rounded height}\ h(a) = \textcolor{blue}{r + r\s} = \sqrt{\frac a \pi}(1+\s)\text{,} \end{aligned}$
        \item $\begin{aligned}[t]
                \text{width}\ w(a,b) &= 2\textcolor{blue}{(r+r\s)} - 2\textcolor{orange}{s\s}\\
                &= \sqrt{\frac a \pi}(2+2\s)-\sqrt{\frac b \pi}2\s\text{,}
            \end{aligned}$
        \item $\begin{aligned}[t]
                \text{diagonal}\ d(a,b) &= \textcolor{red}{(r+r\s)\s} - \textcolor{orange}{s\s}\\
                &= \sqrt{\frac a \pi}(2+\s)-\sqrt{\frac b \pi}\s\text{.}
            \end{aligned}$
    \end{itemize}

        \begin{figure}%
            \begin{tikzpicture}[scale=3]%
                \hatshape[draw=none]{1}{0.3}{45}{45}

\draw[dashed] (rightcenter) circle(\s);
\draw[dashed] (leftcenter) circle(\s);
\draw[dashed] (midcenter) circle(\r);
\draw[dashed] (right) -- (top) -- (left) -- cycle;

\draw[tight] (rightcenter) -- node[below left] {$s$} ++(-45:\s) -- node[below right] {$s$} ++(45:\s) coordinate (rightright) {} -- node[right] {$s$} ++(-90:\s) -- node[above] {$s$} (right);
\draw[thick,orange] (rightright) -- node[sloped,above] {$s\sqrt{2}$} (right);

\draw[tight] (leftcenter) -- node[above] {$s$} ++(180:\s) -- node[left] {$s$} ++(-90:\s) coordinate (leftleft) {} -- node[below left] {$s$} ++(135:\s) -- node[below right] {$s$} (left);
\draw[thick,orange] (leftleft) -- node[below] {$s\sqrt{2}$} (left);

\draw[thick,left,blue] (0,-\r) -- node {$r$} (midcenter) -- node {$r\sqrt{2}$} (top);
\draw (midcenter) -- node[below right] {$r$} +(45:\r) -- node[above right] {$r$} (top);
\draw[thick,red] (left) -- node[sloped,above] {$(r+r\sqrt{2})\sqrt{2}$} (top);

\coordinate (leftofleft) at ($(left)+(180:7pt)$);
\coordinate (belowleft) at ($(left)-(90:7pt)$);
\coordinate (farbelowleft) at ($(left)-(90:14pt)$);
\coordinate (abovetop) at ($(top)+(45:7pt)$);
\coordinate (farabovetop) at ($(top)+(45:14pt)$);
\coordinate (faraboveright) at ($(right)+(45:14pt)$);
\coordinate (aboverightright) at ($(rightright)+(45:7pt)$);
\coordinate (leftend) at ($(leftleft)-(90:7pt)$);
\coordinate (rightend) at ($(rightright)-(90:7pt)$);
\coordinate (totheright) at ($(rightcenter)+(0:\s)$);
\coordinate (rightend) at ($(totheright)-(90:7pt)$);

\draw[bracket] (leftofleft) -- node[left] {$h(a)$} (leftofleft |- top);
\draw[bracket] (rightend |- belowleft) -- node[below] {$w(a,b)$} (leftend |- belowleft);
\draw[bracket] (rightend |- farbelowleft) -- node[below] {$w'(a,b)$} (left |- farbelowleft);
\draw[bracket] (abovetop) -- node[above,sloped] {$d(a,b)$} (aboverightright);
\draw[bracket] (farabovetop) -- node[above,sloped] {$d'(a)$} (faraboveright);%
            \end{tikzpicture}%
            \caption{Constructing the dimensions of a right $(a,b)$-hat.}%
            \label{fig:hat-construction}%
        \end{figure}
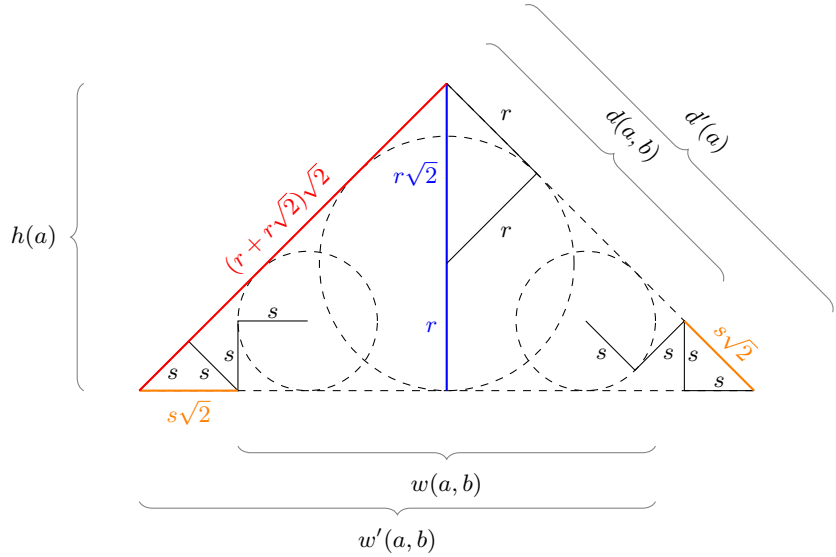%

    \noindent We define two additional measures for the case when one of the bottom corners is not rounded:
    \begin{itemize}
        \item $\text{corner-width}\ w'(a,b) = w(a,b) + \textcolor{orange}{s\s} = \sqrt{\dfrac a \pi}(2+2\s)-\sqrt{\dfrac b \pi}\s\text{,}$
        \item $\text{corner-diagonal}\ d'(a) = d(a,0) = \sqrt{\dfrac a \pi}(2+\s)\text{.}$
    \end{itemize}
\end{lemma}

\begin{lemma}\label{th:right-hats-in-right-hat}
    For each $0 \le a_1 \le a_2$, a right $(a_1,0)$-hat and a right $(a_2,a_2-a_1)$-hat can be packed into a right $(a_1+a_2,0)$-hat.
\end{lemma}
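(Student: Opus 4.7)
The plan is to exhibit an explicit placement of the smaller $(a_1,0)$-hat and the larger $(a_2,a_2-a_1)$-hat inside the container and to verify both that each fits and that the two are disjoint. Set up coordinates so that the container's incircle is centered at the origin; following \Cref{lm:sizes}, the container then has apex at $(0,R\sqrt{2})$ and base corners at $(\pm R(\sqrt{2}+1),-R)$, where $R=\sqrt{(a_1+a_2)/\pi}$ is its incircle radius. Write $r_i=\sqrt{a_i/\pi}$ for $i=1,2$, and $s=\sqrt{(a_2-a_1)/\pi}$ for the larger hat's rounding radius. Place the smaller sharp hat on the right so that its hypotenuse lies along the lower part of the container's right leg, its $45^\circ$ base corner coincides with the container's right corner, and its $90^\circ$ corner sits on the container's base; place the larger rounded hat mirror-symmetrically on the left, with its hypotenuse along the container's left leg.

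Three of the four required verifications are short. (i)~The smaller sharp hat lies inside the container because two of its sides already coincide with parts of the container's boundary and the third (a vertical segment) is forced to stay inside by the condition $r_1\le R/\sqrt{2}$, which is exactly the hypothesis $a_1\le a_2$. (ii)~The two hats are disjoint: their $90^\circ$ corners both lie on the base, and a direct computation gives their separation as $2R(\sqrt{2}+1)-(r_1+r_2)(\sqrt{2}+2)$, which is non-negative thanks to the QM--AM inequality $r_1+r_2\le R\sqrt{2}$ applied to $r_1^2+r_2^2=R^2$. (iii)~The roundings at the larger hat's $45^\circ$ base corner and its $90^\circ$ corner round \emph{inward}, so they stay inside the container and away from the smaller hat.

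The main obstacle is (iv): controlling the rounded upper $45^\circ$ corner of the larger hat. When $a_1<a_2$ we have $r_2>R/\sqrt{2}$, which forces the \emph{unrounded} apex $C'$ of the larger hat onto the line extending the container's left leg past the container's own apex---hence strictly outside the container---so the rounding of radius $s$ must be just strong enough to cut off the protruding portion. My plan is to write down the center of the rounding arc at $C'$ explicitly (it lies at distance $s/\sin(22.5^\circ)$ from $C'$ along the angle bisector, which simplifies to the point $(x_{C'}-s,\,y_{C'}-s(\sqrt{2}+1))$), and to show that the condition that this arc not cross the container's right leg reduces, after a direct computation, to the single inequality
\[
    r_2(\sqrt{2}+2)\;\le\;R(\sqrt{2}+1)+s.
\]
To close the argument I would substitute $v=r_2/R$, use $s^2=r_2^2-r_1^2=R^2(2v^2-1)$ together with $r_1^2=R^2-r_2^2$, and square the rearranged inequality (legal because $v\ge 1/\sqrt{2}$ makes the left side nonnegative); the result collapses to
\[
    2(1+\sqrt{2})\,v^2-(4+3\sqrt{2})\,v+(2+\sqrt{2})\;\le\;0,
\]
whose roots are exactly $v=1/\sqrt{2}$ and $v=1$, so the inequality holds on the full range $v\in[1/\sqrt{2},1]$ corresponding to $0\le a_1\le a_2$. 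The two equality cases---the symmetric split (both sub-hats sharp and meeting along the container's vertical axis) and the degenerate case $a_1=0$ (where the rounded hat collapses to the container's incircle)---also serve as useful sanity checks that the configuration is tight at both extremes.
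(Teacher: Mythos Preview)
Your proof is correct and follows essentially the same approach as the paper: the placement of the two sub-hats is identical (tips on the container's base, hypotenuses flush with the legs), and the two conditions you check---non-overlap via $r_1+r_2\le R\sqrt2$ and individual containment via the key inequality $r_2(\sqrt2+2)\le R(\sqrt2+1)+s$---are exactly the paper's conditions, just derived in explicit coordinates rather than through the named quantities $w'$, $d$, $d'$ of \Cref{lm:sizes}. The one noteworthy difference is in how the key inequality is verified: the paper treats $w'(a-a_1,a-2a_1)$ as a function of $a_1$, locates its unique interior extremum by calculus, observes it is a minimum, and then checks the two endpoints; you instead square to a quadratic in $v=r_2/R$ and factor it explicitly as $2(1+\sqrt2)(v-1/\sqrt2)(v-1)$, which is a cleaner and more self-contained verification.
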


\begin{proof}
    Place the hats' tips at the bottom of the container hat and shift them to the left/right until their sides meet the sides of the container.
    \Cref{fig:right-hats-in-right-hat} illustrates how these packings looks for different ratios of $a_1$ and~$a_2$.
    This way of placing the two hats results in a valid packing if (1) the hats do not overlap each other and (2) the hats fit into the container hat individually.
    We will prove these two properties separately.

    \begin{figure}
        \begin{tikzpicture}[scale=2]
            \hatsinhat{\defaulta}{\defaultb}{0.5}{0}
        \end{tikzpicture}
        ~
        \begin{tikzpicture}[scale=2]
            \hatsinhat{\defaulta}{\defaultb}{0.55}{0}
        \end{tikzpicture}

        \vspace{10pt}

        \begin{tikzpicture}[scale=2]
            \hatsinhat{\defaulta}{\defaultb}{0.6}{0}
        \end{tikzpicture}
        ~
        \begin{tikzpicture}[scale=2]
            \hatsinhat{\defaulta}{\defaultb}{0.7}{0}
        \end{tikzpicture}

        \vspace{10pt}

        \begin{tikzpicture}[scale=2]
            \hatsinhat{\defaulta}{\defaultb}{0.9}{0}
        \end{tikzpicture}
        ~
        \begin{tikzpicture}[scale=2]
            \hatsinhat{\defaulta}{\defaultb}{1}{0}
        \end{tikzpicture}

        \caption{Hat-in-hat packings for different ratios of $a_1$ (incircle of left hat) and $a_2$ (incircle of right hat). The hats don't overlap horizontally if the sum of their corner-diagonals never gets larger than the container hat's width.}
        \label{fig:right-hats-in-right-hat}
    \end{figure}
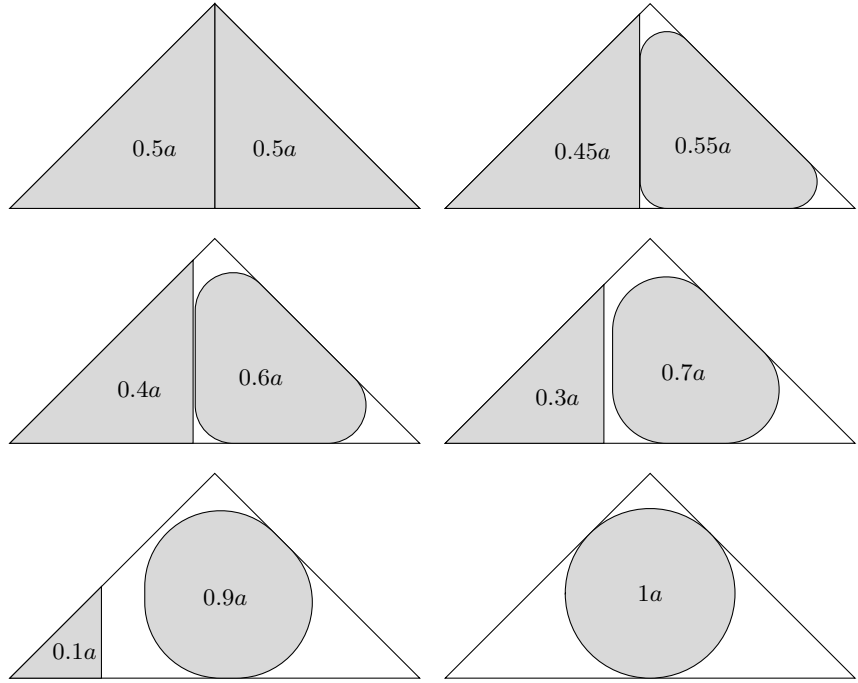

    \begin{itemize}
        \item[(1)]
            The hats do not overlap if the sum of their corner-diagonals is less or equal than the width of the $(a_1+a_2,0)$-hat.
            This can be verified to be true, as follows:
            \begin{align*}
                d'(a_1) + d'(a_2)
                &= \sqrt{\frac{a_1}{\pi}}(2+\s) + \sqrt{\frac{a_2}{\pi}}(2+\s)\\
                &= \frac{2+\s}{\sqrt\pi}(\sqrt{a_1} + \sqrt{a_2})\\
                &\le \frac{2+\s}{\sqrt\pi}(\sqrt{2a_1+2a_2})\\
                &= \sqrt{\frac{a_1+a_2}{\pi}}(2+2\s) = w(a_1+a_2,0).
            \end{align*}

        \item[(2)]
            The hats fit into the container hat individually if their corner-width never gets larger than the container hat's diagonal.
            For the $(a_1,0)$-hat, this is easy to show, as follows, using the fact that $w'(a,0)$ is a strictly increasing function on $a$:
            \begin{align*}
                w'(a_1,0) &\le w'(\frac{a_1+a_2}{2},0) = \sqrt{\frac{a_1+a_2}{2\pi}}(2+2\s)\\
                &= \sqrt{\frac{a_1+a_2}{\pi}}(2 + \s) = d(a_1+a_2,0).
            \end{align*}

            For the $(a_2,a_2-a_1)$-hat, we need to show that the following inequality holds:
            \begin{align*}
                w'(a_2,a_2-a_1) \le d(a_1+a_2,0).
            \end{align*}

            Let $a = a_1+a_2$.
            It then suffices to show that for all $0 \le a_1 \le a/2$,
            \begin{align*}
                w'(a-a_1,a&-2a_1)\\
                &= \sqrt{\frac{a-a_1}{\pi}}(2+2\s) - \sqrt{\frac{a-2a_1}{\pi}}\sqrt{2}\\
                &\le \sqrt{\frac{a}{\pi}}(2+2\s) = d(a,0).
            \end{align*}

            The left expression has its only extremum at $a_1 = \frac{1}{4}(3-\s)a \approx 0.3964a$.
            This point turns out to be a global minimum.
            As we can check the inequality to be true for $a_1 = 0$ and $a_1 = a/2$, it always holds between those two values.
    \end{itemize}
\end{proof}

\noindent The next lemma extends this observation to rounded container hats.

\begin{lemma}\label{th:rounded-right-hats-in-right-hat}
    For each $0 \le a_1 \le a_2$, a right $(a_1,b)$-hat and a right $(a_2,\max\{a_2-a_1,b\})$-hat can be packed into a right $(a_1+a_2,b)$-hat.
\end{lemma}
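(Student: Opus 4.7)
The plan is to follow the same placement as in Lemma~\ref{th:right-hats-in-right-hat}: position each sub-hat with one of its non-tip 45° corners at the corresponding bottom vertex of the container and its hypotenuse along the adjacent container side. Three conditions then need to be verified: (1) the two sub-hats do not overlap, (2) the smaller sub-hat fits inside the container, and (3) the larger sub-hat fits inside the container.

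For (1), the non-overlap condition becomes $d(a_1,b)+d(a_2,\max\{a_2-a_1,b\})\le w(a_1+a_2,b)$. Writing each term as (unrounded value) minus $\sqrt{2}$ times a rounding correction, and noting that $\max\{a_2-a_1,b\}\ge b$, the rounding corrections cancel favorably and the inequality reduces to $d'(a_1)+d'(a_2)\le w(a_1+a_2,0)$ from the previous lemma. For (2), I would use monotonicity of $w'$ in its first argument together with the algebraic identity $w'((a_1+a_2)/2,b)=d(a_1+a_2,b)$ (both sides equal $\sqrt{(a_1+a_2)/\pi}(2+\sqrt{2})-\sqrt{2b/\pi}$) to obtain $w'(a_1,b)\le w'((a_1+a_2)/2,b)=d(a_1+a_2,b)$, exactly as in the previous proof.

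The main obstacle is (3). A naive reuse of the previous lemma's inequality $w'(a_2,\max\{a_2-a_1,b\})\le d(a_1+a_2,b)$ can actually fail for $b>0$: near $a_1=0$ the larger sub-hat degenerates to a disk and $w'$ overestimates its true extent along the container side. To handle this I would instead track the fit geometrically, requiring that the distance from the container's bottom corner to the sub-hat's tip-arc tangent on the shared side does not exceed the distance to where the container's top rounded cap begins. This yields the sharper condition
\[
    2\sqrt{a_2}+(\sqrt{2}-1)\sqrt{b}\;\le\;\sqrt{2(a_1+a_2)}+\sqrt{\max\{a_2-a_1,b\}}.
\]
I would verify this at the endpoints $a_1=0$ and $a_1=a_2$, where it reduces to the $b\le a$-type conditions guaranteed by the implicit assumption $b\le a_1$, and in the interior by splitting on whether $\max\{a_2-a_1,b\}$ attains at $a_2-a_1$ or at $b$: in the triangular regime the inequality reduces, after reorganizing terms, to the previous lemma's crucial inequality (extremum at $a_1=\tfrac{1}{4}(3-\sqrt{2})(a_1+a_2)$) plus a nonnegative rounding correction, while in the rounded regime the narrow range $a_2-b\le a_1\le a_2$ makes the inequality easy to check by a direct monotonicity argument; continuity at the crossover $a_2-a_1=b$ ties both cases together.
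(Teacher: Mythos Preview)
Your approach is far more laborious than the paper's and creates an ``obstacle'' that does not actually exist. The paper's proof is essentially two sentences: by \Cref{th:right-hats-in-right-hat} the packing works for $b=0$; now round every corner in the picture (container and both sub-hats) by the radius of a $b$-circle, and observe that since both sub-hats were already rounded by at least that amount, nothing changes. More formally, if $T_i\subseteq T$ are the unrounded sub-triangle and container from the $b=0$ case, then eroding and then dilating by a disk of radius $r_b=\sqrt{b/\pi}$ preserves containment, and $(T\ominus B_{r_b})\oplus B_{r_b}$ is precisely the $(a_1+a_2,b)$-hat while each $(a_i,b_i)$-hat is contained in $(T_i\ominus B_{r_b})\oplus B_{r_b}$ because $b_i\ge b$. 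That is the entire argument; see \Cref{fig:right-hats-rounding}.

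By contrast, you re-run the full algebraic verification with the rounding parameter threaded through, and then discover that the naive transcription of condition~(3) may fail, prompting a sharper tangent-to-arc inequality and a case split on whether $\max\{a_2-a_1,b\}$ is attained at $a_2-a_1$ or at $b$. Even granting that this program can be carried out, several steps are only sketched: the ``direct monotonicity argument'' in the rounded regime and the reduction to the previous lemma's extremum in the triangular regime are asserted rather than shown, and the claimed inequality $2\sqrt{a_2}+(\sqrt{2}-1)\sqrt{b}\le\sqrt{2(a_1+a_2)}+\sqrt{\max\{a_2-a_1,b\}}$ is not obviously the correct geometric fit condition once both the sub-hat's top arc and the container's top arc are in play. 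The paper's Minkowski-style observation makes all of this unnecessary: the ``main obstacle'' you identify is an artifact of re-verifying coordinates instead of noting that uniform rounding is monotone with respect to containment.
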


\begin{proof}
    \Cref{th:right-hats-in-right-hat} tells us that \Cref{th:rounded-right-hats-in-right-hat} is true for $b = 0$.
    Now the container's corners can be rounded to the radius of a circle of area $b$, and we need to show that the two hats from the previous construction still fit inside.
    But all of the two hat's corners are also rounded to (at least) the same radius, so they will never overlap the container, see \Cref{fig:right-hats-rounding}.
\end{proof}

        \begin{figure}%
            \begin{tikzpicture}[scale=2]%
                    \hatsinhat{\defaulta}{\defaultb}{\defaultx}{0}
\end{tikzpicture}
\begin{tikzpicture}[scale=2]
\draw[->,thick] (0,0) -- (0.6,0);
\draw[draw=none] (0,0) -- (0,-0.6);
\end{tikzpicture}
\begin{tikzpicture}[scale=2]
    \hatsinhat{\defaulta}{\defaultb}{\defaultx}{\defaultr}%
            \end{tikzpicture}%
            \caption{Rounding all hats' corners by the same radius does not affect the packing.}%
            \label{fig:right-hats-rounding}%
        \end{figure}
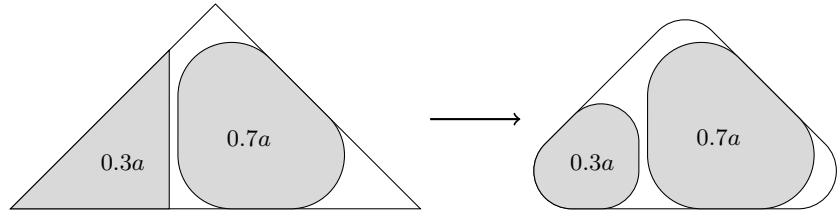%

With these preparations, we can finally apply Split Packing to right hats.

\begin{theorem}\label{th:right-hats}
    Given a right $(a,b)$-hat, all sets of circles with a combined area of at most $a$ and a minimum circle size of at least $b$ can be packed into that hat.
\end{theorem}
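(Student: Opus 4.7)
The plan is to argue by induction on $n = |C|$, carrying out the packing via the \textsc{Splitpack} algorithm. The base cases $n = 0$ and $n = 1$ are immediate: the empty set needs no packing, and a single circle of area $c \le a$ fits inside the incircle of the right $(a,b)$-hat, whose area is exactly $a$.

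For the inductive step with $n \ge 2$, I would apply \textsc{Split}$(C)$ to obtain two subsets $C_1, C_2$, both nonempty (after the largest circle enters $C_1$, the second-largest necessarily goes into $C_2$). Setting $a_i := \mysum(C_i)$, we have $a_1 \le a_2$ and $a_1 + a_2 \le a$. By \Cref{th:min1}, $\min(C_2) \ge a_2 - a_1$, and since $\min(C) \ge b$ passes to both subsets, we get $\min(C_2) \ge \max\{a_2-a_1, b\}$ and $\min(C_1) \ge b$. Because $|C_1|, |C_2| < n$, the inductive hypothesis packs $C_1$ into a right $(a_1, b)$-hat and $C_2$ into a right $(a_2, \max\{a_2-a_1, b\})$-hat.

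Next, \Cref{th:rounded-right-hats-in-right-hat} packs these two sub-hats inside a right $(a_1+a_2, b)$-hat. To finish, I would observe that this $(a_1+a_2, b)$-hat fits inside the container $(a, b)$-hat: holding $b$ fixed, the formulas of \Cref{lm:sizes} show that the hat's height, width, diagonal, corner-width, and corner-diagonal are all nondecreasing in the first argument, so aligning the two hats along their common vertical axis of symmetry (with coincident incircle centers) yields the containment, since $a_1 + a_2 \le a$. Composing these placements gives a packing of $C$ into the original right $(a, b)$-hat.

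I do not expect any conceptual obstacle: the delicate geometry is already isolated in \Cref{th:rounded-right-hats-in-right-hat}, and \Cref{th:min1} supplies exactly the lower bound on $\min(C_2)$ that matches the rounding of its assigned sub-hat. The one subtlety worth spelling out is that one cannot simply invoke \Cref{th:split-packing} as a black box, because its hypothesis presupposes the existence of suitable $\mathcal{C}$-shapes, which for hats is precisely the statement under proof; the induction on $|C|$ is the natural way to dissolve this apparent circularity.
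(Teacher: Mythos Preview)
Your proposal is correct and follows the same induction on $|C|$ as the paper: single-circle base case, \textsc{Split} for the inductive step, \Cref{th:min1} for the rounding guarantee on $C_2$, and \Cref{th:rounded-right-hats-in-right-hat} to place the two sub-hats. You are in fact slightly more explicit than the paper on one point---the containment of the right $(a_1+a_2,b)$-hat inside the right $(a,b)$-hat when $a_1+a_2 < a$---which the paper leaves implicit; your remark about the apparent circularity with \Cref{th:split-packing} is also apt.
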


\begin{proof}
    We prove by induction that we can pack each $C \in \C(a,b)$ into the hat.
    If $C$ only consists of a single circle, it can be packed into the hat, as it is at most as big as the hat's incircle.

    Now assume that for any $0 \le b \le a$, any right $(a,b)$-hat could pack all sets of circles into $\C(a,b)$ with at most $n$ circles.
    Consider a set of circles $C \in \C(a,b)$ containing $n+1$ circles.

    \textsc{Split} will partition $C$ into two subsets $C_1 \in \C(a_1, b)$ and $C_2 \in \C(a_2, \max\{(b_2-b_1,b\})$
    As \textsc{Split} can never return an empty set (except for $|C| = 1$, a case which we handled above), each subset will contain at most $n$ circles.
    We know from \Cref{th:rounded-right-hats-in-right-hat} that we can find two hats with matching parameters which fit into the container hat.
    By assumption, these hats can now pack all sets from $\C(a_1, b)$ and $\C(a_2, \max\{(b_2-b_1,b\})$, respectively, which means that they can especially also pack $C_1$ and $C_2$.
    If we then pack the two hats into the container, we have constructed a packing of $C$ into the container hat.
    By induction, we can pack each $C \in \C(a,b)$ into the $(a,b)$-hat.
\end{proof}

\section{Packing into Squares}

With these preparations, we turn to square containers.
Having established right $(a,b)$-hats as $\C(a,b)$-shapes, to argue about the packing properties of squares is going to be relatively straightforward.
We first argue about the worst-case instance for squares.

To simplify talking about a square's worst-case instance, we introduce the following notion in analogy to the \emph{incircle}:

\begin{definition}
    A shape's \emph{twincircles} are the largest two equal circles that can be packed into the shape.
\end{definition}

\begin{lemma}\label{th:square-worst}
    Two touching equal circles, packed into opposing corners of a square, are the square's \emph{twincircles}, meaning that there are no two larger equal circles which can be packed.
\end{lemma}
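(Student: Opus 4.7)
The plan is to reduce the problem to a pair of elementary geometric constraints on the centers of the two disks. Normalize the square to have side length~$1$; we must show that two equal disks of radius~$r$ can be packed into it only if $r \le \frac{\sqrt2}{2+2\sqrt2}$, i.e.\ the value attained by the corner-to-corner configuration described in the statement.

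First I would observe the \emph{boundary constraint}: any disk of radius~$r$ contained in the unit square has its center at distance at least~$r$ from every side, so the two centers both lie in a concentric \emph{inner square} of side $1-2r$. Second, I would invoke the \emph{non-overlap constraint}: since the disks are disjoint and of equal radius~$r$, their centers are at Euclidean distance at least $2r$.

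These two facts together force
\begin{equation*}
    2r \;\le\; \operatorname{diam}\bigl(\text{inner square}\bigr) \;=\; \sqrt{2}\,(1-2r),
\end{equation*}
because the diagonal is the largest distance realizable inside the inner square. Solving this inequality for~$r$ yields $r \le \frac{\sqrt2}{2+2\sqrt2} = \frac{2-\sqrt2}{2}$, matching exactly the radius of the two touching disks placed in opposite corners. Equality holds only when each center lies at a corner of the inner square and the segment between them realizes the diagonal, which corresponds precisely to the configuration of \Cref{fig:circles-worst}; hence no strictly larger pair of equal disks fits, and the configuration indeed gives the twincircles.

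There is no real obstacle: the proof is a two-line reduction to the elementary fact that the diameter of a square is~$\sqrt{2}$ times its side. The only care needed is to state the boundary-and-non-overlap constraints crisply and to verify that the inequality chain is tight, so that the equality case identifies the configuration uniquely (up to the symmetry of the square).
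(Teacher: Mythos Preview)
Your proof is correct and takes essentially the same approach as the paper: both erode the square by~$r$ to obtain an inner square that must contain the two centers, then compare the required inter-center distance~$2r$ to the diagonal of that inner square. The paper phrases this as a contradiction (increase~$r$ by~$\varepsilon$ and observe the eroded diagonal drops below the required separation), whereas you solve the inequality $2r \le \sqrt{2}(1-2r)$ directly; the underlying geometry is identical.
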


\begin{proof}
    Let $r$ be the radius of these circles.
    When eroding the square by $r$, the result is a square with a diagonal of $2r$.
    When eroding by a larger radius $r + \varepsilon$, the diagonal will be smaller than $2r$.
    But the centers of the two circles need to be placed at least $2r + 2\varepsilon$ away from each other, and additionally need a distance of at least $r + \varepsilon$ from the square's boundary.
    Both constraints cannot be satisfied at the same time.
\end{proof}

\begin{lemma}\label{th:square-twincircle-area}
    The twincircles of a square with area $a$
    have a combined area of
    \begin{equation}
        a_{tc} = \frac{\pi}{3+2\s}a \approx 0.5390a.
    \end{equation}
\end{lemma}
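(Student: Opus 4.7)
The plan is a short, direct computation, since \Cref{th:square-worst} already tells us exactly what the twincircles look like: two equal circles of radius $r$ placed in opposing corners, touching each other along the square's diagonal. I would set up coordinates, extract a single equation from the tangency condition, solve for $r$, and substitute into $2\pi r^2$.

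First I would let $s$ denote the square's side length, so $a = s^2$, and place the square with corners at $(0,0)$ and $(s,s)$. By \Cref{th:square-worst}, the two circle centers sit on the main diagonal at $(r,r)$ and $(s-r,s-r)$, each at distance $r$ from its two adjacent walls. The only remaining constraint is that the circles are mutually tangent, i.e.\ the distance between their centers equals $2r$.

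Next I would write this tangency condition as
\begin{equation*}
    \sqrt{2}\,(s-2r) = 2r,
\end{equation*}
and solve for $r$, obtaining $r = s/(2+\sqrt{2})$. Then I would compute
\begin{equation*}
    2\pi r^2 = \frac{2\pi s^2}{(2+\sqrt{2})^2} = \frac{2\pi a}{6 + 4\sqrt{2}} = \frac{\pi a}{3+2\sqrt{2}},
\end{equation*}
which is exactly the claimed $a_{tc}$, and a numerical evaluation confirms the $\approx 0.5390\,a$ approximation.

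There is really no obstacle here; the only thing to be careful about is the algebraic simplification $(2+\sqrt{2})^2 = 6 + 4\sqrt{2} = 2(3+2\sqrt{2})$, which turns the factor of $2$ from the two circles into the cancellation yielding the clean denominator $3+2\sqrt{2}$. Everything else is immediate once the tangency equation is written down.
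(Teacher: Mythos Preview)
Your proposal is correct and essentially identical to the paper's proof: both set up the tangency condition along the diagonal to obtain $r = \sqrt{a}/(2+\s)$ and then compute $2\pi r^2 = \pi a/(3+2\s)$. The only cosmetic difference is that you introduce explicit coordinates while the paper reads off the relation $2r + 2r/\s = \sqrt{a}$ from a figure.
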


\begin{proof}
    We can construct the twincircles' radius $r$ as seen in \Cref{fig:square-worst-construction}:
    \begin{equation*}
        2r + 2\frac{r}{\s} = \sqrt{a} \iff r = \frac{\sqrt{a}}{2 + \s}.
    \end{equation*}

    So the combined area of the twincircles is
    \begin{equation*}
        a_{tc} = 2\pi r^2 = 2\pi\frac{a}{4 + 4\s + 2} = \frac{\pi}{3+2\s}a. \eqno
    \end{equation*}
\end{proof}

        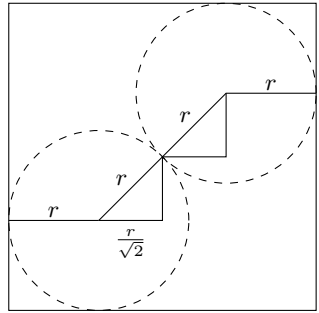
\begin{figure}%
            \begin{tikzpicture}[scale=3]%
                \draw (0,0) rectangle (\B,\B);
\newcommand\circrad{0.398942}
\coordinate (c1) at (\circrad,\circrad);
\coordinate (c2) at (\B-\circrad,\B-\circrad);
\coordinate (c) at (\B/2,\B/2);
\draw[dashed] (c1) circle (\circrad);
\draw[dashed] (c2) circle (\circrad);
\draw[tight] (c1) edge[above] node {$r$} +(180:\circrad) -- node[above left] {$r$} ++(45:\circrad) -- node[above left] {$r$} ++(45:\circrad) -- node[above] {$r$} ++(0:\circrad);
\draw (c1) -- node[below] {$\frac{r}{\s}$} (c1 -| c) -- (c) -- (c2 |- c) -- (c2);%
            \end{tikzpicture}%
            \caption{Constructing the twincircles' radius $r$.}%
            \label{fig:square-worst-construction}%
        \end{figure}%

We now proceed in analogy to \Cref{th:right-hats-in-right-hat}:

\begin{lemma}\label{th:hats-in-square}
    For each $0 \le a_1 \le a_2$, a right $(a_1,0)$-hat and a right $(a_2,a_2-a_1)$-hat can be packed into a square with a twincircle area of $a_1+a_2$.
\end{lemma}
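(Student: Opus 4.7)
The plan is to place the two hats in opposite corners of the square: the $(a_1,0)$-hat with its right-angle vertex at $(0,0)$ and its legs running along the bottom and left sides, and the $(a_2,a_2-a_1)$-hat with its rounded right-angle tip nestled into the top-right corner $(L,L)$, its tip-arc tangent to the top and right sides. Here $L=(1+\sqrt{2})\sqrt{(a_1+a_2)/\pi}$ is the side of the square, as obtained from \Cref{th:square-twincircle-area}. In direct analogy with the proof of \Cref{th:right-hats-in-right-hat}, two things then need checking: (i) the two hats do not overlap, and (ii) the larger, rounded hat lies entirely within the square (the smaller one clearly does).

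For (i), both hypotenuses are perpendicular to the main diagonal of the square, at diagonal distances $h(a_1)$ from $(0,0)$ and $h(a_2)$ from $(L,L)$. As the hats are convex and lie on opposite sides of these perpendicular lines, the disjointness condition is $h(a_1)+h(a_2)\le L\sqrt{2}$, the length of the diagonal. Substituting the formulas for $h$ and $L$ reduces this to $\sqrt{a_1}+\sqrt{a_2}\le\sqrt{2(a_1+a_2)}$, which after squaring is the AM--GM inequality $2\sqrt{a_1a_2}\le a_1+a_2$.

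For (ii), by the symmetry of the placement across the main diagonal, the only potentially problematic extremal points of the larger hat are the bottommost point of its right rounded $45^\circ$ corner arc and the leftmost point of its left rounded $45^\circ$ corner arc; a short computation of these extremal points shows that both conditions collapse into the single inequality
\begin{equation*}
d(a_2,a_2-a_1)=\sqrt{a_2}(2+\sqrt{2})-\sqrt{2(a_2-a_1)}\le(1+\sqrt{2})\sqrt{a_1+a_2}=L.
\end{equation*}
Setting $q=\sqrt{1-a_1/(a_1+a_2)}\in[1/\sqrt{2},1]$, isolating the remaining square root, and squaring transforms this into the quadratic inequality $(2+4\sqrt{2})q^2-(8+6\sqrt{2})q+(5+2\sqrt{2})\le 0$. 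This quadratic has one root at $q=1/\sqrt{2}$ (corresponding to the tight case $a_1=a_2$) and a second root, computable as $(16+3\sqrt{2})/14$, above $1$, so it is nonpositive on all of $[1/\sqrt{2},1]$.

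The main obstacle is the algebraic verification in step (ii): although it plays the same geometric role as the hat-fitting step in \Cref{th:right-hats-in-right-hat}, the square's non-hat geometry changes which dimension of the larger hat becomes binding (here its ``diagonal'' $d$, rather than its ``corner-width'' $w'$), so the resulting inequality is new and must be worked out from scratch rather than lifted from the previous lemma.
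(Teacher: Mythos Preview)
Your proof is correct and follows essentially the same approach as the paper: identical placement of the two hats in opposite corners, the same two checks (sum of heights $\le$ diagonal for non-overlap; hat diagonal $\le$ side length for individual fit), and the same reduced inequalities. The only difference is cosmetic---the paper verifies $d(a_2,a_2-a_1)\le L$ by locating the unique interior extremum (a minimum) and checking the endpoints $a_1=0$ and $a_1=a/2$, whereas you substitute $q=\sqrt{a_2/(a_1+a_2)}$ and factor the resulting quadratic explicitly; both are valid, and your silent cancellation of the common $1/\sqrt{\pi}$ factors in the displayed inequality is harmless.
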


\begin{figure}
    \begin{tikzpicture}[scale=2]
        \hatsinsquare{0.5}
    \end{tikzpicture}
    ~
    \begin{tikzpicture}[scale=2]
        \hatsinsquare{0.48}
    \end{tikzpicture}
    ~
    \begin{tikzpicture}[scale=2]
        \hatsinsquare{0.45}
    \end{tikzpicture}
    ~
    \begin{tikzpicture}[scale=2]
        \hatsinsquare{0.4}
    \end{tikzpicture}

    \vspace{10pt}

    \begin{tikzpicture}[scale=2]
        \hatsinsquare{0.3}
    \end{tikzpicture}
    ~
    \begin{tikzpicture}[scale=2]
        \hatsinsquare{0.2}
    \end{tikzpicture}
    ~
    \begin{tikzpicture}[scale=2]
        \hatsinsquare{0.1}
    \end{tikzpicture}
    ~
    \begin{tikzpicture}[scale=2]
        \hatsinsquare{0}
    \end{tikzpicture}

    \caption{Hat-in-square packings for different ratios of $a_1$ and $a_2$}
    \label{fig:hats-in-square}
\end{figure}
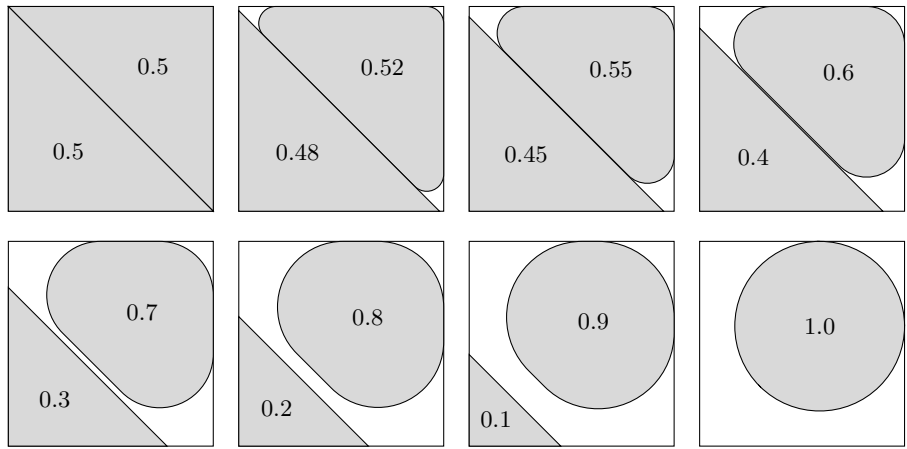

\begin{proof}
    Place the hats' tips in two opposing corners of the square, like in \Cref{fig:hats-in-square}.
    Again, this placement constitutes a valid packing because (1) the hats never overlap and (2) the hats fit into the square individually.
    We can prove both properties in a similar fashion as in \Cref{th:right-hats-in-right-hat}:

    \begin{itemize}
        \item[(1)]
            The hats do not overlap if their combined height never exceeds the square's diagonal, $\frac{\s+2}{\sqrt\pi}\sqrt{a_1+a_2}$, which is the case:
            \begin{align*}
                h(a_1) + h(a_2)
                &= \sqrt{\frac{a_1}{\pi}}(1+\s) + \sqrt{\frac{a_2}{\pi}}(1+\s)\\
                &= \frac{1+\s}{\sqrt\pi}(\sqrt{a_1}+\sqrt{a_2})\\
                &\le \frac{1+\s}{\sqrt\pi}(\sqrt{2a_1+2a_2})\\
                &= \frac{\s+2}{\sqrt\pi}\sqrt{a_1+a_2}.
            \end{align*}

        \item[(2)]
            Again, let $a = a_1 + a_2$.
            The hats fit into the square individually if their diagonal never gets larger than the square's edge length $\frac{1+\s}{\sqrt{\pi}}\sqrt{a}$.
            For the smaller hat, this is easy to show:
            \begin{align*}
                d(a_1,0) \le d(\frac{a_1+a_2}{2},0)
                &= \sqrt{\frac{\frac{a_1+a_2}{2}}{\pi}}(2+\s)\\
                &= \frac{1+\s}{\sqrt{\pi}}\sqrt{a}.
            \end{align*}

            For the larger hat, we need to show that the following inequality holds:
            \begin{equation*}
                d(a_2,a_2-a_1) \le \frac{1+\s}{\sqrt{\pi}}\sqrt{a}.
            \end{equation*}

            As $a_2$ is smaller than $a-a_1$, it suffices to show that for all $0 \le a_1 \le a/2$,
            \begin{align*}
                d(a-a_1,a&-2a_1)\\
                &= \sqrt{\frac{a-a_1}{\pi}}(2+\s)-\sqrt{\frac{a-2a_1}{\pi}}\s\\
                &\le \frac{1+\s}{\sqrt{\pi}}\sqrt{a}.
            \end{align*}

            The left expression has its only extremum at $a_1 = \frac{1}{14}(9-4\s)a \approx 0.2388a$.
            This point turns out to be a global minimum.
            As we can check the inequality to be true for $a_1 = 0$ and $a_1 = a/2$, it always holds between those two values.
    \end{itemize}
\end{proof}

We are now ready to prove our main result:

\begin{theorem}\label{th:square}
    Given a square with a twincircle area of $a$, all sets of circles with a combined area of up to $a$ can be packed into the square, and this area bound is tight.
    Expressed algebraically, the critical density is
    \begin{equation}
        \phi_s = \frac{\pi}{3+2\s} \approx 0.5390.
    \end{equation}
\end{theorem}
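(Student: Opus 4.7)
The plan is to assemble this theorem as a corollary of the machinery built up so far, rather than to do fresh geometry. Tightness is immediate: by \Cref{th:square-worst}, the two opposing-corner circles of \Cref{fig:square-worst-construction} are the largest \emph{equal} pair that fits, and by \Cref{th:square-twincircle-area} their combined area is exactly $a$. Any two equal circles of combined area $a+\varepsilon$ form a valid (two-circle) instance that cannot be packed, so no density larger than $\phi_s = \pi/(3+2\sqrt{2})$ is achievable, giving the upper bound and the algebraic formula in one stroke.

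For the positive direction, I would invoke the Split Packing Theorem (\Cref{th:split-packing}) with the square as the container $s$ and with parameter $b=0$. Its hypothesis asks that, for every pair $0 \le a_1 \le a_2$ with $a_1+a_2 \le a$, one can exhibit a $\C(a_1,0)$-shape and a $\C(a_2,\max\{a_2-a_1,0\})$-shape that jointly fit into $s$. The natural candidates are a right $(a_1,0)$-hat and a right $(a_2,a_2-a_1)$-hat: by \Cref{th:right-hats} these hats are $\C(a_1,0)$- and $\C(a_2,a_2-a_1)$-shapes, respectively, and by \Cref{th:hats-in-square} they pack together into a square whose twincircles have combined area $a_1+a_2$. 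Since $a_1+a_2 \le a$ and the family of squares is totally ordered by scaling, this smaller square (with its hat contents) can be placed inside the target square of twincircle area $a$. Plugging this into \Cref{th:split-packing} produces a packing of every $C \in \C(a,0)$, which is exactly the claim.

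The main (mild) obstacle is really just this last bookkeeping step of transferring a packing from a square of twincircle area $a_1+a_2$ to one of twincircle area $a$, but it is essentially a rescaling/monotonicity remark and does not require new geometric analysis. Everything harder was carried out in \Cref{th:hats-in-square} (the two-hat placement diagram and the two quantitative inequalities on $h$ and $d$) and in \Cref{th:right-hats} (the inductive Split Packing argument inside hats). No new calculation is needed to obtain the density formula, which is already supplied by \Cref{th:square-twincircle-area}.
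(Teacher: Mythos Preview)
Your proposal is correct and follows essentially the same route as the paper: tightness via \Cref{th:square-worst} and \Cref{th:square-twincircle-area}, and the positive direction by feeding \Cref{th:hats-in-square} (together with \Cref{th:right-hats}) into the Split Packing Theorem (\Cref{th:split-packing}). You are slightly more explicit than the paper, which compresses the positive direction into a single sentence and does not separately mention the monotonicity step from twincircle area $a_1+a_2$ up to $a$; that extra care is fine but not a different argument.
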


\begin{proof}
    By \Cref{th:hats-in-square} and the Split Packing Theorem (\Cref{th:split-packing}), the square is a $\C(a)$-shape.

    On the other hand, as shown in \Cref{th:square-worst}, two equal circles with a combined area of more than $a$ cannot be packed.
    We know from \Cref{th:square-twincircle-area} that the ratio between the twincircles' area and the square's area is $\frac{\pi}{3+2\s}$.
\end{proof}

\begin{figure}
        \begin{tikzpicture}[scale=0.009]%
            \draw (-200,-200) rectangle (200,200);
\draw[filled] (-34.31457505076196,34.31457505076196) circle (165.68542494923804);%
        \end{tikzpicture}%
    
    ~
        \begin{tikzpicture}[scale=0.009]%
            \draw (-200,-200) rectangle (200,200);
\draw[filled] (-82.84271247461902,82.84271247461902) circle (117.15728752538101);
\draw[filled] (82.84271247461902,-82.84271247461902) circle (117.15728752538101);%
        \end{tikzpicture}%
    
    ~
        \begin{tikzpicture}[scale=0.009]%
            \draw (-200,-200) rectangle (200,200);
\draw[filled] (-125.90322538651282,88.9009414228247) circle (74.09677461348718);
\draw[filled] (125.9032253865128,-21.114561800016883) circle (74.09677461348718);
\draw[filled] (59.29260780414956,125.90322538651277) circle (74.09677461348718);
\draw[filled] (21.114561800016833,-125.90322538651283) circle (74.09677461348718);
\draw[filled] (-125.9032253865128,-59.2926078041496) circle (74.09677461348718);%
        \end{tikzpicture}%

    \vspace{10pt}

        \begin{tikzpicture}[scale=0.009]%
            \draw (-200,-200) rectangle (200,200);
\draw[filled] (-82.84271247461896,82.84271247461896) circle (117.15728752538105);
\draw[filled] (117.15728752538094,-0) circle (82.84271247461905);
\draw[filled] (-58.57864376269019,-141.42135623730945) circle (58.578643762690525);
\draw[filled] (99.99999999999969,-158.5786437626905) circle (41.421356237309524);
\draw[filled] (29.289321881345618,-70.7106781186554) circle (29.289321881345263);
\draw[filled] (20.710678118655107,-149.99999999999915) circle (20.710678118654762);
\draw[filled] (64.64466094067159,-114.64466094067322) circle (14.644660940672631);
\draw[filled] (25.00000000000183,-110.35533905932797) circle (10.355339059327381);
\draw[filled] (42.6776695296627,-132.3223304703341) circle (7.322330470336316);
\draw[filled] (44.822330470335324,-112.50000000000347) circle (5.1776695296636905);
\draw[filled] (33.838834764836534,-121.33883476482961) circle (3.661165235168158);
\draw[filled] (43.749999999993335,-122.41116523516594) circle (2.5888347648318453);
\draw[filled] (39.33058261758878,-116.91941738242508) circle (1.830582617584079);
\draw[filled] (38.79441738242062,-121.87499999998639) circle (1.2944173824159226);
\draw[filled] (41.54029130877401,-119.66529130880119) circle (0.9152913087920395);
\draw[filled] (39.06250000002744,-119.39720869121709) circle (0.6472086912079613);
\draw[filled] (40.167354345585956,-120.77014565435967) circle (0.45764565439601973);
\draw[filled] (40.301395654378,-119.53125000005464) circle (0.32360434560398066);
\draw[filled] (39.614927172874914,-120.08367717276566) circle (0.22882282719800986);
\draw[filled] (40.234374999891024,-120.15069782716168) circle (0.16180217280199033);
\draw[filled] (39.958161413671924,-119.80746358654659) circle (0.11441141359900493);
\draw[filled] (39.92465108647392,-120.11718749978179) circle (0.08090108640099516);
\draw[filled] (40.096268206508626,-119.9790807069451) circle (0.057205706799502466);
\draw[filled] (39.94140625043672,-119.96232554334608) circle (0.04045054320049758);
\draw[filled] (40.01045964630937,-120.04813410281773) circle (0.028602853399751233);
\draw[filled] (40.01883722810888,-119.97070312587317) circle (0.02022527160024879);
\draw[filled] (39.97593294946444,-120.0052298227181) circle (0.014301426699875617);
\draw[filled] (40.014648435753934,-120.00941861361784) circle (0.010112635800124396);
\draw[filled] (39.99738508951426,-119.98796647647843) circle (0.007150713349937808);
\draw[filled] (39.99529069406437,-120.00732421525758) circle (0.005056317900062198);
\draw[filled] (40.00601675826852,-119.99869254650335) circle (0.003575356674968904);
\draw[filled] (39.99633789761008,-119.9976453487784) circle (0.002528158950031099);
\draw[filled] (40.00065372325606,-120.00300837214932) circle (0.001787678337484452);
\draw[filled] (40.00117732211853,-119.99816895928242) circle (0.0012640794750155494);
\draw[filled] (39.99849582789541,-120.00032685464308) circle (0.000893839168742226);
\draw[filled] (40.000915499404094,-120.0005886540743) circle (0.0006320397375077747);
\draw[filled] (39.9998365866487,-119.99924794188769) circle (0.000446919584371113);
\draw[filled] (39.99970568693308,-120.00045770779151) circle (0.00031601986875388736);
\draw[filled] (40.000375973174535,-119.99991832126567) circle (0.0002234597921855565);
\draw[filled] (39.99977122993167,-119.99985287140784) circle (0.00015800993437694368);
\draw[filled] (40.00004078347489,-120.00018787480889) circle (0.00011172989609277825);
\draw[filled] (40.00007350840379,-119.99988578266952) circle (0.00007900496718847184);
\draw[filled] (39.99990628627076,-120.00002027987365) circle (0.00005586494804638913);
\draw[filled] (40.00005677286338,-120.00003664233812) circle (0.00003950248359423592);
\draw[filled] (39.99999008442472,-119.99995359143499) circle (0.000027932474023194564);
\draw[filled] (39.99998190319249,-120.00002771164203) circle (0.00001975124179711796);
\draw[filled] (40.00002229996034,-119.99999549610641) circle (0.000013966237011597282);
\draw[filled] (39.99998752017844,-119.99999140549029) circle (0.00000987562089855898);
\draw[filled] (40.0000012992339,-120.00000927516186) circle (0.000006983118505798641);
\draw[filled] (40.00000334454196,-119.99999675952098) circle (0.00000493781044927949);%
        \end{tikzpicture}%
    
    ~
        \begin{tikzpicture}[scale=0.009]%
            \draw (-200,-200) rectangle (200,200);
\draw[filled] (88.40531002041091,-121.54823284026168) circle (0);
\draw[filled] (88.40531002323715,-121.54823284708458) circle (0);
\draw[filled] (88.40530988620473,-121.54823279045964) circle (0);
\draw[filled] (88.40531099627273,-121.54823232378726) circle (0);
\draw[filled] (88.4053087682328,-121.54823793675526) circle (0.00000274245604817574);
\draw[filled] (88.4052995308953,-121.54821735435358) circle (0.000010764627881780642);
\draw[filled] (88.40536463199265,-121.54824079554076) circle (0.00003420581509076833);
\draw[filled] (88.40519950973234,-121.54832260159922) circle (0.00009311926642302679);
\draw[filled] (88.40533165014672,-121.54790157203205) circle (0.00022525968080586715);
\draw[filled] (88.40580010121054,-121.54875352654855) circle (0.0004964375728054084);
\draw[filled] (88.40386585409499,-121.54823532925458) circle (0.0010146348667689656);
\draw[filled] (-30.568836741060213,115.8002307767114) circle (0.0019486057912721595);
\draw[filled] (88.41061721296475,-121.54507196290719) circle (0.0035517287101748812);
\draw[filled] (-30.564593432916976,115.81192055055763) circle (0.00619191393451403);
\draw[filled] (88.40378053155817,-121.56472089494943) circle (0.010388410116754329);
\draw[filled] (88.38606866213362,-121.5378619237992) circle (0.01685638777519079);
\draw[filled] (-30.54155614501455,115.76007813784132) circle (0.02656021302334192);
\draw[filled] (88.46654312539131,-121.56178189386739) circle (0.04077635784339093);
\draw[filled] (-30.653267797599543,115.79468485019163) circle (0.061166925373672645);
\draw[filled] (88.31097314602486,-121.64793033288198) circle (0.08986479978662265);
\draw[filled] (88.35067980674614,-121.38528736594074) circle (0.12957146050786764);
\draw[filled] (-30.317102681419048,115.95760653591137) circle (0.18366852984848506);
\draw[filled] (88.64144763614948,-121.85533885450926) circle (0.2563441516594916);
\draw[filled] (87.86230603697051,-121.75894767973855) circle (0.35273532643018807);
\draw[filled] (-30.612521506963247,115.0592024499691) circle (0.47908735539268693);
\draw[filled] (-31.39898713865972,116.07737879591255) circle (0.6429315727102998);
\draw[filled] (-29.27935491352397,115.86702779786401) circle (0.8532825707588387);
\draw[filled] (90.03415045871537,-120.67339014266358) circle (1.1208561488942967);
\draw[filled] (-32.18109166867866,113.82092796532088) circle (1.4583092409710623);
\draw[filled] (89.27450350634787,-124.23460332682801) circle (1.8805031012618034);
\draw[filled] (-31.2346098572891,119.23424597190974) circle (2.40479105236062);
\draw[filled] (85.60144559589635,-118.90574313581673) circle (3.051332122149742);
\draw[filled] (94.99922324102728,-119.69784293366651) circle (3.8434319199995275);
\draw[filled] (-25.802217858877448,110.40332756044442) circle (4.807912125072467);
\draw[filled] (82.76992295068553,-128.9119295127834) circle (5.975509981933292);
\draw[filled] (161.37459830033742,192.61869077935336) circle (7.381309220646655);
\draw[filled] (178.11466194730798,190.9347961598127) circle (9.065203840187307);
\draw[filled] (87.86680918406243,-103.97142529217268) circle (11.072396215310189);
\draw[filled] (-52.25607221155,119.04934644341525) circle (13.453931008043323);
\draw[filled] (102.92752770050083,35.56729913655748) circle (16.267266385687623);
\draw[filled] (66.56281177995366,38.87691681851679) circle (19.57688406764693);
\draw[filled] (-13.847040337998333,143.3148161257596) circle (23.454939743579683);
\draw[filled] (-18.374056019690222,77.60972023630171) circle (27.98195542527156);
\draw[filled] (119.28741783205234,-166.75244468571455) circle (33.247555314285464);
\draw[filled] (-85.33448295839592,160.6487532131388) circle (39.35124678686122);
\draw[filled] (153.59675188327952,87.97264905448485) circle (46.403248116720455);
\draw[filled] (64.13352682140095,141.24403689329145) circle (54.52536457539092);
\draw[filled] (136.14808543159384,-46.29322860143612) circle (63.85191456840612);
\draw[filled] (-20.06695517425086,-125.46929251573094) circle (74.53070748426906);
\draw[filled] (-113.275925048602,9.370437927961033) circle (86.724074951398);%
        \end{tikzpicture}%
    
    ~
        \begin{tikzpicture}[scale=0.009]%
            \draw (-200,-200) rectangle (200,200);
\draw[filled] (106.3812562905312,62.83713513995149) circle (4.640120015402711);
\draw[filled] (-104.4607353927466,-80.91522198331653) circle (6.562120656821369);
\draw[filled] (109.77805989502328,77.0410071537171) circle (8.036923619894777);
\draw[filled] (-22.6399140334728,-9.67250471958291) circle (9.280240030805421);
\draw[filled] (-70.24210909597723,110.21457367292807) circle (10.375623778197834);
\draw[filled] (56.311580322034104,-123.08920607852684) circle (11.365926383011862);
\draw[filled] (-87.30292687116163,-134.58417881963717) circle (12.276603614248772);
\draw[filled] (73.21092791447367,146.37099813831458) circle (13.124241313642738);
\draw[filled] (-141.23067894633206,78.69062417162694) circle (13.920360046208133);
\draw[filled] (79.98904311330476,-59.551521071228805) circle (14.67334786520815);
\draw[filled] (184.6104629266911,-163.16009639329374) circle (15.389537073308913);
\draw[filled] (-81.84464153562351,59.739503064342685) circle (16.073847239789554);
\draw[filled] (16.495249492346264,160.85830580899352) circle (16.73019063984123);
\draw[filled] (37.48643553239152,-182.63826066885085) circle (17.36173933114917);
\draw[filled] (-54.35131767361856,-79.92319520785811) circle (17.971107544058402);
\draw[filled] (-117.7368961101848,144.6130761152724) circle (18.560480061610843);
\draw[filled] (-149.82270355300426,-122.9232189566351) circle (19.131704939048024);
\draw[filled] (151.1617640807145,123.16415047373154) circle (19.686361970464105);
\draw[filled] (179.77418576695922,151.0136714539874) circle (20.22581423304081);
\draw[filled] (-179.24875244360427,-150.05888871294087) circle (20.75124755639567);
\draw[filled] (52.35656182475727,178.7362987957374) circle (21.263701204262585);
\draw[filled] (21.6855528732591,-143.01000662047363) circle (21.764092047717014);
\draw[filled] (-22.331773016280597,-53.9594916366128) circle (22.253233841822624);
\draw[filled] (-144.49275537659506,177.2681472339763) circle (22.731852766023724);
\draw[filled] (-176.79939992298642,56.24614450861535) circle (23.20060007701355);
\draw[filled] (56.728670636385495,-23.582014871022757) circle (23.660062503950876);
\draw[filled] (-57.191527999998215,23.72046314570552) circle (24.110770859684333);
\draw[filled] (160.34117510666113,-125.93142195868252) circle (24.553207228497545);
\draw[filled] (-64.85127858876153,-175.0121889921729) circle (24.98781100782709);
\draw[filled] (42.29316603364688,122.77823668352694) circle (25.414984019672836);
\draw[filled] (-26.07201147463596,132.52167762183672) circle (25.83509486101351);
\draw[filled] (76.77112424063552,-155.921324855331) circle (26.248482627285476);
\draw[filled] (133.75448093410674,26.733999290393836) circle (26.655460115935874);
\draw[filled] (-70.83798615579437,-32.73140827745117) circle (27.056316596122045);
\draw[filled] (-127.99058315172994,-27.843584903057348) circle (27.451320214279857);
\draw[filled] (172.15927990758368,-73.4986695185978) circle (27.840720092416266);
\draw[filled] (131.54545832012994,-171.77525183422156) circle (28.22474816577841);
\draw[filled] (28.368679650910128,68.6634761172389) circle (28.603620798405096);
\draw[filled] (-29.214456822140725,70.82258456704106) circle (28.977540208518278);
\draw[filled] (-128.99391564222563,-170.65330426958369) circle (29.3466957304163);
\draw[filled] (71.96477145089877,29.789804109680357) circle (29.711264935222385);
\draw[filled] (-128.02449305346917,29.68110691533845) circle (30.07141462931726);
\draw[filled] (-68.71576800097026,169.57269825361348) circle (30.42730174638653);
\draw[filled] (169.2209258533822,72.51003262745165) circle (30.779074146617827);
\draw[filled] (-31.205410509051475,-129.20305692257935) circle (31.126871334593503);
\draw[filled] (-168.52917489423447,124.6494215053322) circle (31.47082510576549);
\draw[filled] (-168.18893986999146,-71.6150614647661) circle (31.811060130008556);
\draw[filled] (32.069155305121186,-77.84683270065129) circle (32.14769447957911);
\draw[filled] (127.75165394491246,-32.402792474890866) circle (32.48084010781898);
\draw[filled] (120.9450285616189,167.18939671589314) circle (32.81060328410684);%
        \end{tikzpicture}%

    \caption{Example packings of various sets of circles in a square produced by Split Packing.}
    \label{fig:example-circles-in-square}
\end{figure}
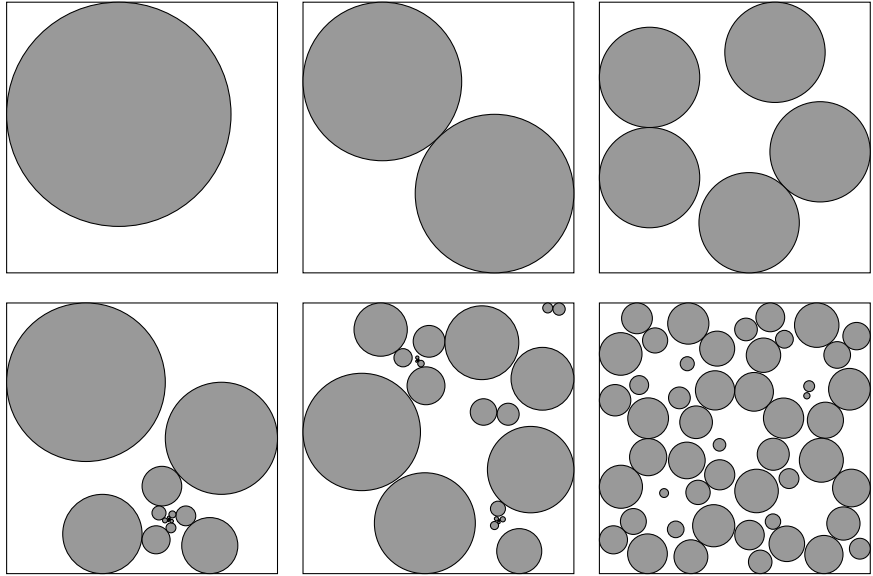

\section{Weighted Greedy Splitting}

The second main result of this paper is an algorithm to pack into not necessarily symmetric, non-acute triangles with critical density.
For this, it is necessary to split the sets of circles recursively into two groups of unequal target area.
In the next sections, we introduce a weighted variant of the Split Packing algorithm.

\Cref{alg:weighted-split} behaves like \cref{alg:split}, except that it splits the sets of circles into two groups according to the \emph{split key} $F$, which determines the targeted ratio of the resulting groups' combined areas.

If we wanted to split $C$ into equally sized halves, we could choose the tuple $(1,1)$.
The tuple $(\frac 1 2,\frac 1 2)$ would give the same result.
For asymmetric containers, we may want to target a different ratio.
For example, if we wanted to make one group three times as large as the other, we could use the tuple $(1,3)$.

In the simplest case, the split key will actually describe the desired areas of the two groups, and \textsc{WeightedSplit} puts the next circle into the group which has the smaller “relative filling level”.

\begin{algorithm}
    \caption{\textsc{WeightedSplit}$(C,F)$}
    \label{alg:weighted-split}
    \begin{algorithmic}
        \Require A set of circles $C$, sorted by size in descending order, and a split key\\$F = (f_1, f_2)$ with $f_i > 0$
        \Ensure Sets of circles $C_1, C_2$
        \State $C_1 \gets \emptyset$
        \State $C_2 \gets \emptyset$
        \ForAll{$c \in C$}
            \State $j = \argmin_{i} \frac{\mysum(C_i)}{f_i}$\Comment{Find the index of the more empty bucket.}
            \State $C_j \gets C_j \cup \{c\}$
        \EndFor
    \end{algorithmic}
\end{algorithm}

If the resulting groups' area ratio deviates from the area ratio targeted by the split key, we gain additional information about the “relatively larger” group: The more this group exceeds its targeted ratio, the larger the minimum size of its elements, allowing a “more rounded” subcontainer in the packing.
See \Cref{fig:hats-in-hat} for an illustration.

\begin{lemma}\label{th:weighted-min1}
    For any $C_1$ and $C_2$ produced by \textsc{WeightedSplit}$(C,(f_1,f_2))$:

    \begin{equation}
        \min(C_i) \ge \mysum(C_i) - f_i\frac{\mysum(C_j)}{f_j}.
    \end{equation}
\end{lemma}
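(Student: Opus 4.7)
The plan is to mimic the proof of \Cref{th:min1} almost verbatim, simply replacing absolute fill levels $\mysum(C_i)$ by the \emph{relative} fill levels $\mysum(C_i)/f_i$ that drive the \textsc{WeightedSplit} selection rule. The bound is non-trivial only for the index $i$ whose relative fill level is the larger one at termination, since for the other index the right-hand side is non-positive and the inequality is automatic; so I would dispose of that case with a single sentence at the start.

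For the substantive direction, I would fix the ``relatively larger'' group $C_i$ and argue by contradiction. Let $c$ denote the last circle inserted into $C_i$, and suppose toward a contradiction that $c < \mysum(C_i) - f_i\,\mysum(C_j)/f_j$. Rearranging, this is equivalent to
\begin{equation*}
    \frac{\mysum(C_i)-c}{f_i} > \frac{\mysum(C_j)}{f_j}.
\end{equation*}
The left-hand side is precisely the relative fill level of bucket $i$ an instant before $c$ was placed, and since $C_j$ has not changed between that moment and the end of the algorithm, the right-hand side is its relative fill level at that same moment. But then bucket $j$ was strictly more empty (in the relative sense), so the \textsc{WeightedSplit} rule $j=\argmin_i \mysum(C_i)/f_i$ would have routed $c$ into $C_j$, not $C_i$, giving the required contradiction. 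Hence $c \ge \mysum(C_i) - f_i\,\mysum(C_j)/f_j$.

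To upgrade this from a statement about $c$ to a statement about $\min(C_i)$, I would invoke the standing assumption that the input $C$ is processed in descending order of area. Every other element of $C_i$ was inserted before $c$ and is therefore at least as large as $c$, so $\min(C_i) = c$ and the desired inequality follows.

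I do not expect a real obstacle here: the only thing to double-check is the bookkeeping that $\mysum(C_j)$ at the end of the algorithm equals $\mysum(C_j)$ at the moment just before $c$ was placed into $C_i$, which is true precisely because $c$ was the \emph{last} element added to $C_i$ and no subsequent insertion into $C_j$ could have occurred without first revisiting bucket $i$'s fill level. This is essentially the same subtlety that makes the unweighted proof of \Cref{th:min1} work, and transports to the weighted setting without change.
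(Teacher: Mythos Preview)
Your approach is the same as the paper's: dispose of the trivial case where the right-hand side is non-positive, then argue by contradiction on the last element $c$ inserted into $C_i$, and finally use the descending-order assumption to pass from $c$ to $\min(C_i)$.

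There is one slip in your bookkeeping paragraph. Your claim that ``$C_j$ has not changed between that moment and the end of the algorithm'' is false: after the final insertion into $C_i$, any number of further elements may still land in $C_j$ (take $C=\{10,1,1,1\}$ with $f_1=f_2$). Fortunately the repair is immediate and actually simpler than what you wrote: fill levels are monotone non-decreasing, so $\mysum(C_j)$ at the instant before $c$ was placed is \emph{at most} its terminal value. Hence your displayed inequality
\[
\frac{\mysum(C_i)-c}{f_i} > \frac{\mysum(C_j)}{f_j}
\]
holds a fortiori with the earlier value of $\mysum(C_j)$ on the right, and the contradiction goes through. This monotonicity step is precisely what the paper uses: it sets $r := \mysum(C_j)/f_j$ as the terminal smaller relative level and observes that $C_j$'s level at any earlier time was at most $r$.
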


\begin{proof}
    If $\frac{\mysum(C_i)}{f_i} < \frac{\mysum(C_j)}{f_j}$, then the lemma says that $\min(C_i)$ is larger than a negative number, which is certainly true.

    Otherwise, set $r := \frac{\mysum(C_j)}{f_j}$.
    This value describes the smaller “relative filling level” by the time the algorithm ends.
    Now assume for contradiction $C_i$ contained an element smaller than $\mysum(C_i) - f_i r$.
    As the elements were inserted by descending size, all elements which were put into $C_i$ after that element would have to be at least as small.
    So the final element put into $C_i$ (let us call it~$c$) would be smaller than $\mysum(C_i) - f_i r$, as well.

    But this means that $$\frac{\mysum(C_i) - c}{f_i} > \frac{\mysum(C_i) - (\mysum(C_i) - f_i r)}{f_i} = r\text{,}$$ meaning that at the moment before $c$ was inserted, the relative filling level of $C_i$ would already have been larger than~$r$.
    Recall that $r$ is the smallest filling level of any group by the time the algorithm ends, meaning that at the time when $c$ is inserted, $C_i$'s filling level is already larger than the filling level of the other group.
    This is a contradiction, as the greedy algorithm would choose to put $c$ not into $C_i$, but into the other group with the smaller filling level in this case.
\end{proof}

%
%
%
%
%
%

We now define a term that encapsulates all properties of the sets of circles output by \textsc{WeightedSplit}.
These properties depend on the used split key $F$, and also on the combined area $a$ and the minimum circle size $b$ of the set of circles, which is why the term has three parameters.

\begin{definition}\label{def:conjugated}
    For any $0 \le b \le a$ and any split key $F = (f_1, f_2)$, we say that the tuples $(a_1, b_1)$, $(a_2, b_2)$ are \emph{$(a,b,F)$-conjugated} if

    \begin{itemize}
        \item $a_1 + a_2 = a$,
        \item $b_i \ge b$, and
        \item $b_i \ge a_i - f_i \frac{a_j}{f_j}$.
    \end{itemize}

    Two sets of circles $C_1$ and $C_2$ are $(a,b,F)$-conjugated if there are any $(a,b,F)$-conjugated tuples $(a_1, b_1)$ and $(a_2, b_2)$ so that $C_1 \in \C(a_1, b_1)$ and $C_2 \in \C(a_2, b_2)$.
\end{definition}

We can now associate this property with \textsc{WeightedSplit} in the following theorem.

\begin{theorem}\label{th:split-properties}
    For any $C \in \C(a,b)$ and any split key $F = (f_1, f_2)$,\\\textsc{WeightedSplit}$(C,F)$ always produces two $(a,b,F)$-conjugated subsets.
\end{theorem}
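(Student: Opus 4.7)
The plan is to exhibit explicit tuples $(a_1, b_1)$ and $(a_2, b_2)$ witnessing that $C_1$ and $C_2$ are $(a,b,F)$-conjugated per Definition~\ref{def:conjugated}. The naive choice $a_i := \mysum(C_i)$ fails because $\mysum(C_1) + \mysum(C_2) = \mysum(C)$ may be strictly less than $a$, violating the equality $a_1 + a_2 = a$, so the slack $s := a - \mysum(C) \ge 0$ has to be absorbed into $a_1$ and $a_2$. The key observation driving the proof is that the quantity $a_i - f_i\,a_j/f_j$ is invariant under the proportional inflation $(a_i,a_j) \mapsto (a_i + tf_i,\, a_j + tf_j)$, so the natural remedy is to distribute the slack in exactly that proportion. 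Accordingly, I would set
\begin{equation*}
    a_i \;:=\; \mysum(C_i) + \tfrac{f_i}{f_1+f_2}\,s, \qquad
    b_i \;:=\; \max\bigl\{\,b,\; a_i - f_i\,a_j/f_j\,\bigr\}.
\end{equation*}

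With these definitions, the verification reduces to a short sequence of checks. The identity $a_1 + a_2 = \mysum(C) + s = a$ is immediate, and both $b_i \ge b$ and $b_i \ge a_i - f_i\,a_j/f_j$ follow directly from the definition of the maximum, so $(a_1,b_1)$ and $(a_2,b_2)$ are $(a,b,F)$-conjugated. The inclusion $C_i \in \C(a_i,b_i)$ demands $\mysum(C_i) \le a_i$, which is clear since $s \ge 0$, together with $\min(C_i) \ge b_i$. For the latter, $\min(C_i) \ge b$ follows from $C_i \subseteq C \in \C(a,b)$, while the slack-invariance calculation yields
\begin{equation*}
    a_i - f_i\,a_j/f_j \;=\; \mysum(C_i) - f_i\,\mysum(C_j)/f_j,
\end{equation*}
so the remaining bound $\min(C_i) \ge a_i - f_i\,a_j/f_j$ is precisely the statement of Lemma~\ref{th:weighted-min1}.

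The only real obstacle I anticipate is pinning down the correct slack distribution in the first place: any non-proportional split would leave a nonzero residual in $a_i - f_i\,a_j/f_j$ that is of the wrong sign for at least one index, and the bound from Lemma~\ref{th:weighted-min1} would then no longer be strong enough to certify $\min(C_i) \ge b_i$. The degenerate situation where some $C_i$ is empty is handled trivially by the convention $\min(\emptyset) = +\infty$, under which every $\min$-inequality is vacuous.
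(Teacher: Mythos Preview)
Your proof is correct and follows the same route as the paper's: both arguments reduce to the trivial bound $\min(C_i)\ge b$ (since $C_i\subseteq C$) together with Lemma~\ref{th:weighted-min1} for the bound $\min(C_i)\ge a_i - f_i\,a_j/f_j$. The paper's own proof is terser and simply asserts that ``the combined areas of the subsets add up to $a$,'' which tacitly takes $a_i=\mysum(C_i)$ and thus only literally works when $\mysum(C)=a$; your explicit distribution of the slack $s=a-\mysum(C)$ in the ratio $f_1:f_2$ patches this cleanly, and your observation that the quantity $a_i-f_i\,a_j/f_j$ is invariant under that particular inflation is exactly what makes Lemma~\ref{th:weighted-min1} still apply verbatim after the inflation. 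So your argument is essentially the paper's, but more carefully written.
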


\begin{proof}
It follows directly from the algorithm that the combined areas of the subsets add up to $a$.
    As the minimum size of all circles in $C$ is $b$, this must also be true for the subsets, so $\min(C_i) \ge b$.
    The other minimum-size property follows from \Cref{th:weighted-min1}.
\end{proof}

As described, one way to think about conjugatedness is that it gives guarantees for the minimum sizes of the “larger” produced subset.
To provide an intuition of how the conjugatedness property is used in the later sections, we show several examples of shapes with $(a,b,F)$-conjugated parameters in \Cref{fig:conjugated}: $a_1$ and $a_2$ represent the area which can be packed into the respective shape, while $b_1$ and $b_2$ represent their “rounding”.
The shapes can always be packed because if one shape gets larger, it is rounded so much that it still fits inside the container.

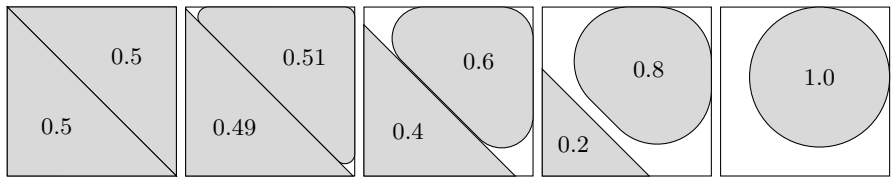
\begin{figure}
    \begin{tikzpicture}[scale=1.65]
        \hatsinsquare{0.5}
    \end{tikzpicture}
    \begin{tikzpicture}[scale=1.65]
        \hatsinsquare{0.49}
    \end{tikzpicture}
    \begin{tikzpicture}[scale=1.65]
        \hatsinsquare{0.4}
    \end{tikzpicture}
    \begin{tikzpicture}[scale=1.65]
        \hatsinsquare{0.2}
    \end{tikzpicture}
    \begin{tikzpicture}[scale=1.65]
        \hatsinsquare{0}
    \end{tikzpicture}

    \caption{The two shapes' parameters are $(a,b,F)$-conjugated, which is why they always can be packed. The numbers represent $a_1$ and $a_2$, and in this case are the areas of the shapes' incircles.}
    \label{fig:conjugated}
\end{figure}

\section{Weighted Split Packing}

We now state a weighted version of the Split Packing theorem, before we apply it to triangular containers in the next section:
If it is possible to find two subcontainers that fit in a given shape, and that can pack all possible subsets produced by \textsc{WeightedSplit} for a fixed split key $F$, it is possible to pack the original class of sets of circles into that shape.

\begin{theorem}[Weighted Split Packing]\label{th:weighted-split-packing}
    A shape $s$ is a $\C(a,b)$-shape if there is a split key $F$, so that for all $(a,b,F)$-conjugated tuples $(a_1, b_1)$ and $(a_2, b_2)$ one can find a $\C(a_1, b_1)$-shape and a $\C(a_2, b_2)$-shape which can be packed into $s$.
\end{theorem}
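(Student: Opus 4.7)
The plan is to mirror the proof of the unweighted Split Packing Theorem (\Cref{th:split-packing}) step by step, replacing each ingredient with its weighted analog. The unweighted proof used \textsc{Split} to partition $C$, then invoked \Cref{th:min1} to guarantee that the second subset had a minimum-size property matching what a rounded sub-hat demands. Here I would use \textsc{WeightedSplit} together with \Cref{th:split-properties} (which in turn rests on \Cref{th:weighted-min1}), with the notion of $(a,b,F)$-conjugatedness from \Cref{def:conjugated} serving as the uniform language for describing all possible outputs of the weighted split.

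Concretely, I would take an arbitrary $C \in \C(a,b)$ and apply \textsc{WeightedSplit}$(C, F)$, using the split key $F$ supplied by the hypothesis, obtaining subsets $C_1$ and $C_2$. By \Cref{th:split-properties}, the pair $C_1, C_2$ is $(a,b,F)$-conjugated, so \Cref{def:conjugated} yields tuples $(a_1, b_1)$ and $(a_2, b_2)$ which are $(a,b,F)$-conjugated in the tuple sense and satisfy $C_1 \in \C(a_1, b_1)$ and $C_2 \in \C(a_2, b_2)$. The hypothesis, instantiated at exactly these conjugated tuples, supplies a $\C(a_1, b_1)$-shape $s_1$ and a $\C(a_2, b_2)$-shape $s_2$ together with a packing of $s_1, s_2$ into $s$. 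By the defining property of a $\C(a_i, b_i)$-shape, $s_i$ packs $C_i$; replacing $s_1$ and $s_2$ inside $s$ by these packings yields a packing of the whole of $C$ into $s$. Since $C \in \C(a,b)$ was arbitrary, $s$ is a $\C(a,b)$-shape.

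The one place where care is needed — and the main conceptual obstacle, modest as it is — is the existential quantifier concealed in \Cref{def:conjugated}: \textsc{WeightedSplit} does not directly hand us the numerical tuples $(a_i, b_i)$, only the sets $C_i$, and it is \Cref{th:split-properties} that guarantees the existence of matching tuples. This is precisely why the hypothesis of the theorem must be phrased as a universal statement over \emph{all} $(a,b,F)$-conjugated tuples: we cannot in advance predict which conjugated tuple will be realized by any particular input $C$, so the subcontainer construction must succeed uniformly across the whole family. Once this universal quantification is respected, the argument is the same verbatim three-line induction-free reduction as in \Cref{th:split-packing}, and no further estimates are required.
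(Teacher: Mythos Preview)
Your proposal is correct and follows essentially the same approach as the paper: take an arbitrary $C\in\C(a,b)$, apply \textsc{WeightedSplit}$(C,F)$, invoke \Cref{th:split-properties} to certify that the resulting $C_1,C_2$ are $(a,b,F)$-conjugated, and then use the hypothesis to supply packable sub-shapes. The paper additionally remarks on the degenerate single-circle case (where one subset is empty) as a note on how the recursion terminates in practice, but this is not needed for the theorem as stated, and your discussion of the existential in \Cref{def:conjugated} is in fact more careful than the paper's own treatment.
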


\begin{proof}
    Consider an arbitrary $C \in \C(a,b)$.
    We use \textsc{WeightedSplit}$(C,F)$ to produce two subsets $C_1$ and $C_2$.
    We know from \Cref{th:split-properties} that those subsets will always be $(a,b,F)$-conjugated.
    So if we can indeed find two shapes which can pack these subsets, and if we can pack these two shapes into $s$, then we also can pack the original set of circles $C$ into $s$.

Note that in the special case that $C$ consists of a single circle, \textsc{WeightedSplit}$(C,F)$ will yield two sets of circles $C_1 = \{C\}$ and $C_2 = \emptyset$.
    For this case, \Cref{th:split-properties} guarantees a minimum size of $a$ for the first group, and the associated $\C(a_1,b_1)$-shape is just an $a$-circle.
    This means that we can simply place the input circle in the container, and stop the recursion at this point.
\end{proof}

Written as an algorithm, Weighted Split Packing looks like this:

\begin{algorithm}
    \caption{\textsc{WeightedSplitpack}$(s,C)$}
    \begin{algorithmic}
        \Require A $\C(a,b)$-shape $s$ and a set of circles $C \in \C(a,b)$, sorted by size in descending order
        \Ensure A packing of $C$ into $s$
        \State Determine split key $F$ for shape $s$
        \State $(C_1, C_2) \gets \textsc{WeightedSplit}(C,F)$ \Comment{See \Cref{alg:weighted-split}.}
        \ForAll{$i \in \{1,2\}$}
            \State $a_i \gets \mysum(C_i)$
            \State $b_i \gets$ minimum guarantee for $C_i$ \Comment{See \Cref{th:weighted-min1}.}
            \State Determine a $\C(a_i, b_i)$-shape $s_i$
            \State \Call{Splitpack}{$s_i, C_i$}
        \EndFor
        \State Pack $s_1$, $s_2$, and their contents into $s$
    \end{algorithmic}
\end{algorithm}

%

%

The analysis of the Weighted Split Packing approach follows exactly the same lines as in the unweighted version, see \cref{sec:analysis}.


\begin{theorem}
    Weighted Split Packing requires $\mathcal{O}(n)$ basic geometric constructions and $\mathcal{O}(n^2)$ numerical operations.
\end{theorem}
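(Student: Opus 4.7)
The plan is to mirror the proof for the unweighted case in \Cref{sec:analysis}, since the recursion structure of \textsc{WeightedSplitpack} is identical in shape to that of \textsc{Splitpack}: at each internal node of the recursion tree, a set of circles is partitioned into two nonempty subsets (unless the input is a single circle, which is a leaf that terminates the recursion by the special case handling in \Cref{th:weighted-split-packing}).

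First I would bound the number of geometric constructions. The recursion tree, viewed with the input circles as its leaves, is a full binary tree with exactly $n$ leaves and therefore $n-1$ internal nodes, giving $2n-1$ nodes in total. Each internal node corresponds to determining a subcontainer shape $s_i$ and placing two such shapes inside a parent shape; each leaf corresponds to placing a single circle into its enclosing shape. Summing over nodes yields $\mathcal{O}(n)$ basic geometric constructions.

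Next I would bound the numerical cost. Sorting the input set of circles once up front costs $\mathcal{O}(n \log n)$. Each call to \textsc{WeightedSplit} on a set $C'$ runs in $\mathcal{O}(|C'|)$ time, because the inner \textbf{argmin} is taken over only two buckets and so can be evaluated in constant time per circle. The worst-case shape of the recursion tree (as in the unweighted case) occurs when each split peels off a single element, yielding a sum
\begin{equation*}
    t_{\textsc{WeightedSplit}} = n + (n-1) + (n-2) + \dots + 1 \in \mathcal{O}(n^2).
\end{equation*}
Adding the $\mathcal{O}(n)$ per-node work for computing the areas $a_i$, the minimum-size guarantees $b_i$ from \Cref{th:weighted-min1}, and the split key $F$, the total numerical cost remains $\mathcal{O}(n^2)$.

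The only point that differs from the unweighted proof is ensuring that the weighted recursion really does terminate in $n$ leaves — i.e., that \textsc{WeightedSplit} never returns the entire input in one bucket while the other remains empty (outside the base case $|C|=1$). This follows because the first circle placed makes one bucket's relative filling level strictly positive while the other remains zero, forcing the next circle into the empty bucket; so both buckets are nonempty whenever $|C|\ge 2$, and the recursion tree is genuinely binary with $n$ leaves. Everything else proceeds verbatim as in \cref{sec:analysis}, so no new obstacle arises.
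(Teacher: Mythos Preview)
Your proposal is correct and follows essentially the same approach as the paper, which simply states that the analysis ``follows exactly the same lines as in the unweighted version'' and refers back to \cref{sec:analysis}. Your write-up is in fact more detailed than the paper's own treatment: you make explicit the constant-time \textbf{argmin} over two buckets, the per-node bookkeeping for $a_i$, $b_i$, and $F$, and the observation that \textsc{WeightedSplit} never leaves a bucket empty when $|C|\ge 2$, none of which the paper spells out.
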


%

\begin{theorem}
    Weighted Split Packing, when used to pack circles into a $\C(a,b)$-shape of area $A$, is an approximation algorithm with an approximation factor of $\frac{A}{a}$, compared to the container of minimum area.
\end{theorem}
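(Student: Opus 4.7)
The plan is to mirror the structure of the earlier approximation theorem for (unweighted) Split Packing, adapting it to the weighted setting. The two ingredients we need are polynomial runtime, which we already get from the preceding theorem on geometric constructions and numerical operations, and a comparison of the area produced by the algorithm against any feasible container.

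First I would fix a set of circles $C$ of combined area $a$, together with the shape family under consideration, and let $\text{OPT}$ denote the area of the smallest container of that shape into which $C$ can be packed. The elementary but essential observation is that a packing cannot decrease total area, so any feasible container must satisfy $\text{OPT} \ge \mysum(C) = a$. Next I would observe that Weighted Split Packing, by its construction and the hypothesis that $s$ is a $\C(a,b)$-shape, produces a packing into a container of area exactly $A$; call this $\text{ALG} = A$. Dividing yields the desired bound
\begin{equation*}
    \frac{\text{ALG}}{\text{OPT}} \;=\; \frac{A}{\text{OPT}} \;\le\; \frac{A}{a}.
\end{equation*}

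Combined with the polynomial runtime bound from the previous theorem, this establishes the claimed approximation factor. There is no real obstacle here: the argument is essentially identical to the one for unweighted Split Packing given in \Cref{sec:analysis}, and the weighted variant changes only the internal splitting strategy, not the interface guarantee that $C$ is packed into a container of area $A$ whenever that container is a $\C(a,b)$-shape. The only thing to double-check is that \Cref{th:weighted-split-packing} indeed certifies the produced container as a $\C(a,b)$-shape whenever the recursive subcontainer choices are feasible, which is exactly what its hypothesis provides.
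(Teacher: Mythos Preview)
Your proposal is correct and matches the paper's approach exactly: the paper does not give a separate proof for the weighted version but simply remarks that the analysis follows the same lines as the unweighted case in \Cref{sec:analysis}, whose proof is precisely the $\text{OPT} \ge a$, $\text{ALG} = A$ argument you reproduce. There is nothing to add.
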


%
%

\section{Packing into Hats}

As a preparation for the results about asymmetric triangles, we re-introduce (general) \emph{hats}.

\begin{definition}
    For each $0 \le b \le a$, an \emph{$(a,b)$-hat} is a non-acute triangle with an incircle of area $a$, whose corners are rounded to the radius of a $b$-circle, see \Cref{fig:hat}.
    Call the two smaller angles of the original triangle \emph{left-angle} and \emph{right-angle}.
    If we say \emph{right hat}, the hat is based on a right triangle.
\end{definition}

\renewcommand\defaulta{30}
\renewcommand\defaultb{40}
\renewcommand\defaultr{0.2}
\renewcommand\defaultx{0.7}

        \begin{figure}%
            \begin{tikzpicture}[scale=2.5]%
                \pgfmathsetmacro{\a}{(\defaulta)}
\pgfmathsetmacro{\b}{(\defaultb)}
\pgfmathsetmacro{\round}{(\defaultr)}

\hatshape{1}{\round}{\a}{\b}

\draw[dashed] (rightcenter) circle(\s) node {$b$};
\draw[dashed] (leftcenter) circle(\s) node {$b$};
\draw[dashed] (topcenter) circle(\s) node {$b$};
\draw[dashed] (midcenter) circle(\r) node {$a$};

\path (leftcenter) +(-20pt,0) node {\emph{left-angle}};
\path (rightcenter) +(20pt,0) node {\emph{right-angle}};
            \end{tikzpicture}%
            \caption{An $(a,b)$-hat.}%
            \label{fig:hat}%
        \end{figure}
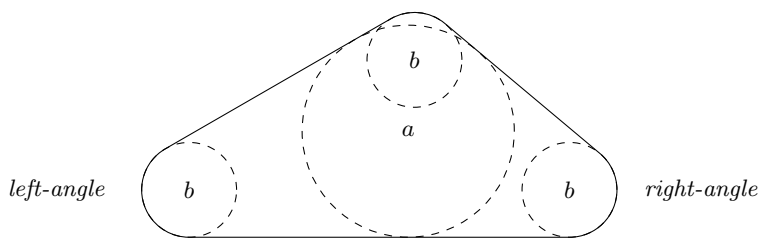%

We now proceed to show that all sets of circles with a combined area of up to $a$ with a minimum circle size of $b$ can be packed into an $(a,b)$-hat.

First, it is important to choose the correct split key when packing into asymmetric hats.
We aim for a group ratio that leads to a cut through the hat's tip if it is reached exactly.

\begin{definition}\label{def:hat-split-key}
    To get a hat's \emph{associated split key}, split the underlying triangle orthogonally to its base through its tip, and inscribe two circles in the two sides, see \Cref{fig:hat-f}.
    The areas of these circles are the two components of the hat's split key.
\end{definition}

        \begin{figure}%
            \begin{tikzpicture}[scale=2.5]%
                \input{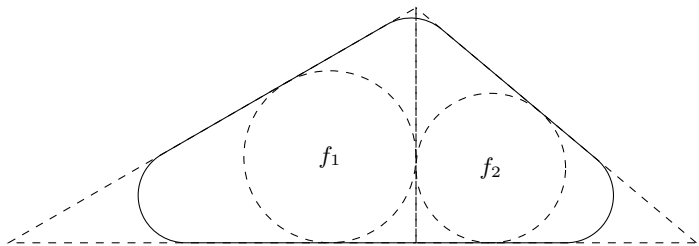}%
            \end{tikzpicture}%
            \caption{A hat's \emph{associated split key} equals $(f_1, f_2)$}%
            \label{fig:hat-f}%
        \end{figure}%

\begin{lemma}\label{th:hats-in-hat}
    Consider an $(a,0)$-hat with the associated split key $F = (f_1, f_2)$, and call its left- and right-angles $\alpha$ and $\beta$.
    For all $(a,0,F)$-conjugated tuples $(a_1, b_1)$ and $(a_2, b_2)$, the following two shapes can be packed into the hat.
    \begin{itemize}
        \item a right $(a_1,b_1)$-hat with a right-angle of $\alpha$ and
        \item a right $(a_2,b_2)$-hat with a left-angle of $\beta$.
    \end{itemize}
\end{lemma}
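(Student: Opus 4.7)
The plan is to follow the template of \Cref{th:right-hats-in-right-hat}, generalising from the isosceles symmetric case to an asymmetric triangle via the associated split key. I place the right $(a_1, b_1)$-hat with its right-angle corner (of size $\alpha$) at the container's left corner and its base flush against the container's base; symmetrically, the right $(a_2, b_2)$-hat sits with its left-angle corner (of size $\beta$) at the container's right corner. The two $90^\circ$ ``tips'' then point upward into the container. With this placement fixed, the packing is valid iff (i) the two hats do not overlap horizontally, and (ii) each hat individually lies within the container.

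For (i), I would bound the sum of the hats' base extents. Each extent scales as $c_i \sqrt{a_i}$ with an angle-dependent constant, and by the very definition of $F = (f_1, f_2)$ (\Cref{def:hat-split-key}), the balanced choice $a_i = f_i \cdot a / (f_1 + f_2)$ makes the two hats meet exactly along the perpendicular dropped from the top vertex: in this case their incircles coincide with the incircles of the two right sub-triangles defining $f_1$ and $f_2$. The key observation is that $c_i^2$ is proportional to $f_i$, so the strictly concave map $a_1 \mapsto c_1 \sqrt{a_1} + c_2 \sqrt{a - a_1}$ attains its maximum exactly at the balanced split; any other split strictly reduces the combined base length.

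For (ii), I would check that each hat's ``corner-diagonal'' along the slanted container side stays within that side's length. At the balanced split, this diagonal reaches precisely to the container's top vertex. For unbalanced splits, the relatively larger hat overshoots and must be trimmed — this is where the rounding $b_i$ enters. Conjugatedness guarantees $b_i \ge a_i - f_i a_j/f_j$, and I would verify that this rounding is exactly what clips the overshoot, in the same way that rounding was handled in the passage from \Cref{th:right-hats-in-right-hat} to \Cref{th:rounded-right-hats-in-right-hat} in the symmetric setting.

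The main obstacle will be verifying the individual-fit inequality for the oversized hat in closed form, now depending on two independent angle parameters $\alpha, \beta$ in addition to the split ratio. In the symmetric proof this reduced to locating a single extremum and checking two endpoints. Here I expect to again differentiate with respect to $a_1$, find the worst-case split ratio as an explicit function of $f_1, f_2$ (mirroring the critical ratio $\tfrac{1}{4}(3 - \sqrt 2)\,a$ in \Cref{th:right-hats-in-right-hat}), and then show that the resulting inequality remains uniformly valid over all admissible $\alpha, \beta$ with $\alpha + \beta \le 90^\circ$. This angle-uniform verification, cleanly handled by the algebraic structure of the split key, is the technical crux of the argument.
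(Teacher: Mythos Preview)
Your overall plan---place the hats in the two base corners, verify non-overlap, then individual containment---is the paper's plan, but your argument for non-overlap has a genuine gap. The claim that $c_i^2$ is proportional to $f_i$, so that the concave map $a_1\mapsto c_1\sqrt{a_1}+c_2\sqrt{a-a_1}$ is maximised precisely at the split-key ratio where the hats ``meet exactly along the perpendicular'', is only correct when the container's apex is a right angle. For a strictly obtuse apex one has $f_1+f_2>a$: at your balanced choice $a_i=f_ia/(f_1+f_2)$ the sub-hats are strictly smaller than the two halves and do \emph{not} meet, and the true maximiser sits at $a_1:a_2=c_1^2:c_2^2\neq f_1:f_2$. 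The paper patches this with \Cref{th:overlap} plus a Pythagorean comparison: the maximum combined projection width is always $\sqrt{a(p_1^2+p_2^2)}$, and one checks $w^2\ge a(p_1^2+p_2^2)$ directly, with equality at a right apex and growing slack as the apex opens up. (A small geometric correction as well: in the intended placement the $90^\circ$ corners of the sub-hats lie on the base, adjacent in the middle; the upward-pointing tips carry angles $90^\circ-\alpha$ and $90^\circ-\beta$.)

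For individual containment your extremum-then-sweep-over-$(\alpha,\beta)$ plan diverges from the paper and misses its key simplification. The paper introduces two length/$\sqrt{\text{area}}$ ratios $d,e$, constant on similarity classes and linked by $e=d\sqrt{f_i/a}$; this renders the overshoot condition angle-free: $\sqrt{a_i}-(1-\sqrt{f_i/a})\sqrt{b_i}\le\sqrt{f_i}$. Combining conjugatedness with $f_1+f_2\ge a$ strengthens the rounding guarantee to $b_i\ge a(a_i-f_i)/(a-f_i)$, after which a direct chain of algebra collapses the inequality to $a_i\le a$---no extremum computation, no residual dependence on $\alpha,\beta$. Your approach could conceivably be forced through, but without the similarity trick the two angle parameters will not decouple, and the verification will be far messier than the one-variable endpoint check in \Cref{th:right-hats-in-right-hat}.
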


The proof of this theorem is rather technical.
See \Cref{fig:hats-in-hat} for an intuition of what the resulting hats look like.
Note that, as the hats' incircles are getting larger than the targeted area ratio, their corners become more rounded so that they don't overlap the container's boundary.

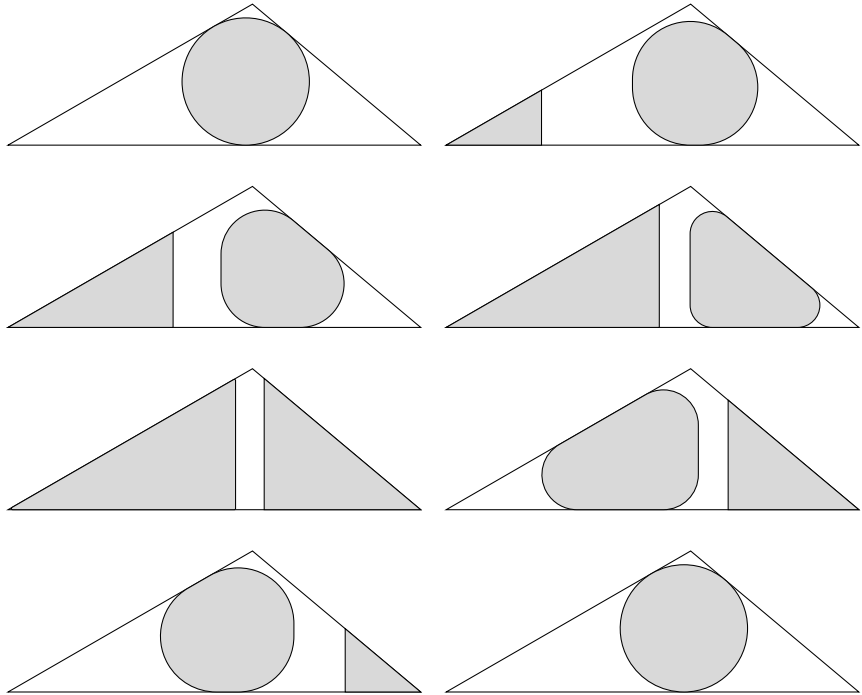
\begin{figure}
        \begin{tikzpicture}[scale=1.5,baseline={([yshift={-\ht\strutbox}]current bounding box.north)},outer sep=0pt,inner sep=0pt]
            \hatsinhatnotext{\defaulta}{\defaultb}{1}{0}
        \end{tikzpicture}
        ~
        \begin{tikzpicture}[scale=1.5,baseline={([yshift={-\ht\strutbox}]current bounding box.north)},outer sep=0pt,inner sep=0pt]
            \hatsinhatnotext{\defaulta}{\defaultb}{0.9}{0}
        \end{tikzpicture}

        \vspace{5mm}

        \begin{tikzpicture}[scale=1.5,baseline={([yshift={-\ht\strutbox}]current bounding box.north)},outer sep=0pt,inner sep=0pt]
            \hatsinhatnotext{\defaulta}{\defaultb}{0.7}{0}
        \end{tikzpicture}
        ~
        \begin{tikzpicture}[scale=1.5,baseline={([yshift={-\ht\strutbox}]current bounding box.north)},outer sep=0pt,inner sep=0pt]
            \hatsinhatnotext{\defaulta}{\defaultb}{0.5}{0}
        \end{tikzpicture}

        \vspace{5mm}

        \begin{tikzpicture}[scale=1.5,baseline={([yshift={-\ht\strutbox}]current bounding box.north)},outer sep=0pt,inner sep=0pt]
            \hatsinhatnotext{\defaulta}{\defaultb}{0.43}{0}
        \end{tikzpicture}
        ~
        \begin{tikzpicture}[scale=1.5,baseline={([yshift={-\ht\strutbox}]current bounding box.north)},outer sep=0pt,inner sep=0pt]
            \hatsinhatnotext{\defaulta}{\defaultb}{0.3}{0}
        \end{tikzpicture}

        \vspace{5mm}

        \begin{tikzpicture}[scale=1.5,baseline={([yshift={-\ht\strutbox}]current bounding box.north)},outer sep=0pt,inner sep=0pt]
            \hatsinhatnotext{\defaulta}{\defaultb}{0.1}{0}
        \end{tikzpicture}
        ~
        \begin{tikzpicture}[scale=1.5,baseline={([yshift={-\ht\strutbox}]current bounding box.north)},outer sep=0pt,inner sep=0pt]
            \hatsinhatnotext{\defaulta}{\defaultb}{0}{0}
        \end{tikzpicture}
    \caption{Hat-in-hat packings for different ratios of $a_1$ and $a_2$.}
    \label{fig:hats-in-hat}
\end{figure}

As a preparation for the proof, we establish the following lemma.

\begin{lemma}\label{th:overlap}
    Place two circles of combined area $a$ in two corners of a triangle, like in \Cref{fig:hats-overlap}.
    Let $w$ be the length of the connecting side of the triangle.
    Now define $p_1$ and $p_2$ to be the “projection factors”, so that, when projecting circle $a_i$ down onto the connecting side, the distance between the triangle's corner and the far point of the projection is $\sqrt{a_i}p_i$.
    The two projections do not intersect if

    \begin{equation}
        w \ge \sqrt{a(p_1^2 + p_2^2)}.
    \end{equation}
\end{lemma}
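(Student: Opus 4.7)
The plan is to reduce the geometric non-overlap condition to a one-dimensional inequality along the connecting side, then recognize it as a direct instance of the Cauchy--Schwarz inequality.

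First I would translate the definition of $p_i$ into an explicit overlap criterion: since the projection of the $a_i$-circle onto the connecting side is an interval of length $\sqrt{a_i}\,p_i$, starting at the respective triangle corner, the two projections are disjoint (and hence the two circles themselves, which lie above these projections, have disjoint horizontal extents) precisely when
\begin{equation}
    \sqrt{a_1}\,p_1 + \sqrt{a_2}\,p_2 \;\le\; w.
\end{equation}
So the lemma reduces to showing that the hypothesis $w \ge \sqrt{a(p_1^2 + p_2^2)}$ already implies this inequality.

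Next I would apply the Cauchy--Schwarz inequality to the vectors $(\sqrt{a_1},\sqrt{a_2})$ and $(p_1,p_2)$:
\begin{equation}
    \bigl(\sqrt{a_1}\,p_1 + \sqrt{a_2}\,p_2\bigr)^2 \;\le\; (a_1+a_2)(p_1^2+p_2^2) \;=\; a\,(p_1^2 + p_2^2),
\end{equation}
using $a_1+a_2 = a$. Taking square roots gives $\sqrt{a_1}\,p_1 + \sqrt{a_2}\,p_2 \le \sqrt{a(p_1^2+p_2^2)} \le w$, which by the criterion above proves that the projections do not intersect.

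There is no real obstacle; the only thing to be careful about is justifying the first reduction, namely that disjointness of the horizontal projections is what \emph{really} matters for the forthcoming application (the two circles sit in opposite corners, so they cannot overlap unless their projections do). After that, the combined-area hypothesis $a_1 + a_2 = a$ is exactly what lets Cauchy--Schwarz close the gap.
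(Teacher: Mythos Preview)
Your proof is correct and follows the same overall structure as the paper: reduce non-overlap of the projections to the one-dimensional inequality $\sqrt{a_1}\,p_1 + \sqrt{a_2}\,p_2 \le w$, then show that the left-hand side is at most $\sqrt{a(p_1^2+p_2^2)}$.

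The only difference is in how that last bound is obtained. The paper treats $w'(a_1)=\sqrt{a_1}\,p_1+\sqrt{a-a_1}\,p_2$ as a function of $a_1$, finds its maximizer by calculus, and evaluates the maximum to be $\sqrt{a(p_1^2+p_2^2)}$. You instead recognize the bound as an immediate instance of Cauchy--Schwarz applied to $(\sqrt{a_1},\sqrt{a_2})$ and $(p_1,p_2)$. Your route is shorter and sidesteps the optimization entirely; the paper's route has the minor advantage of identifying the extremal split $a_1=\frac{p_1^2}{p_1^2+p_2^2}\,a$ (showing the bound is tight), though that information is not actually used downstream.
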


\begin{proof}
    Let $w'(a_1) = \sqrt{a_1}p_1 + \sqrt{a-a_1}p_2$ be the combined width of both projections.
    This function has its global maximum at $a_1 = \frac{p_1^2}{p_1+p_2}a$, and the maximum value is $\sqrt{a(p_1^2+p_2^2)}$.
    If $w$ is at least as large as this value, the two projections do never intersect.
\end{proof}

        \begin{figure}%
            \begin{tikzpicture}[scale=2.5]%
                \input{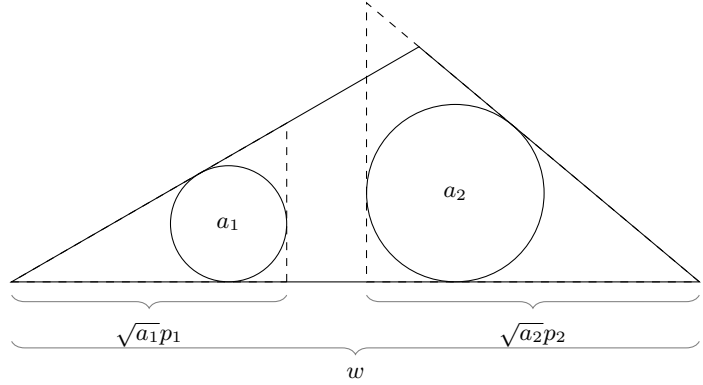}%
            \end{tikzpicture}%
            \caption{The circles' projections do not overlap if $w \ge \sqrt{a(p_1^2 + p_2^2)}$.}%
            \label{fig:hats-overlap}%
        \end{figure}%

We can now proceed to prove \Cref{th:hats-in-hat}.

\begin{proof}
    Place the hats' tips at the bottom of the container hat, rotate their $\alpha$- and $\beta$-angles toward the container's matching angles and push them as far to the left/right as possible.
    \Cref{fig:hats-in-hat} illustrates how these packings look like for different values of $a_1$ and~$a_2$.

    This way of placing the two hats results in a valid packing if (1) the hats do not overlap each other and (2) the hats fit into the hat individually.
    We are going to prove these two properties separately.

    We first want to show that the hats do not overlap each other.
    If the hats' projections onto the container's base do not overlap, we found a separating axis and can be sure that the hats do not overlap, as well.
    Furthermore, because the hats' incircles touch the rightmost part of the left hat's boundary and the leftmost part of the right hat's boundary, it suffices to show that the projections of the hats' incircles onto the container's base do not overlap.

    We want to use \Cref{th:overlap} for this proof, so we need to make a statement about the projection factors $p_1$ and $p_2$ in \Cref{fig:hatlr}: If the top angle is a right angle, we can see that $\sqrt{a}p_1 = x$ and $\sqrt{a}p_2 = y$.
    So by the Pythagorean theorem, $w^2 = (\sqrt{a}p_1)^2 + (\sqrt{a}p_2)^2$.
     If the top angle is more obtuse, but the incircle's center stays at the same $x$-coordinate (like the dotted variant in \Cref{fig:hatlr}),
    both $\sqrt{a}p_1$ and $\sqrt{a}p_2$ only get smaller, so for each hat, $w^2 \ge (\sqrt{a}p_1)^2 + (\sqrt{a}p_2)^2$, which is equivalent to $w \ge \sqrt{a(p_1^2+p_2^2)}$.
    By \Cref{th:overlap}, this means that the projections of the circles do not overlap, which in turn means that the two hats do not overlap.

        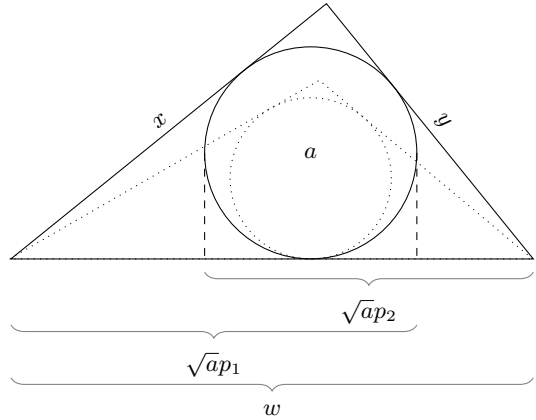
\begin{figure}%
            \begin{tikzpicture}[scale=2.5]%
                \pgfmathsetmacro{\aa}{(\defaulta)}
\pgfmathsetmacro{\bb}{(\defaultb)}
\pgfmathsetmacro{\a}{39}
\pgfmathsetmacro{\b}{51}

\hatshape{1}{0}{\a}{\b}
\draw (midcenter) circle(\r) node {$a$};
\draw[draw=none] (left) -- node[sloped,above] {$x$} (top);
\draw[draw=none] (top) -- node[sloped,above] {$y$} (right);

\coordinate (lc) at ($(midcenter)-(\r,0)$);
\coordinate (rc) at ($(midcenter)+(\r,0)$);
\coordinate (l) at ($(midcenter)+(-\r,-\r)$);
\coordinate (r) at ($(midcenter)+(\r,-\r)$);

\draw[dashed] (lc) -- (l);
\draw[dashed] (rc) -- (r);

\draw[bracket] ([yshift=-2pt]right) -- node[below] {$\sqrt{a}p_2$} ([yshift=-2pt]l);
\draw[bracket] ([yshift=-10pt]r) -- node[below] {$\sqrt{a}p_1$} ([yshift=-10pt]left);
\draw[bracket] ([yshift=-18pt]right) -- node[below] {$w$} ([yshift=-18pt]left);

\begin{scope}[scale=0.76,shift={(0,-0.178)}]
    \hatshape[dotted]{1}{0}{\aa}{\bb}
    \draw[dotted] (midcenter) circle(\r);
\end{scope}%
            \end{tikzpicture}%
            \caption{$w \ge \sqrt{a(p_1^2+p_2^2)}$ holds for each non-acute triangle.}%
            \label{fig:hatlr}%
        \end{figure}%

    The second property we need to show is that the hats fit into the container individually.
    Unfortunately, this part of the proof is going to be long and technical.

    If a hat's incircle is not larger than the incircle of the container hat's side, it clearly fits into the container because it is a subset of that side (like all the non-rounded hats in \Cref{fig:hats-in-hat}).
    So let us assume $a_i > f_i$.

    In this proof, we are going to use two different length-area ratios, which are illustrated in \Cref{fig:hat-poke-f}.
    The first one is $d$, which describes the ratio between the length of the triangle's right leg and the square root of the area of its right incircle $f_i$.
    Note that for all triangles \emph{similar} to the right part of the container triangle, this ratio between the length of this edge and the square root of the incircle's area is a constant.
    The second ratio, $e$, is the ratio between the length of the same right leg and the square root of the incircle-area of the whole container triangle.
    Again, it is a constant for triangles similar to the given container triangle.
    Note that, in preparation for a generalization later in this section, we denote the triangle's incircle by $o$.
    From \Cref{fig:hat-poke-f} we can now observe that
    $e\sqrt{o} = d\sqrt{f_i}$, which is equivalent to $e = d\sqrt{f_i/o}$.

        \begin{figure}%
            \begin{tikzpicture}[scale=2.5]%
                \input{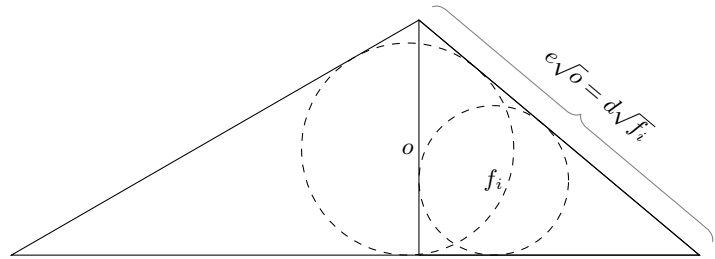}%
            \end{tikzpicture}%
            \caption{The ratios $d$ and $e$ are constant for all similar triangles.}%
            \label{fig:hat-poke-f}%
        \end{figure}%

    Moving forward, in \Cref{fig:hat-poke}, we display the situation when packing a hat into (\wlofg) the right leg of the container.
    $f_i$ is the relevant factor from the split key, $a_i$ is the hat's incircle and $b_i$ represents the hat's rounding.

        \begin{figure}%
            \begin{tikzpicture}[scale=2.5]%
                \input{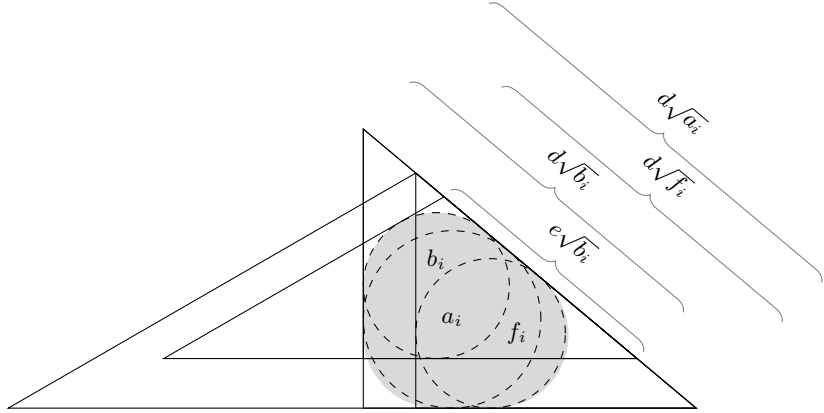}%
            \end{tikzpicture}%
            \caption{Various measurements when packing a rounded hat.}%
            \label{fig:hat-poke}%
        \end{figure}%

    The hat is placed in such a way that it will never overlap the bottom or the right leg of the containing triangle, so it is sufficient to show that it does not overlap the left leg.
    We can tell from \Cref{fig:hat-poke} that this does not happen if the length of the right side of the triangle the hat is based on ($d\sqrt{a_i}$), minus the length of the right side of the $(b_i,0)$-triangle similar to the containers right side ($d\sqrt{b_i}$), plus the length of the right side of the $(b_i,0)$-triangle similar to the container ($e\sqrt{b_i}$) is at most the length of the container's right leg ($d\sqrt{f_i}$).
    So the following condition has to hold:
    \begin{equation*}
        d\sqrt{a_i} - d\sqrt{b_i} + e\sqrt{b_i} \le d\sqrt{f_i}.
    \end{equation*}

    As previously observed, $e = d\sqrt{f_i/o}$:
    \begin{equation*}
        d\sqrt{a_i} - d\sqrt{b_i} + d\sqrt{f_i/o}\sqrt{b_i} \le d\sqrt{f_i}.
    \end{equation*}

    In our case, the incircle of the triangle has exactly the maximal area which we want to pack, so $o = a$.
    But even if $o \ge a$, the inequality is true if
    \begin{equation*}
        d\sqrt{a_i} - d\sqrt{b_i} + d\sqrt{f_i/a}\sqrt{b_i} \le d\sqrt{f_i}.
    \end{equation*}

    We can also divide by $d$ and factor out $\sqrt{b_i}$ to get the following:
    \begin{equation}\label{eq:tripoke}
        \sqrt{a_i} - (1-\sqrt{f_i/a})\sqrt{b_i} \le \sqrt{f_i}.
    \end{equation}

    Let $j$ be the index of the other hat to be packed.
    We know (from the conjugatedness) that the sum of both hats' incircles does not exceed the total area $a$, so $a_i + a_j \le a$.
    Also, $f_i + f_j \ge a$, as demonstrated in \Cref{fig:hat-f-sum}: In right triangles, $f_1 + f_2$ is exactly $a$, because as its two halves are \emph{similar} to the large triangle, the two halves' areas add up to the container triangle's area, and the ratio between the areas of a triangle and its incircle are constant.
    When making the upper angle more obtuse, but letting $f_1$ and $f_2$ stay the same, the incircle only shrinks (like the dotted variant in \Cref{fig:hat-f-sum}).

        \begin{figure}%
            \begin{tikzpicture}[scale=2.5]%
                \input{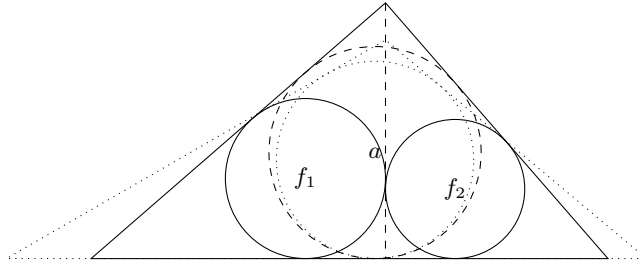}%
            \end{tikzpicture}%
            \caption{In non-acute triangles, $f_1 + f_2 \ge a$.}%
            \label{fig:hat-f-sum}%
        \end{figure}%

    Putting it together, by \Cref{th:split-properties}, our hat is rounded by
    $$b_i \ge a_i - f_i\frac{a_j}{f_j} \ge a_i - f_i\frac{a-a_i}{a-f_i} = \frac{a_i(a-f_i)-f_i(a-a_i)}{a-f_i} = a\frac{a_i-f_i}{a-f_i}.$$

    Inserting this into \Cref{eq:tripoke} yields
    $$\sqrt{a_i} - (1-\sqrt{f_i/a})\sqrt{a\frac{a_i-f_i}{a-f_i}} \le \sqrt{f_i}.$$

    Bringing the subtrahend to the right and squaring both sides (both are positive) yields
    $$a_i \le f_i + 2\sqrt{f_i}(1-\sqrt{f_i/a})\sqrt{a\frac{a_i-f_i}{a-f_i}} + (1-\sqrt{f_i/a})^2a\frac{(a_i-f_i)}{a-f_i}.$$

    Subtracting $f_i$ and dividing by $\sqrt{a_i-f_i}$ results in
    $$\sqrt{a_i-f_i} \le \frac{2\sqrt{f_i}(1-\sqrt{f_i/a})\sqrt{a}}{\sqrt{a-f_i}} + (1-\sqrt{f_i/a})^2a\frac{\sqrt{a_i-f_i}}{a-f_i}.$$

    After rearranging, we get
    $$\sqrt{a_i-f_i}\frac{(a-f_i)-(1-\sqrt{f_i/a})^2a}{a-f_i} \le \frac{2\sqrt{f_i}(1-\sqrt{f_i/a})\sqrt{a}}{\sqrt{a-f_i}},$$

    which simplifies to
    $$\sqrt{a_i-f_i}\frac{2\sqrt{f_ia}-2f_i}{a-f_i} \le \frac{2\sqrt{f_ia}-2f_i}{\sqrt{a-f_i}}.$$

    Multiplying with $\sqrt{a-f_i}$ yields
    $$\sqrt{a_i-f_i}\frac{2\sqrt{f_ia}-2f_i}{\sqrt{a-f_i}} \le \frac{2\sqrt{f_ia}-2f_i}{\sqrt{a-f_i}}\sqrt{a-f_i}.$$

    Finally, divide by the fraction to get
    \begin{equation*}
        \sqrt{a_i-f_i} \le \sqrt{a-f_i} \iff a_i - f_i \le a - f_i \iff a_i \le a.
    \end{equation*}

    From the conjugatedness we know that $a_i$ is less or equal than $a$, so \Cref{eq:tripoke} is true and the hat always fits into the container.
    This completes the proof of \Cref{th:hats-in-hat}.
\end{proof}

\label{endofproof} In the previous lemma, the container is always an $(a,0)$-hat, which is essentially a non-rounded triangle with an incircle of $a$.
The next lemma extends this idea to hats which are actually rounded.
It is identical to \Cref{th:hats-in-hat}, except that the rounding of the container hat is no longer 0, but $b$.

\begin{lemma}\label{th:rounded-hats-in-hat}
    Consider an $(a,b)$-hat with the associated split key $F = (f_1, f_2)$, and call its left- and right-angles $\alpha$ and $\beta$.
    For all $(a,b,F)$-conjugated tuples $(a_1, b_1)$ and $(a_2, b_2)$ with $a_1 + a_2 \le a$, the following two shapes can be packed into the hat.
    \begin{itemize}
        \item a right $(a_1,b_1)$-hat with a right-angle of $\alpha$ and
        \item a right $(a_2,b_2)$-hat with a left-angle of $\beta$.
    \end{itemize}
\end{lemma}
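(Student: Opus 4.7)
The plan is to deduce this lemma from \Cref{th:hats-in-hat} by the same rounding argument used to derive \Cref{th:rounded-right-hats-in-right-hat} from \Cref{th:right-hats-in-right-hat}. First I would observe that any $(a,b,F)$-conjugated pair of tuples $(a_1,b_1),(a_2,b_2)$ is also $(a,0,F)$-conjugated, since replacing $b$ by $0$ only weakens the condition $b_i \ge b$ and leaves the other two conjugatedness conditions unchanged. Applying \Cref{th:hats-in-hat} to the non-rounded $(a,0)$-hat therefore yields a valid placement of the prescribed right $(a_1,b_1)$-hat and right $(a_2,b_2)$-hat inside the container, with their $\alpha$- and $\beta$-corners at the matching base corners.

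I would then keep exactly this placement and show that it remains valid when the container's corners are rounded to radius $r := \sqrt{b/\pi}$. The rounded $(a,b)$-hat differs from the non-rounded triangle only by three small caps, one at each corner, each bounded by a circular arc of radius $r$, so it suffices to verify that neither inner hat intrudes into any of these caps. By conjugatedness, $b_1, b_2 \ge b$, so both inner hats are themselves rounded with radii at least $r$.

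At each of the container's two base corners, the inner hat placed there contributes a corner of the same angle rounded with radius $\sqrt{b_i/\pi} \ge r$. Comparing two rounding arcs inscribed in a common angle, the larger arc has its tangent points further from the vertex along both edges and its center (and hence the arc itself) further from the vertex along the angle bisector, so the inner hat's boundary near that corner lies entirely inside the container's rounded boundary. The main subtlety is the container's top corner, where neither inner hat contributes a matching corner: here the parts of the inner hats close to the top consist of straight edges sliding along the container's sides up to the tangent points of the inner hats' own rounded upper corners, which are themselves rounded with radius $\ge r$. A short angle computation using $\beta \le 90^\circ$ (so that the container's top angle $180^\circ-\alpha-\beta$ is at least the inner hat's upper angle $90^\circ-\alpha$, making $\tan(\frac{180^\circ-\alpha-\beta}{2}) \ge \tan(\frac{90^\circ-\alpha}{2})$) shows that these tangent points stay at least as far from the container's top vertex as the container's own rounding cap reaches, so no inner hat enters the top cap either. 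Combining the three corners completes the argument; see \Cref{fig:right-hats-rounding} for the analogous picture in the symmetric right-hat case.
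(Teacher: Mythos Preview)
Your proposal follows essentially the same route as the paper's proof: reduce to the $b=0$ case via \Cref{th:hats-in-hat} (using that $(a,b,F)$-conjugated implies $(a,0,F)$-conjugated), keep the same placement, and then argue that rounding the container does not destroy containment because both inner hats have all their corners rounded with radius at least $r$. The paper states this last step in a single sentence and a figure reference, whereas you go further and single out the container's top corner---where the inner hat's nearby corner has a strictly smaller angle and a different vertex---and justify it via the half-angle tangent comparison; this extra care is sound and fills in a detail the paper leaves implicit.
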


\begin{proof}
    \Cref{th:hats-in-hat} tells us that \Cref{th:rounded-hats-in-hat} is true for $b = 0$.
    Now the container's corners can be rounded to the radius of a $b$-circle, and we need to show that the two hats from the previous construction still fit inside.
    But all of the two hat's corners are also rounded to (at least) the same radius (see \Cref{th:split-properties}), so they will never overlap the container, see \Cref{fig:hats-rounding}.
\end{proof}

        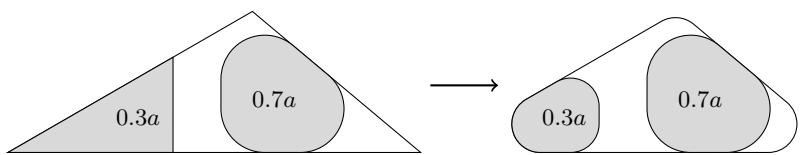
\begin{figure}%
            \begin{tikzpicture}[scale=1.5]%
                    \hatsinhat{\defaulta}{\defaultb}{\defaultx}{0}
\end{tikzpicture}
\begin{tikzpicture}[scale=1.5]
\draw[->,thick] (0,0) -- (0.6,0);
\draw[draw=none] (0,0) -- (0,-0.6);
\end{tikzpicture}
\begin{tikzpicture}[scale=1.5]
    \hatsinhat{\defaulta}{\defaultb}{\defaultx}{\defaultr}%
            \end{tikzpicture}%
            \caption{Rounding all hats' corners by the same radius does not affect the packing.}%
            \label{fig:hats-rounding}%
        \end{figure}%

With these preparations, we can apply Split Packing to (general) hats.

\begin{theorem}\label{th:hats}
    Given an $(a,b)$-hat, all sets of circles with a combined area of at most $a$ and a minimum circle size of at least $b$ can be packed into that hat.
\end{theorem}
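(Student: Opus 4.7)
The plan is to apply the Weighted Split Packing theorem (\Cref{th:weighted-split-packing}) with the hat's associated split key $F = (f_1, f_2)$ from \Cref{def:hat-split-key}, mirroring the structure of the proof of \Cref{th:right-hats}. I would set up induction on $n = |C|$, with the induction statement quantified uniformly over all non-acute triangle shapes and all parameter pairs $0 \le b \le a$, so that recursive calls into right-triangle-based sub-hats can invoke the same statement rather than a separate lemma.

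The base case $|C| = 1$ is immediate: the single circle has area at most $a$ and therefore fits inside the hat's incircle. For the inductive step with $n \ge 2$, I would run \textsc{WeightedSplit}$(C, F)$ to obtain subsets $C_1 \in \C(a_1, b_1)$ and $C_2 \in \C(a_2, b_2)$. By \Cref{th:split-properties} the tuples $(a_1, b_1)$ and $(a_2, b_2)$ are $(a,b,F)$-conjugated; moreover, after the first (largest) circle enters one bucket, the next one is forced into the other bucket, so both subsets contain strictly fewer than $n$ circles. \Cref{th:rounded-hats-in-hat} then exhibits a right $(a_1, b_1)$-hat and a right $(a_2, b_2)$-hat that can be packed inside the container $(a,b)$-hat. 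Applying the induction hypothesis to these two strictly smaller instances shows that each sub-hat is a $\C(a_i, b_i)$-shape and so packs its assigned subset; composing the three packings yields a packing of $C$ into the original hat.

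The main subtlety — and the reason the induction must be phrased uniformly over all non-acute triangle geometries — is that the sub-hats provided by \Cref{th:rounded-hats-in-hat} are based on possibly non-isosceles right triangles, so we cannot just invoke \Cref{th:right-hats}, which addresses only the isosceles-right case. The present theorem therefore both subsumes \Cref{th:right-hats} and relies on the induction being applicable to arbitrary right-triangle sub-hats. No new geometric computation is required beyond what is already packaged in \Cref{th:weighted-split-packing} and \Cref{th:rounded-hats-in-hat}; the work is entirely one of bookkeeping plus invocation of those results.
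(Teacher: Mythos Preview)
Your proposal is correct and follows essentially the same argument as the paper: induction on $|C|$ uniformly over all $(a,b)$-hats, with \textsc{WeightedSplit} under the hat's associated split key, \Cref{th:split-properties} for conjugatedness, and \Cref{th:rounded-hats-in-hat} to place the two sub-hats. Your explicit remark that the induction must range over all non-acute triangle geometries (so that the right-triangle sub-hats fall under the hypothesis) makes explicit something the paper leaves implicit, but otherwise the proofs coincide.
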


\begin{proof}
    We prove by induction that we can pack each $C \in \C(a,b)$ into the hat.

    If $C$ only consists of a single circle, it can be packed into the hat, as it is at most as big as the hat's incircle.

    Now assume that for any $0 \le b \le a$, any $(a,b)$-hat could pack all sets of circles into $\C(a,b)$ with at most $n$ circles.
    Consider a set of circles $C \in \C(a,b)$ containing $n+1$ circles.
    \Cref{def:hat-split-key} tells us how to compute the split key $F$.
    Then we know from \Cref{th:split-properties} that \textsc{Split} will partition $C$ into two subsets $C_1 \in \C(a_1, b_1)$ and $C_2 \in \C(a_2, b_2)$, whose parameters are $(a,b,F)$-conjugated.
    As \textsc{Split} can never return an empty set (except for $|C| = 1$, a case which we handled above), each subset will contain at most $n$ circles.
    We know from \Cref{th:rounded-hats-in-hat} that, for all pairs of $(a,b,F)$-conjugated tuples, we can find two hats with matching parameters which fit into the container hat.
    By assumption, these hats can now pack all sets from $\C(a_1, b_1)$ and $\C(a_2, b_2)$, respectively, which means that they can especially also pack $C_1$ and $C_2$.
    If we then pack the two hats into the container, we have constructed a packing of $C$ into the container hat.

    By induction, we can pack each $C \in \C(a,b)$ into the $(a,b)$-hat.
\end{proof}

Finally, we can state this paper's second central result.

\begin{theorem}\label{th:tri}
    Given a non-acute triangle with an incircle of area $a$, all sets of circles with a combined area of up to $a$ can be packed into the triangle, and this bound is tight.
    See \Cref{fig:example-circles-in-right} for some example packings.
    Expressed algebraically, for a triangle with side lengths $x$, $y$, and $z$, the critical density is

    \begin{equation}
        \phi_t = \pi \sqrt{\dfrac{(x+y-z)(z+x-y)(y+z-x)}{(x+y+z)^3}}.
    \end{equation}
\end{theorem}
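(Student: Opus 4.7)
The plan is to split the proof into three pieces: (a) the constructive packing direction, (b) the matching lower bound for tightness, and (c) the algebraic computation of the density.

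For (a), I would observe that a non-acute triangle with incircle of area $a$ is exactly an $(a,0)$-hat in the sense of the definition preceding \Cref{fig:hat}, i.e.\ a non-acute triangle with no rounding. Hence \Cref{th:hats}, applied with $b=0$, immediately yields that every $C \in \C(a,0)$ can be packed into it. Since $\C(a)=\C(a,0)$, this is precisely the upper direction of the statement, and no further work is needed here. This is the step that carries all the geometric and algorithmic content, but by this point in the paper it is essentially free: everything has been set up so that this is a one-line invocation.

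For (b), tightness is witnessed by a single circle whose area is strictly larger than $a$. Any circle that fits into the triangle is contained in the largest inscribed disk, which by definition is the incircle and has area exactly $a$; thus a single circle of area $a+\varepsilon$ cannot be packed for any $\varepsilon>0$. This shows that the bound $a$ on combined area cannot be increased.

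For (c), I would derive the algebraic form of the density $\phi_t$ by combining the classical formulas for the inradius and area of a triangle. Letting $s=(x+y+z)/2$ denote the semiperimeter, the inradius is $r = T/s$, where $T$ is the triangle's area; Heron's formula gives $T = \sqrt{s(s-x)(s-y)(s-z)}$. The critical density is the ratio of incircle area to triangle area:
\begin{equation*}
    \phi_t \;=\; \frac{\pi r^2}{T} \;=\; \frac{\pi T}{s^2} \;=\; \pi\sqrt{\frac{(s-x)(s-y)(s-z)}{s^3}}.
\end{equation*}
Substituting $s-x=(y+z-x)/2$, $s-y=(x+z-y)/2$, $s-z=(x+y-z)/2$ and $s=(x+y+z)/2$, the factors of $2$ cancel and one obtains the stated expression.

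The only genuine obstacle in this theorem would have been the packing direction, but all of the real work was discharged earlier: the definition of hats absorbs the non-acuteness assumption (since rounding is trivial when $b=0$, but the \emph{triangle} must be non-acute to be a hat at all), the weighted split key is chosen so as to align cuts with the asymmetric geometry, and \Cref{th:hats} packages the induction. The algebraic step is purely computational, and tightness is immediate from the definition of the incircle.
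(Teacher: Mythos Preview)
Your proposal is correct and follows essentially the same approach as the paper: invoke \Cref{th:hats} on the triangle viewed as an $(a,0)$-hat for the packing direction, use a single circle of area $a+\varepsilon$ for tightness, and compute the density via Heron's formula and $r=T/s$. The only cosmetic difference is that you simplify via $\phi_t = \pi T/s^2$ before substituting, whereas the paper computes the incircle area and triangle area separately and then takes the ratio.
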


\begin{proof}
    The triangle is an $(a,0)$-hat, which by \Cref{th:hats} is a $\C(a)$-shape.

    On the other hand, a single circle of area $a + \varepsilon$ cannot be packed, as the incircle is by definition the largest circle which fits into the triangle.

    As for the algebraic formulation of the critical density, the area of the triangle can be calculated using Heron's formula:

    $$\Delta(x,y,z) := \sqrt{s(s-x)(s-y)(s-z)} \text{ with } s = \frac{x+y+z}{2}.$$

    It is also known that the radius of the incircle of this triangle is

    $$R(x,y,z) := \frac{\Delta(x,y,z)}{s} \text{ with } s = \frac{x+y+z}{2},$$

    so the incircle has an area of

    $$I(x,y,z) = \pi R(x,y,z)^2 = \pi \frac{(x+y-z)(z+x-y)(y+z-x)}{4(x+y+z)}.$$

    Finally, the ratio between the areas of the circle and the triangle can be calculated to be

    $$\frac{I(x,y,z)}{\Delta(x,y,z)} = \pi \sqrt{\dfrac{(x+y-z)(z+x-y)(y+z-x)}{(x+y+z)^3}}\text{,}$$

    For a right isosceles triangle, this density is approximately 53.90\%.
\end{proof}

\begin{figure}[H]
    \begin{tabular}{cc}
        \begin{tikzpicture}[scale=0.0115]%
            \draw (-311,-200) -- (204,-200) arc (-90:38.99099404250547:0) (204,-200) -- (0,52) arc (38.99099404250548:129.0174830741425:0) -- (0,52) -- (-311,-200) arc (129.01748307414255:270.00000000000006:0);
\draw[filled] (-15.470497437021207,-95.29676904693574) circle (104.70323095306425);%
        \end{tikzpicture}%
     &
        \begin{tikzpicture}[scale=0.0115]%
            \draw (-311,-200) -- (204,-200) arc (-90:38.99099404250547:0) (204,-200) -- (0,52) arc (38.99099404250548:129.0174830741425:0) -- (0,52) -- (-311,-200) arc (129.01748307414255:270.00000000000006:0);
\draw[filled] (-102.02908469703046,-125.96363538094703) circle (74.03636461905295);
\draw[filled] (48.810922991897435,-125.96363538094698) circle (74.03636461905295);%
        \end{tikzpicture}%
     \\
        \begin{tikzpicture}[scale=0.0115]%
            \draw (-311,-200) -- (204,-200) arc (-90:38.99099404250547:0) (204,-200) -- (0,52) arc (38.99099404250548:129.0174830741425:0) -- (0,52) -- (-311,-200) arc (129.01748307414255:270.00000000000006:0);
\draw[filled] (-47.80576607204117,-47.00354090915053) circle (46.82470837498235);
\draw[filled] (105.84980973502546,-153.17529162501765) circle (46.82470837498235);
\draw[filled] (-178.83518858249627,-153.17529162501762) circle (46.82470837498235);
\draw[filled] (45.799240514690425,-78.9951767057803) circle (46.82470837498235);
\draw[filled] (-83.41139705254764,-153.17529162501765) circle (46.82470837498235);%
        \end{tikzpicture}%
     &
        \begin{tikzpicture}[scale=0.0115]%
            \draw (-311,-200) -- (204,-200) arc (-90:38.99099404250547:0) (204,-200) -- (0,52) arc (38.99099404250548:129.0174830741425:0) -- (0,52) -- (-311,-200) arc (129.01748307414255:270.00000000000006:0);
\draw[filled] (-74.11049026943945,-103.3414945585903) circle (74.03636461905297);
\draw[filled] (94.26475128148935,-147.64838452346785) circle (52.351615476532146);
\draw[filled] (36.895693738972454,-52.41092810832633) circle (37.01818230952649);
\draw[filled] (-237.1176243592553,-173.82419226173394) circle (26.175807738266073);
\draw[filled] (-184.08303840010288,-181.49090884523676) circle (18.509091154763244);
\draw[filled] (12.965415298579023,-117.60537279399287) circle (13.087903869133036);
\draw[filled] (-201.68726987380413,-145.06497175873298) circle (9.254545577381622);
\draw[filled] (40.08948590704475,-104.06721301563775) circle (6.543951934566518);
\draw[filled] (28.834871144622955,-102.15053386976206) circle (4.627272788690811);
\draw[filled] (-207.66983948390248,-162.98688578025295) circle (3.271975967283259);
\draw[filled] (-200.656657635136,-158.53763639632933) circle (2.3136363943454055);
\draw[filled] (-205.92042469509403,-157.15304608493287) circle (1.6359879836416296);
\draw[filled] (-204.12711239559096,-160.12734274719443) circle (1.1568181971727027);
\draw[filled] (-195.5626846948366,-156.33505209311207) circle (0.8179939918208148);
\draw[filled] (-197.2200155060374,-156.09546719987762) circle (0.5784090985863514);
\draw[filled] (-203.37929119432872,-158.08916243282104) circle (0.4089969959104074);
\draw[filled] (-196.669883272507,-157.2337777338028) circle (0.2892045492931757);
\draw[filled] (-204.22634782708516,-158.5123179867631) circle (0.2044984979552037);
\draw[filled] (-203.8746043949524,-158.57221421007176) circle (0.14460227464658784);
\draw[filled] (-196.48292797219142,-156.67371792070432) circle (0.10224924897760185);
\draw[filled] (-196.70208990507086,-156.8127569640374) circle (0.07230113732329392);
\draw[filled] (-196.53759718444715,-156.85602541107886) circle (0.051124624488800924);
\draw[filled] (-196.59363819380658,-156.76307864038316) circle (0.03615056866164696);
\draw[filled] (-196.86127655922118,-156.8815877233233) circle (0.025562312244400462);
\draw[filled] (-196.8094849720985,-156.88907475123682) circle (0.01807528433082348);
\draw[filled] (-196.61700760634605,-156.82677177520736) circle (0.012781156122200231);
\draw[filled] (-196.82667660366886,-156.8535025479495) circle (0.00903764216541174);
\draw[filled] (-196.5905370865724,-156.81354816414665) circle (0.0063905780611001155);
\draw[filled] (-196.60152906882655,-156.81167640716828) circle (0.00451882108270587);
\draw[filled] (-196.83251895680374,-156.87100441474064) circle (0.0031952890305500577);
\draw[filled] (-196.82567014302586,-156.86665944190148) circle (0.002259410541352935);
\draw[filled] (-196.8308105405462,-156.8653073090002) circle (0.0015976445152750289);
\draw[filled] (-196.82905925900374,-156.8682118955844) circle (0.0011297052706764675);
\draw[filled] (-196.82069556757338,-156.86450848674258) circle (0.0007988222576375144);
\draw[filled] (-196.82231403139477,-156.86427451712024) circle (0.0005648526353382338);
\draw[filled] (-196.82832896486187,-156.86622148512117) circle (0.0003994111288187572);
\draw[filled] (-196.82177681618785,-156.86538611974294) circle (0.0002824263176691169);
\draw[filled] (-196.8291561686048,-156.86663472296684) circle (0.0001997055644093786);
\draw[filled] (-196.82881266915933,-156.8666932153724) circle (0.00014121315883455844);
\draw[filled] (-196.82159424265237,-156.86483926218907) circle (0.0000998527822046893);
\draw[filled] (-196.82180837617742,-156.86497513017807) circle (0.00007060657941727922);
\draw[filled] (-196.82164773868823,-156.86501719005028) circle (0.00004992639110234465);
\draw[filled] (-196.82170246623645,-156.86492642171956) circle (0.00003530328970863961);
\draw[filled] (-196.8219635918929,-156.86504215324584) circle (0.000024963195551172326);
\draw[filled] (-196.8219137615005,-156.86504946479653) circle (0.000017651644854319805);
\draw[filled] (-196.82172528792836,-156.86498862204652) circle (0.000012481597775586163);
\draw[filled] (-196.82192980003904,-156.86501565211455) circle (0.000008825822427159902);
\draw[filled] (-196.8216994378114,-156.86497570836383) circle (0.0000062407988877930815);
\draw[filled] (-196.82171017216902,-156.86497388047616) circle (0.000004412911213579951);
\draw[filled] (-196.82193550546205,-156.86503025248265) circle (0.0000031203994438965408);%
        \end{tikzpicture}%
     \\
        \begin{tikzpicture}[scale=0.0115]%
            \draw (-311,-200) -- (204,-200) arc (-90:38.99099404250547:0) (204,-200) -- (0,52) arc (38.99099404250548:129.0174830741425:0) -- (0,52) -- (-311,-200) arc (129.01748307414255:270.00000000000006:0);
\draw[filled] (-300.82719132858534,-197.12856636489576) circle (0);
\draw[filled] (-300.82719133037136,-197.12856635985486) circle (0);
\draw[filled] (-300.82719125510914,-197.12856639563842) circle (0);
\draw[filled] (-300.8271920748867,-197.12856668985984) circle (0);
\draw[filled] (-300.82719066689936,-197.1285635861154) circle (0.0000017330674021492615);
\draw[filled] (-300.82718490072114,-197.12857875484133) circle (0.000006802597872294359);
\draw[filled] (-300.8272214490032,-197.1285639414198) circle (0.000021616019384226457);
\draw[filled] (-300.82710016942036,-197.1285135874537) circle (0.00005884577995594496);
\draw[filled] (-300.82718367422063,-197.12875441608645) circle (0.00014235058027014887);
\draw[filled] (-300.82746813797314,-197.12813114759672) circle (0.00031371871035215264);
\draw[filled] (-300.82634138733073,-197.1284586171431) circle (0.0006411882567274131);
\draw[filled] (-300.8287734343278,-197.12966962841895) circle (0.0012314017498072374);
\draw[filled] (-300.82776035698987,-197.1254892097574) circle (0.0022444790876325047);
\draw[filled] (-300.7942444069893,-197.23481842628817) circle (0.003912917475544675);
\draw[filled] (-300.82183154546595,-197.139668669157) circle (0.006564850855305603);
\draw[filled] (-300.84093556370993,-197.13558129599926) circle (0.010652224013071384);
\draw[filled] (-300.80711594788835,-197.11607028198836) circle (0.016784458374643187);
\draw[filled] (-300.81701464767013,-197.172001737759) circle (0.02576820751740473);
\draw[filled] (15.137456141203893,17.859263809256674) circle (0.03865382073317938);
\draw[filled] (15.040852366433018,17.877399141157124) circle (0.056789152633627);
\draw[filled] (-300.8926605687209,-196.98951393987304) circle (0.0818813758581213);
\draw[filled] (15.280215781923902,17.99490228673838) circle (0.11606747247337851);
\draw[filled] (-300.61349342039114,-197.06962666170267) circle (0.16199409768779732);
\draw[filled] (15.17337572731794,17.585797218197218) circle (0.22290752707934156);
\draw[filled] (-301.20892598601824,-197.3709298913074) circle (0.302754415687087);
\draw[filled] (14.753574375970038,18.395836107091665) circle (0.40629411407266286);
\draw[filled] (-300.97245709886045,-196.35277018659386) circle (0.5392233028448887);
\draw[filled] (16.118900665533946,18.093816497981994) circle (0.7083137231823325);
\draw[filled] (-91.01322753580568,-138.40417790786475) circle (0.9215637966052919);
\draw[filled] (-299.8155606854972,-198.81163505734017) circle (1.188364942659814);
\draw[filled] (-104.80758511831279,-146.17907898340036) circle (1.5196834182989756);
\draw[filled] (41.439811874730104,-86.84761516582275) circle (1.9282585175964044);
\draw[filled] (-304.14455376023494,-197.57118201498182) circle (2.4288179850181826);
\draw[filled] (-107.43773446427537,-140.52092562108436) circle (3.038311509824905);
\draw[filled] (-99.49409953815857,-141.6020474123607) circle (3.7761631832877733);
\draw[filled] (4.64537376142343,13.895792379069986) circle (4.664543814292238);
\draw[filled] (-236.80507158531597,-173.89878097026775) circle (5.728664012580072);
\draw[filled] (-235.5366466355408,-161.17245642943973) circle (6.997088962355278);
\draw[filled] (8.48290576939096,28.008561161764117) circle (8.502075822259771);
\draw[filled] (-122.10411275461452,-151.5518422046961) circle (10.279934700819338);
\draw[filled] (-221.01669959904942,-187.62858583062334) circle (12.37141416937668);
\draw[filled] (69.08147484688193,-99.74146893549982) circle (14.82211228727348);
\draw[filled] (-260.25901825179744,-181.64440457394286) circle (17.68291412662385);
\draw[filled] (-111.37359065898995,-114.56693372797555) circle (21.010456796443872);
\draw[filled] (24.848452925302585,-122.62464130639187) circle (24.867622978171354);
\draw[filled] (-76.60931955712655,-170.67593600326722) circle (29.3240639967328);
\draw[filled] (38.61398082683782,-50.462577072150694) circle (34.45675346329513);
\draw[filled] (-172.7346200423787,-139.8993500879107) circle (40.350572537683554);
\draw[filled] (105.27501286901293,-152.90107212984165) circle (47.098927870158356);
\draw[filled] (-54.77518152419314,-62.92127133430252) circle (54.804403293826574);%
        \end{tikzpicture}%
     &
        \begin{tikzpicture}[scale=0.0115]%
            \draw (-311,-200) -- (204,-200) arc (-90:38.99099404250547:0) (204,-200) -- (0,52) arc (38.99099404250548:129.0174830741425:0) -- (0,52) -- (-311,-200) arc (129.01748307414255:270.00000000000006:0);
\draw[filled] (-183.656051413036,-170.70926795100266) circle (2.9322769807392186);
\draw[filled] (-43.82042356212926,-152.39353736398135) circle (4.146865874795834);
\draw[filled] (-192.71823419564234,-175.2894488390531) circle (5.078852712504994);
\draw[filled] (33.27676526679852,-44.98408504696542) circle (5.864553961478437);
\draw[filled] (-116.55013084041532,-103.39231537881595) circle (6.556770657790735);
\draw[filled] (85.25461774777995,-64.72998013900603) circle (7.182582387319944);
\draw[filled] (-84.40101364898717,-83.40190246563873) circle (7.758075666195308);
\draw[filled] (-173.19828889030404,-99.01541251692936) circle (8.293731749591668);
\draw[filled] (131.59464564241324,-170.1983327670557) circle (8.796830942217658);
\draw[filled] (-45.56811461441058,-98.18921995702834) circle (9.272673989617617);
\draw[filled] (49.39695136142308,-48.84479361199515) circle (9.725262526508178);
\draw[filled] (-238.9398053485659,-168.3095611650496) circle (10.157705425009988);
\draw[filled] (-76.92668110251373,-148.23691889127838) circle (10.572475007917987);
\draw[filled] (42.356150351585704,-138.66113332090075) circle (10.971575825050088);
\draw[filled] (-42.77151365254189,-10.011204842846626) circle (11.356659912870361);
\draw[filled] (77.63229714781437,-163.9973345042055) circle (11.729107922956874);
\draw[filled] (-225.54497488379042,-187.90991228484498) circle (12.090087715155041);
\draw[filled] (-110.66630387381855,-83.07113940950926) circle (12.440597624387506);
\draw[filled] (96.67388593363385,-87.73462893072943) circle (12.781499033512818);
\draw[filled] (-96.33224163995908,-60.17053793984311) circle (13.11354131558147);
\draw[filled] (-136.5443500885646,-162.50084391013695) circle (13.43738122281404);
\draw[filled] (93.03874621832907,-186.2464018373733) circle (13.753598162626698);
\draw[filled] (-27.635538403430658,-32.98925126047295) circle (14.062706379313392);
\draw[filled] (27.474085961974783,-114.88389352538267) circle (14.365164774639888);
\draw[filled] (-183.13570200445125,-149.63476509329982) circle (14.661384903696094);
\draw[filled] (-31.267355224763087,-185.0482624559518) circle (14.951737544048212);
\draw[filled] (15.218592530953416,8.984730902876759) circle (15.236558137514981);
\draw[filled] (-15.485725290496761,-156.12905810368346) circle (15.516151332390615);
\draw[filled] (138.58860950164305,-144.29438712272494) circle (15.790794801447461);
\draw[filled] (-265.6678436725432,-183.93925752795988) circle (16.060742472040115);
\draw[filled] (-16.295801233412814,-83.54864586436554) circle (16.326227275306643);
\draw[filled] (16.569497892621797,-66.59051972919656) circle (16.587463499183336);
\draw[filled] (-106.26225268534824,-163.9344465046336) circle (16.84464881285783);
\draw[filled] (47.7293193500269,-182.90203398294742) circle (17.097966017052606);
\draw[filled] (-140.45455342995024,-84.13652277850389) circle (17.347584564199675);
\draw[filled] (-72.58380455206606,-29.458295755351042) circle (17.593661884435317);
\draw[filled] (71.9406332402704,-118.3550732976393) circle (17.836344546896516);
\draw[filled] (-161.366468471518,-181.92423071934758) circle (18.075769280652434);
\draw[filled] (35.35330541158377,-20.775537315345602) circle (18.312063875466123);
\draw[filled] (-35.968050792140545,-126.03963924654587) circle (18.545347979235235);
\draw[filled] (-89.82780213049766,-118.8513364174661) circle (18.775733806237707);
\draw[filled] (164.16675377841358,-180.99667323191883) circle (19.003326768081177);
\draw[filled] (-217.0253140686143,-148.60152925704304) circle (19.22822603742331);
\draw[filled] (19.43255944645483,-158.7848113377401) circle (19.450525053016356);
\draw[filled] (-19.6398859314784,10.768789788585947) circle (19.670311973372208);
\draw[filled] (-67.87605704739737,-180.11232991470436) circle (19.88767008529564);
\draw[filled] (57.81398813162299,-80.41344979158733) circle (20.10267817265491);
\draw[filled] (-155.9416619300299,-127.6369169267644) circle (20.315410850019976);
\draw[filled] (102.27139594952835,-143.15933988444604) circle (20.525938865174535);
\draw[filled] (-55.76731530060816,-67.79132599824078) circle (20.734329373979172);%
        \end{tikzpicture}%
     \\
    \end{tabular}
    \caption{Example packings of various sets of circles into a right triangle produced by Split Packing.}
    \label{fig:example-circles-in-right}
\end{figure}
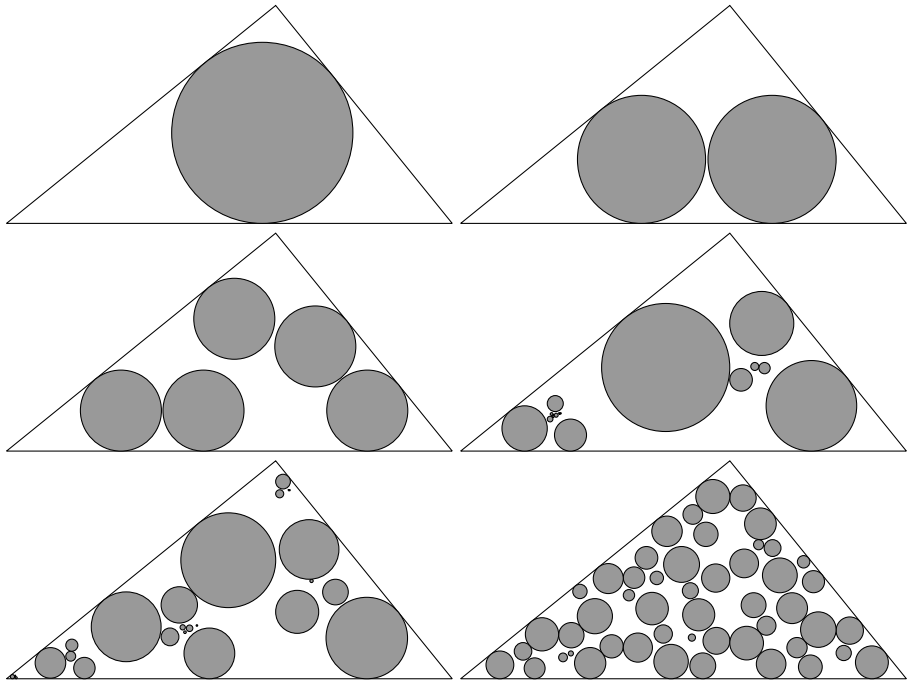

\section{The Problem with Acute Triangles}\label{sec:acute-triangles}

A class of triangles for which we have not succeeded in proving the critical density are acute triangles.
The problem is that the condition for \Cref{th:overlap} is not met, which means that the two hats may overlap.

        \begin{figure}%
            \begin{tikzpicture}[scale=2.5]%
                \input{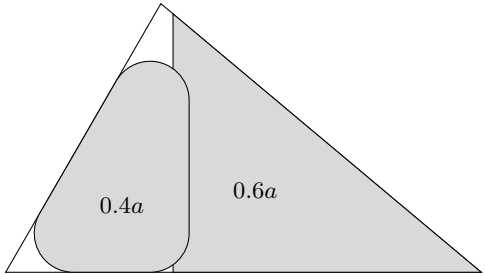}%
            \end{tikzpicture}%
            \caption{For acute triangles, the two hats may overlap.}%
            \label{fig:overlap}%
        \end{figure}%

The following term is useful for discussing worst cases:

\begin{definition}
    A shape's \emph{twincircles} are the largest two equal circles that can be packed into the shape.
\end{definition}

We work under the following assumption:

\begin{conjecture}
    A set of circles can be packed into a triangle if the circles' combined area does not exceed the triangles incircle or twincircle, whichever is smaller.
\end{conjecture}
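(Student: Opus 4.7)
The plan is to split the conjecture into two regimes based on the triangle's shape. For each acute triangle, I would first determine whether the incircle area $a_I$ or the twincircle sum $a_T$ is smaller; call this bound $a$. The proof would then proceed by Weighted Split Packing, but with the cut line and associated split key chosen differently in the two regimes. The worst-case instances (either the incircle or the twincircles) should be saturated exactly when \textsc{WeightedSplit} returns perfectly balanced subsets, which gives a principled way to pick the split key.

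For triangles where the incircle remains the tight bound (angles close enough to right that $a_I \le a_T$), I would attempt to directly adapt the argument of \Cref{th:rounded-hats-in-hat}. Almost every step carries over unchanged; the one that genuinely breaks for acute triangles is the projection-onto-base argument of \Cref{th:overlap}. To repair it, I would replace the orthogonal cut through the tip with a cut whose direction is tilted so that when the two sub-hats grow to their worst-case extent, a separating axis still exists transverse to the cut. The rounding of the sub-hats dictated by conjugacy (\Cref{def:conjugated}) must then absorb the residual geometric slack, exactly as it does in the non-acute case.

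For triangles where the twincircles give the tight bound ($a_T < a_I$), the strategy would be to redefine the associated split key from \Cref{def:hat-split-key} so that its two components equal the areas of the two twincircles, rather than the two halves produced by the tip-through-base cut. The corresponding sub-containers would be hats positioned so that, when \textsc{WeightedSplit} returns perfectly balanced subsets, they coincide exactly with the two twincircles, which are disjoint by definition. When the split is unbalanced, the larger subset's hat grows while its rounding parameter $b_i$ increases per \Cref{th:weighted-min1}, and one would need to reverify the non-overlap and containment inequalities analogous to those in the proof of \Cref{th:hats-in-hat}.

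The main obstacle is exactly the failure that motivates the conjecture: without \Cref{th:overlap}, one cannot use the base as a separating axis, so an entirely new geometric argument for non-overlap is required. I expect the hardest regime to be the intermediate one, where the triangle is acute enough that the orthogonal cut fails but not acute enough for the twincircles to clearly dominate; here one may have to introduce sub-container shapes other than hats, or prove a delicate angular trade-off that balances the sub-hats' rounding against the direction of the cut. The fact that the authors leave this case as a conjecture is a strong signal that none of these routes admits a short proof, and it is conceivable that a genuinely new invariant — not just a clever choice of cut — is needed.
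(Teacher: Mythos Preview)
The statement is a \emph{conjecture} in the paper and is not proved there; the authors explicitly leave it open and even exhibit (in \Cref{fig:four-fail}) a concrete instance on which their own attempted four-way-split heuristic fails. There is therefore no paper proof to compare your proposal against.

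What you have written is a research outline, not a proof, and the gap is exactly the one you name in your final paragraph: you never actually supply a replacement for \Cref{th:overlap}. Saying you would ``tilt the cut so that a separating axis still exists transverse to the cut'' and then ``reverify the non-overlap and containment inequalities'' is a description of what a successful argument would have to accomplish, not the accomplishment itself. In the non-acute case the separating-axis step works because the inequality $w^2 \ge a(p_1^2 + p_2^2)$ holds structurally whenever the apex angle is at least a right angle; for an acute apex it simply fails, and tilting the cut does not restore it without either shrinking the sub-hats below critical density or invoking some new invariant you have not identified. Likewise, in your twincircle regime you assert that the conjugacy rounding from \Cref{th:weighted-min1} ``would need to'' absorb the residual overlap, but you provide no inequality showing that it does. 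The paper's own \Cref{fig:four-fail} is cautionary here: even with four subgroups, and hence more rounding slack available than in your two-way scheme, naive hat placement already fails on explicit instances.

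In short, you have correctly located where the existing machinery breaks and sketched two plausible attack directions, but every load-bearing step is stated as an intention rather than carried out. That is a reasonable summary of the state of the problem, but it is not a proof of the conjecture.
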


If this conjecture is true, surely there are strategies which can pack into acute triangles with critical density.
For example, we attempted to split the set of circles into four subsets using a slightly modified \textsc{Split} algorithm, and then to pack those four hats into the container, like in \Cref{fig:four}.
Again, this is motivated by the observation that, when splitting each circle top-down into four equal circles, this strategy always works because the triangle is recursively divided into four similar triangles.

        \begin{figure}%
            \begin{tikzpicture}[scale=2.5]%
                \input{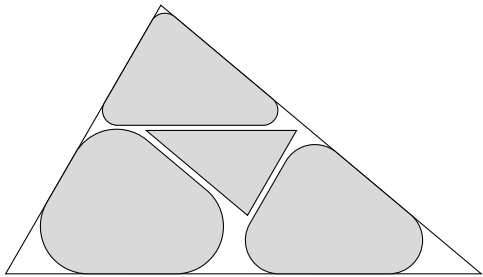}%
            \end{tikzpicture}%
            \caption{Packing four hats into an acute triangle.}%
            \label{fig:four}%
        \end{figure}%

Unfortunately, this strategy fails for some instances, as depicted in \Cref{fig:four-fail}.
For this instance, the largest group, consisting of a single circle, cannot be packed if \emph{any} of the smaller groups is packed into the top or the left corner, because the remaining free space is not wide enough for the circle.
Other strategies or a case distinction would need to be considered.

        \begin{figure}%
            \begin{tikzpicture}[scale=3]%
                \input{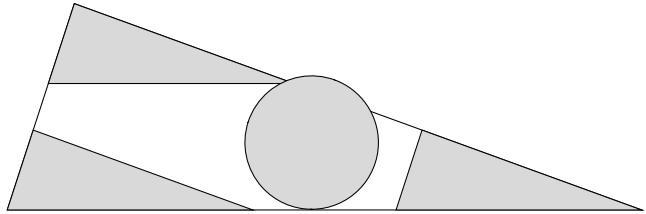}%
            \end{tikzpicture}%
            \caption{This strategy does not work for a container with a right-angle of $\frac{\pi}{10}$, incircle $1$ and the set of circles $\{0.55,0.15,0.15,0.15\}$.}%
            \label{fig:four-fail}%
        \end{figure}%

\section{Conclusion}\label{sec:conclusion}

In this paper, we presented a constructive proof of the critical densities when packing circles into squares, as well as right or obtuse triangles, using a weighted Split Packing technique.
We see more opportunities to apply this approach in the context of other packing and covering problems.

It is possible to use Split Packing to pack into \textbf{other container types}.
At this point, we can establish the critical densities for packing circles into equilateral triangles and rectangles exceeding a certain aspect ratio.
One could also consider the problem of packing into circles, ovals, regular polygons, or generalized quadrilaterals.
For some of these container types, even the worst-case instance does not seem obvious.
For circular and “almost square” rectangular containers, we assume the worst cases would again be their twincircles, see \Cref{fig:circle-rect-worst}, but it is unclear how to deal with the resulting shapes when cutting along the circles' tangent: Compared to triangular and square containers, these shapes cannot be split into self-similar pieces.
It is also possible that the depicted instances are not the actual worst cases.

        \begin{figure}%
            \begin{tikzpicture}[scale=1.2]%
                    \draw (0,0) circle (1.5);
    \draw[filled] (0.75,0) circle (0.75);
    \draw[filled] (-0.75,0) circle (0.75);
\end{tikzpicture}
\hspace{1.5cm}
\begin{tikzpicture}[scale=1.2]
    \draw (0,0) rectangle (4,3);
    \draw[filled] (1.05051,1.05051) circle (1.05051);
    \draw[filled] (4-1.05051,3-1.05051) circle (1.05051);%
            \end{tikzpicture}%
            \caption{Assumed worst-case instances for a circle and a near-square rectangle.}%
            \label{fig:circle-rect-worst}%
        \end{figure}
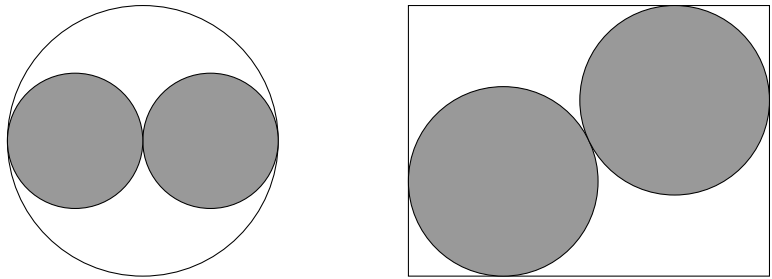%

Also, the problem of finding the critical density for packing into \textbf{acute triangles} is still open.
See \Cref{sec:acute-triangles} for a discussion on why the Split Packing approach does not directly work for acute ones.
A strategy for packing acute triangles with critical density, combined with the results of this thesis, would give an elegant, general result for all triangles.

Split Packing can also be extended to pack \textbf{objects other than circles}.
We can establish the critical densities for packing octagons into squares, and think we can describe the maximum shape which can be packed into squares using Split Packing.
Objects like ovals, rectangles, or even more general convex objects could be considered.
For these modified problems, again, it does not seem obvious what the worst-case packings would look like.

Another natural extension is the \textbf{online version} of the problem.
The current best algorithm that packs squares into a square in an online fashion by Brubach~\cite{brubach2014improved}, based on the work by Fekete and Hoffmann~\cite{FH2013online,FH2017online}, gives a density guarantee of $\frac{2}{5}$.
It is possible to directly use this algorithm to pack circles into a square in an online situation with a density of $\frac{\pi}{10} \approx 0.3142$.
It would be particularly interesting to see whether some form of online Split Packing would give better results.

Our original motivation stemmed from origami design.
When only packing circles, the resulting origami structures resemble arbitrary stars.
When one wants to design general tree-shaped structures, it is necessary to introduce separating pathways between the circles, a technique called \textbf{circle/river packing}, pioneered by Lang~\cite{lang1996computational}.
A packing scheme like Split Packing seems promising because it often introduces gaps inbetween two subgroups anyway.
At this point, we can establish a constant-factor approximation for perfectly symmetric binary trees (see \Cref{fig:origami}), but we do not know how to approximate the paper size needed for crease patterns of general trees.

        \begin{figure}%
            \begin{tikzpicture}[scale=6.5]%
                \draw (0,0) -- (0.1,0.1) -- ++(0,0.141) -- ++(-0.04,0.04);
\draw (0.1,0.241) -- ++(0.04,0.04);
\draw (0.1,0.1) -- ++(0.141,0) -- ++(0.04,0.04);
\draw (0.241,0.1) -- ++(0.04,-0.04);

\draw (-0,-0) -- (-0.1,-0.1) -- ++(-0,-0.141) -- ++(0.04,-0.04);
\draw (-0.1,-0.241) -- ++(-0.04,-0.04);
\draw (-0.1,-0.1) -- ++(-0.141,0) -- ++(-0.04,-0.04);
\draw (-0.241,-0.1) -- ++(-0.04,0.04);

\end{tikzpicture}
\hspace{1.5cm}
\begin{tikzpicture}[scale=3.8]
    \draw (0,0) rectangle (1,1);

    \draw (0.2,1) -- (1,0.2);
    \draw (1,0.4) -- (0.8,0.6) -- (1,0.8);
    \draw (0.4,1) -- (0.6,0.8) -- (0.8,1);
    \draw (0.6,0.6) -- (1,1);
    \draw[filled] (0.6-0.0585,1-0.0585) circle (0.0585);
    \draw[filled] (0.6+0.0585,1-0.0585) circle (0.0585);
    \draw[filled] (1-0.0585,0.6-0.0585) circle (0.0585);
    \draw[filled] (1-0.0585,0.6+0.0585) circle (0.0585);

    \draw (0.8,0) -- (0,0.8);
    \draw (0,0.6) -- (0.2,0.4) -- (0,0.2);
    \draw (0.6,0) -- (0.4,0.2) -- (0.2,0);
    \draw (0.4,0.4) -- (0,0);
    \draw[filled] (0.4-0.0585,0.0585) circle (0.0585);
    \draw[filled] (0.4+0.0585,0.0585) circle (0.0585);
    \draw[filled] (0.0585,0.4-0.0585) circle (0.0585);
    \draw[filled] (0.0585,0.4+0.0585) circle (0.0585);%
            \end{tikzpicture}%
            \caption{A folding of the tree on the left can be realized by a crease pattern based on the circle/river packing on the right.}%
            \label{fig:origami}%
        \end{figure}
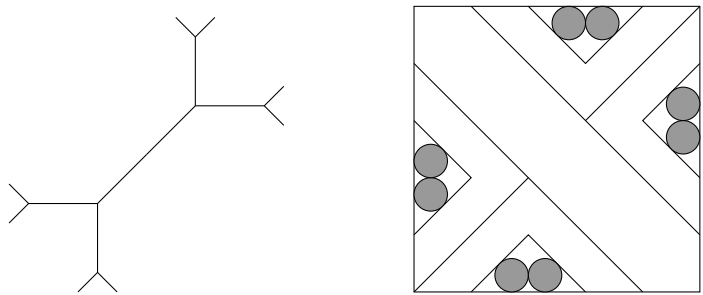%

It seems like a natural extension to apply Split Packing to \textbf{three-di\-men\-si\-o\-nal packing} problems.
For example, one could try to pack spheres into a cube using a Split Packing approach.
Unfortunately, this does not directly seem to work out: Assuming the worst case are two equally sized spheres packed into opposite corners of the cube, one would like to be able to cut the cube along the spheres' tangential plane.
This results in two shapes as depicted on the right in \Cref{fig:cube-half}, but it is not possible to fit two quarter-spheres into each of these polyhedra.
Still, any extensions regarding three dimensional problems would be notable.

\begin{figure}
    \includegraphics[width=0.4\textwidth]{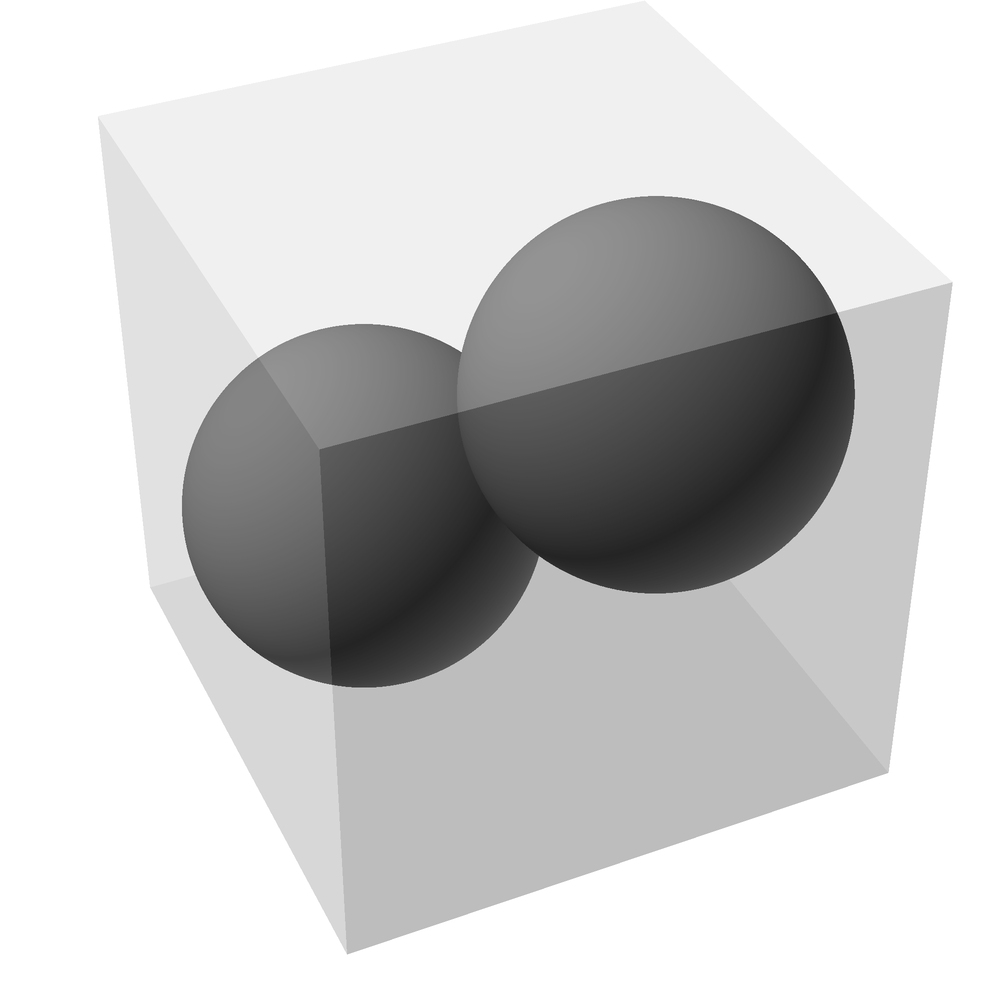}
    \hspace{1cm}
    \includegraphics[width=0.4\textwidth]{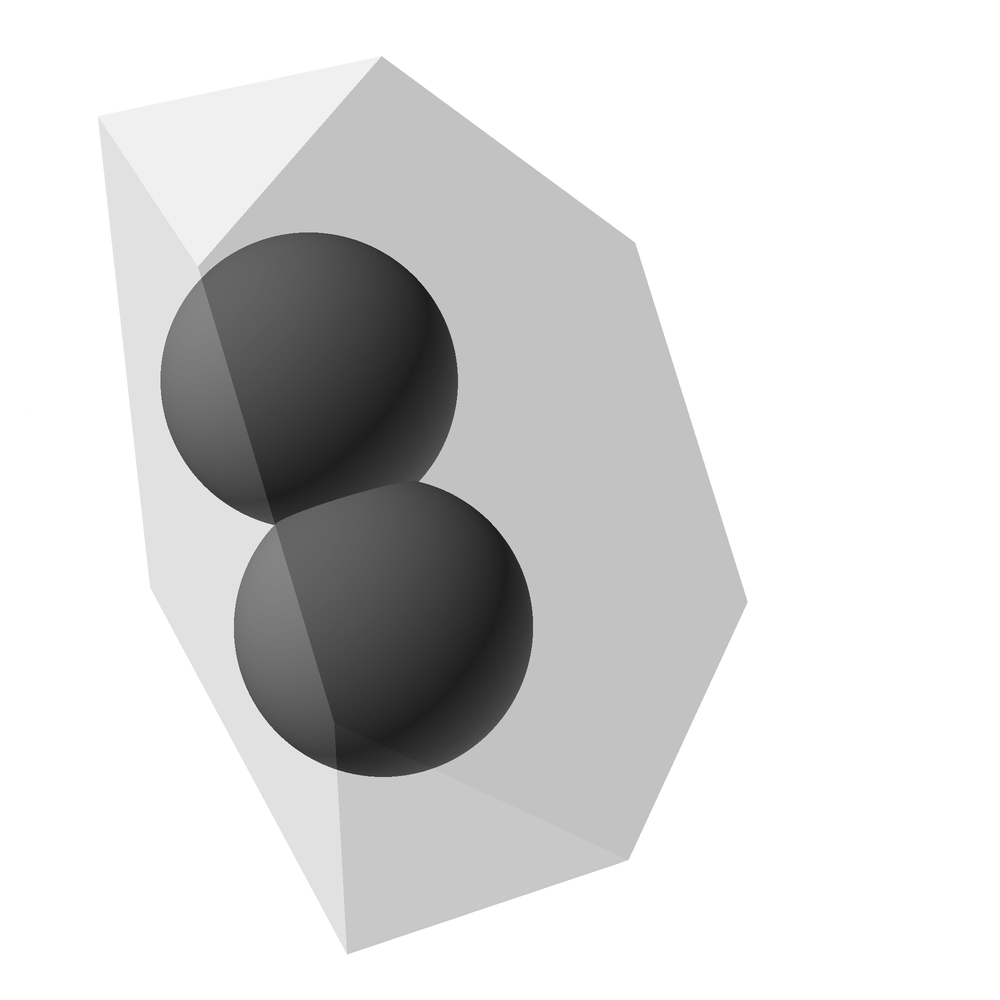}
    \caption{Left: Assumed worst case for packing spheres into a cube. Right: Two quarter-spheres do not fit in a half.}\label{fig:cube-half}
\end{figure}

Instead of packing circles into containers, one could ask a question which is in some sense the opposite problem: What is the smallest area so that we can always \textbf{cover the container} with circles of that combined area? For example, if we want to cover an isosceles right triangle, and restrict ourselves to at most two circles in our input set, the area of a circle whose diameter equals the triangle's hypotenuse is sufficient, see \Cref{fig:cover}.
To generalize this method, it would now be sufficient to show that all sets of circles with a combined area equal to the area of the left circle can cover the quadrilateral on the left, but it does not seem trivial to find an argument for that.

        \begin{figure}%
            \begin{tikzpicture}[scale=1.1]%
                    \draw[draw=none] (0,-1) -- (0,1.2); 
    \draw (-1,0) -- (1,0) -- (0,1) -- cycle;
    \draw[cover] (0,0) circle (1);
\end{tikzpicture}
~
\begin{tikzpicture}[scale=1.1]
    \draw[draw=none] (0,-1) -- (0,1.2); 
    \draw (-1,0) -- (1,0) -- (0,1) -- cycle;
    \draw (0.6,0.4) -- (0.6,0);
    \draw[cover] (-0.15,0.25) circle (0.9591);
    \draw[cover] (0.8,0.2) circle (0.2828);
\end{tikzpicture}
~
\begin{tikzpicture}[scale=1.1]
    \draw[draw=none] (0,-1) -- (0,1.2); 
    \draw (-1,0) -- (1,0) -- (0,1) -- cycle;
    \draw (0.2,0.8) -- (0.2,0);
    \draw[cover] (-0.4,0.4) circle (0.824);
    \draw[cover] (0.6,0.4) circle (0.565);
\end{tikzpicture}
~
\begin{tikzpicture}[scale=1.1]
    \draw[draw=none] (0,-1) -- (0,1.2); 
    \draw (-1,0) -- (1,0) -- (0,1) -- cycle;
    \draw (0,1) -- (0,0);
    \draw[cover] (-0.5,0.5) circle (0.707);
    \draw[cover] (0.5,0.5) circle (0.707);%
            \end{tikzpicture}%
            \caption{An isosceles right triangle can always be covered by two circles with a combined area of its excircle.}%
            \label{fig:cover}%
        \end{figure}
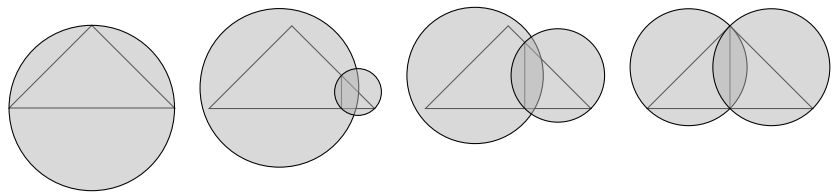%

\section*{Acknowledgements}

We thank Erik Demaine, Dominik Krupke, Christian Neukirchen, and Jan-Marc Reinhardt for useful discussions and support.
We also thank several anonymous reviewers for their helpful comments.

\bibliographystyle{acm}
\bibliography{references}

\end{document}